\definecolor{ForestGreen}{rgb}{0.13, 0.55, 0.13}
\definecolor{MidnightBlue}{rgb}{0.1, 0.1, 0.44}
\definecolor{BurntOrange}{rgb}{0.8, 0.33, 0.0}
\definecolor{Plum}{rgb}{0.56, 0.27, 0.52}
\titlespacing*{\section}{0pt}{0pt}{0pt}
\titlespacing*{\subsection}{0pt}{0pt}{0pt}
\newcommand{\ptitle}[1]{\textbf{#1:}\xspace}
\definecolor{mpcolor}{rgb}{1, 0.1, 0.59}
\definecolor{ascolor}{rgb}{1, 0.5, 0.}
\newcommand{\defvec}[1]{\expandafter\newcommand\csname v#1\endcsname{{\mathbf{#1}}}}
\newcounter{ct}
    \edef\letter{\alph{ct}}
    \edef\letter{\Alph{ct}}
\newcommand{\dm}[1]{\ensuremath{\mathrm{d}{#1}}} \newcommand{\RN}[2]{\frac{\dm{#1}}{\dm{#2}}}  
\newcommand{\win}{\vW_{\text{in}}}
\newcommand{\wout}{\vW_{\text{out}}}
\newcommand{\bout}{\vb_{\text{out}}}
\newcommand{\reals}{\mathbb{R}}
\newcommand{\vol}{\operatorname{vol}}
\newcommand{\manifold}{\mathcal{M}}
\newcommand{\uniformNorm}[1]{\left\|#1\right\|_\infty} \DeclarePairedDelimiter{\abs}{\lvert}{\rvert}
\DeclareMathOperator{\relu}{ReLU}
\DeclareMathOperator{\basin}{Basin}
\newcommand{\iidsample}{\stackrel{iid}{\sim}}
\newcommand{\identity}{\ensuremath{\mathbb{I}}}
\newcommand{\probP}{\text{I\kern-0.15em P}}
\newcommand{\cmark}{\ding{51}}\newcommand{\xmark}{\ding{55}}
\newtheorem{theorem}{Theorem}
\newtheorem{prop}{Proposition}
\theoremstyle{definition}
\newtheorem{definition}{Definition}
\theoremstyle{remark}
\newtheorem{remark}{Remark}
\title{Back to the Continuous Attractor}
\author{
    \'Abel S\'agodi, Guillermo Mart\'in-S\'anchez, Piotr Sok\'o\l, Il Memming Park
    \vspace{1ex}\\
Champalimaud Centre for the Unknown\\
    Champalimaud Foundation, Lisbon, Portugal\\
    \texttt{\{abel.sagodi,guillermo.martin,memming.park\}@research.fchampalimaud.org}\\
    \texttt{piotr.sokol@protonmail.com}
}
\begin{document}

\maketitle
\begin{abstract}
Continuous attractors offer a unique class of solutions for storing continuous-valued variables in recurrent system states for indefinitely long time intervals.
Unfortunately, continuous attractors suffer from severe structural instability in general---they are destroyed by most infinitesimal changes of the dynamical law that defines them.
This fragility limits their utility especially in biological systems as their recurrent dynamics are subject to constant perturbations.
We observe that the bifurcations from continuous attractors in theoretical neuroscience models display various structurally stable forms.
Although their asymptotic behaviors to maintain memory are categorically distinct, their finite-time behaviors are similar.
We build on the persistent manifold theory to explain the commonalities between bifurcations from and approximations of continuous attractors.
Fast-slow decomposition analysis uncovers the existence of a persistent slow manifold that survives the seemingly destructive bifurcation, relating the flow within the manifold to the size of the perturbation. Moreover, this allows the bounding of the memory error of these approximations of continuous attractors.
Finally, we train recurrent neural networks on analog memory tasks to support the appearance of these systems as solutions and their generalization capabilities.
Therefore, we conclude that continuous attractors are \emph{functionally robust} and remain useful as a universal analogy for understanding analog memory.
\end{abstract}

\section{Introduction}
Biological systems exhibit robust behaviors that require neural information processing of analog variables such as intensity, direction, and distance.
Virtually all neural models of working memory for continuous-valued information rely on persistent internal representations through recurrent dynamics.
The continuous attractor structure in their recurrent dynamics has been a pivotal theoretical tool due to their ability to maintain activity patterns indefinitely through neural population states~\cite{vyas2020,dayan2001,Khona2022,durstewitz2023reconstructing}.
They are hypothesized to be the neural mechanism for the maintenance of eye positions, heading direction, self-location, target location, sensory evidence, working memory, and decision variables, to name a few~\cite{seung1996,Seung2000,Romo1999}.
Observations of persistent neural activity across many brain areas, organisms, and tasks have corroborated the existence of continuous attractors~\cite{Romo1999,noorman2024accurate,petrucco2023neural,yang2022brainstem,constantinidis2018persistent,nair2023approximate,mountoufaris2024line,vinograd2024causal,liu2024encoding}.

Despite their widespread adoption as models of analog memory, continuous attractors are brittle mathematical objects, casting significant doubts on their ontological value and hence suitability in accurately representing biological functions.
Even the smallest arbitrary change in recurrent dynamics can be problematic, destroying the continuum of fixed points essential for continuous-valued working memory.
In neuroscience, this vulnerability is well-known and often referred to as the ``fine-tuning problem''~\cite{seung1996,Renart2003,chaudhuri2016,machens2008,Park2023a,itskov2011short}.
There are two primary sources of perturbations in the recurrent network dynamics:
(1) the stochastic nature of online learning signals that act via synaptic plasticity, and
(2) spontaneous fluctuations in synaptic weights~\cite{Fusi2007-yg,shimizu2021}.
Thus, additional mechanisms are necessary to compensate for the degradation in particular implementations, by bringing the short-term behavior closer to that of a continuous attractor~\cite{Lim2012,Lim2013,Boerlin2013,Koulakov2002,Renart2003,gu2022,hansel2013short,schmidt2020inferring}.
However, we lack the theoretical basis to understand how much this matters in practice, i.e. what are the effects of different levels of degradation on memory.
This is fundamental to justify relying on the brittle concept of continuous attractors for understanding biological analog working memory.

In this study, we explore perturbations and approximations of continuous attractors in the space of dynamical models.
We first report on the differences and similarities between the various structurally stable dynamics in the vicinity of continuous attractors in the space of dynamical systems models.
Our analysis reveals the presence of a ``ghost'' continuous attractor (a.k.a. slow manifold) in all of them (Sec.~\ref{sec:critique}).
By assuming normal hyperbolicity we separate the time scales to obtain a decomposition of the dynamics by separating out the fast flow normal to and the slow flow within the slow manifold.
We derive theoretical results that ensure the existence of a slow manifold and determine its closeness to a continuous attractor (Sec.~\ref{sec:theory}).
We explore task-trained recurrent neural networks (RNNs) to show that these systems appear naturally as solutions to the task (Sec.~\ref{sec:experiments}) and that their generalization capabilities can easily be studied as the distance to the continuous attractor (Sec.~\ref{sec:generalization}).
The proposed decomposition applied to theoretical models and task-trained RNNs reveals a ``universal motif'' of analog memory mechanism with various potential topologies.
This leads to the connection of different systems with different topologies as \textbf{approximate continuous attractors} (Sec.~\ref{sec:converse}).
Our theory guarantees that systems close to a continuous attractor (in the space of vector fields) will have similar behavior to it, implying that the concept of continuous attractors remains a crucial framework for understanding the neural computation underlying analog memory (Sec.~\ref{sec:implications}).

\section{A critique of pure continuous attractors}\label{sec:critique}
We will first lay out a number of observations about the dynamics of bifurcations and approximations of continuous attractors used in theoretical neuroscience.
Ordinary differential equations (ODEs) are commonly used to describe the dynamical laws governing the temporal evolution of firing rates or latent population states\cite{vyas2020}.
In this framework, neural systems are viewed as implementing the continuous time evolution of neural states to perform computations.
We will consider a continuous attractor as a mechanism that implements analog memory computation: carrying a particular memory representation over time.
To define it formally, let \(\vx(t) \in \reals^{d}\) denote the neural state, and \(\dot{\vx} = \vf(\vx)\) represent its dynamics.
Let \(\manifold \subset \reals^{d}\) be a manifold.
We say \(\manifold\) is a continuous attractor, if (1) every state on the manifold is a fixed point, \(\forall \vx \in \manifold, \vf(\vx) = 0\), and (2) the fixed points are marginally stable tangent to the manifold and stable normal to the manifold.
In other words, the continuous attractor is a continuum of equilibrium points such that the neural state near the manifold is attracted to it, and on the manifold, the state does not move.
Marginal stability implies that continuous systems are \emph{structurally unstable}, meaning that small perturbations or variations in the system's parameters lead to significant changes in the system's behavior or stability \citep{peixoto1959structural, palis2000structural, robbin1971structural, robinson1974structural}.
We will now study some examples of continuous attractors and how perturbations change their dynamics.

\subsection{Motivating example: bounded line attractor}\label{sec:motivating:line}
\begin{figure}[tbhp]
  \centering
  \includegraphics[width=.9\textwidth]{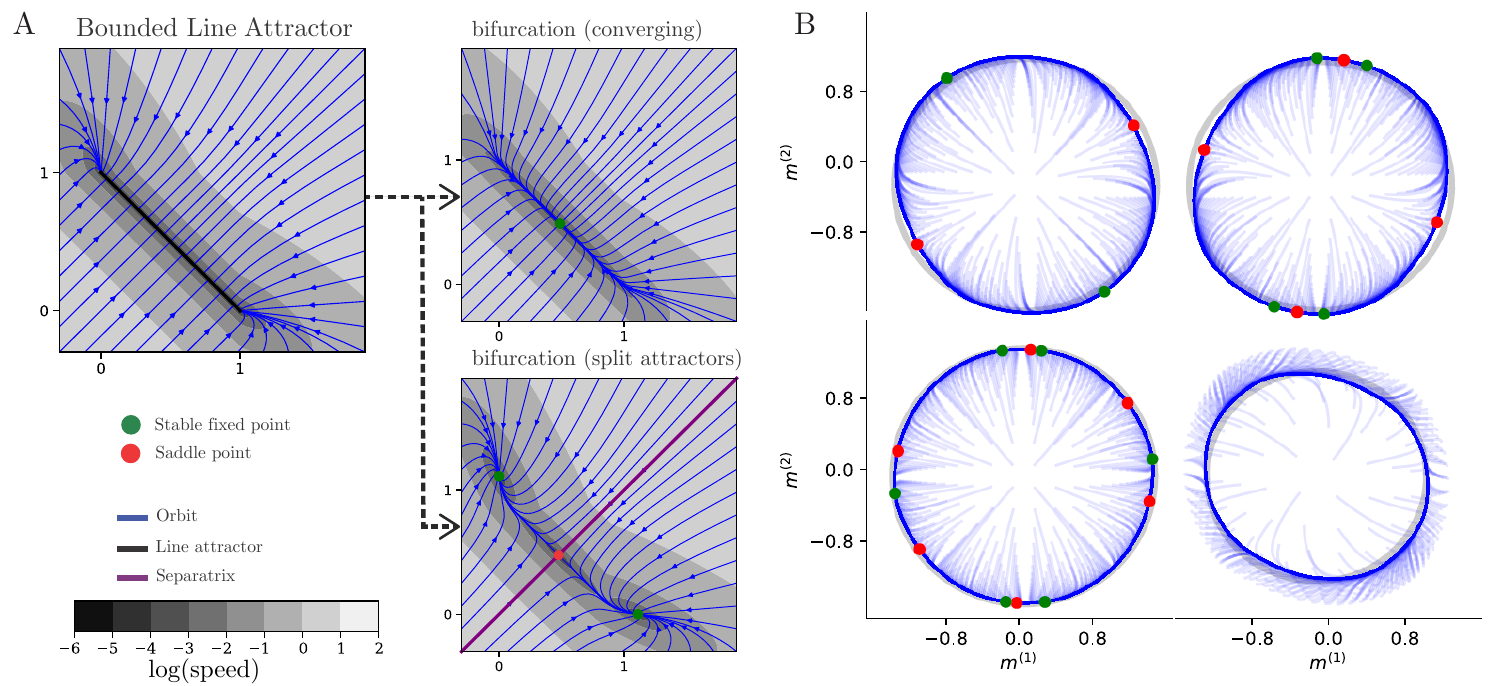}
  \caption{The critical weakness of continuous attractors is their inherent brittleness as they are rare in the parameter space, i.e., infinitesimal changes in parameters destroy the continuous attractor implemented in RNNs~\cite{seung1996,Renart2003}.
  Some of the structure seems to remain; there is an invariant manifold that is topologically equivalent to the original continuous attractor.
    \textbf{(A)} Phase portraits for the bounded line attractor (Eq.~\eqref{eq:TLN}).
    Under perturbation of parameters, it bifurcates to systems without the continuous attractor.
    \textbf{(B)} The low-rank ring attractor approximation (Sec.~\eqref{sec:supp:lowrank}).
    Different topologies exist for different realizations of a low-rank attractor: different numbers of fixed points (4, 8, 12), or a limit cycle (right bottom).
    Yet, they all share the existence of a ring invariant set.
}\label{fig:lara_bifurcations}
\end{figure}

\setlength{\floatsep}{3pt plus 1pt minus 1pt}
\setlength{\textfloatsep}{3pt plus 1pt minus 1pt}

As an illustrative example, we can construct a line attractor (a continuous attractor with a line manifold) as follows:
\begin{align}\label{eq:TLN}
    \dot{\vx} = -\vx + \left[ \vW \vx + \vb \right]_{+}
\end{align}where \(\vW = [0, -1; -1, 0]\) and \(\vb = [1; 1]\), and \([\cdot]_{+} = \max(0,\cdot)\) is the threshold nonlinearity per unit.
We get \(\dot{\vx} = 0\) on the \(x_{1} = -x_{2} + 1\) line segment in the first quadrant as the manifold (Fig.~\ref{fig:lara_bifurcations}A, left; black line).
Linearization of the fixed points on the manifold exhibits two eigenvalues, \(0\) and \(-2\);
the \(0\) eigenvalue allows the continuum of fixed points, while \(-2\) makes the flow normal to the manifold attractive (Fig.~\ref{fig:lara_bifurcations}A, left; flow field).

In general, continuous attractors are not only structurally unstable \citep{mane1987proof}, they bifurcate almost certainly for an \emph{arbitrary} perturbation of $\vf$.
In this example, small changes to the parameters \((\vb,\vW)\) perturb the eigenvalues and any to the \(0\) eigenvalue destroys the continuous attractor: it bifurcates to either
a single stable fixed point (Fig.~\ref{fig:lara_bifurcations}A, top) or two stable fixed points separated with a saddle-node in between (Fig.~\ref{fig:lara_bifurcations}A, bottom).
However, interestingly, after bifurcation, continuous attractors seemingly tend to leave a ``ghost'' manifold topologically equivalent to the original continuous attractor (note the slow speed).
Furthermore, the flow after the bifurcation is contained in the ghost manifold, i.e., it is an invariant manifold.
This phenomenon, wherein a continuous attractor is approximated by a manifold within the neural space along which the drift occurs at a very slow pace, has previously been discussed~\citep{seung1997learning,mante2013context,schmidt2019identifying}.

\subsection{Theoretical models of ring attractors}\label{sec:ras}

For circular variables such as the goal-direction (e.g. for navigation \citep{sun2020decentralised, westeinde2024transforming} and working memory for communication in bees \citep{frisch1993dance}) or head direction, the temporal integration, and working memory functions are naturally solved by a ring attractor (continuous attractor with a ring topology) \citep{kim2017ring,kim2019generation,turner2017angular,turner2020neuroanatomical,hulse2020mechanisms,taube2003persistent,taube2007head,angelaki2020head,fisher2022flexible,neupane2024mental}.
Other examples include
integration of evidence for continuous perceptual judgments, e.g. a hunting predator that needs to compute the net direction of motion of a large group of prey \citep{esnaola2022flexible}.
In this section, we investigate the bifurcations of various implementations of continuous attractors.
Continuous attractor network models of the head direction are based on the interactions of neurons that (anatomically) form a ring-like overlapping neighbor connectivity~\citep{zhang1996,noorman2024accurate,ajabi2023,vafidis2022,boucheny2005continuous,knierim2012,song2005angular,xie2002double}.
Similarly to the line attractor, the ring attractor bifurcates with almost any perturbation to the network dynamics. However, the resulting dynamics continue to follow a familiar pattern: they remain confined to a ghost manifold that closely approximates the original continuous attractor.

\ptitle{Piecewise-linear ring attractor model of the central complex}
Firstly, we discuss perturbations of a continuous ring attractor recently proposed as a model for the head direction representation in fruit flies~\citep{noorman2024accurate}.
This model is composed of \(N\) heading-tuned neurons with preferred headings \(\theta_{j} \in \{\nicefrac{2\pi i}{N}\}_{i = 1\dots N}\) radians (Sec.~\ref{sec:supp:headdirection}).
For sufficiently strong local excitation (given by the parameter \(J_{E}\)) and broad inhibition (\(J_{I}\)), this network will generate a stable bump of activity corresponding to the head direction.
This continuum of fixed points forms a \(N\)-sided polygon. 
\begin{figure}[tbhp]
     \centering
  \includegraphics[width=.8\textwidth]{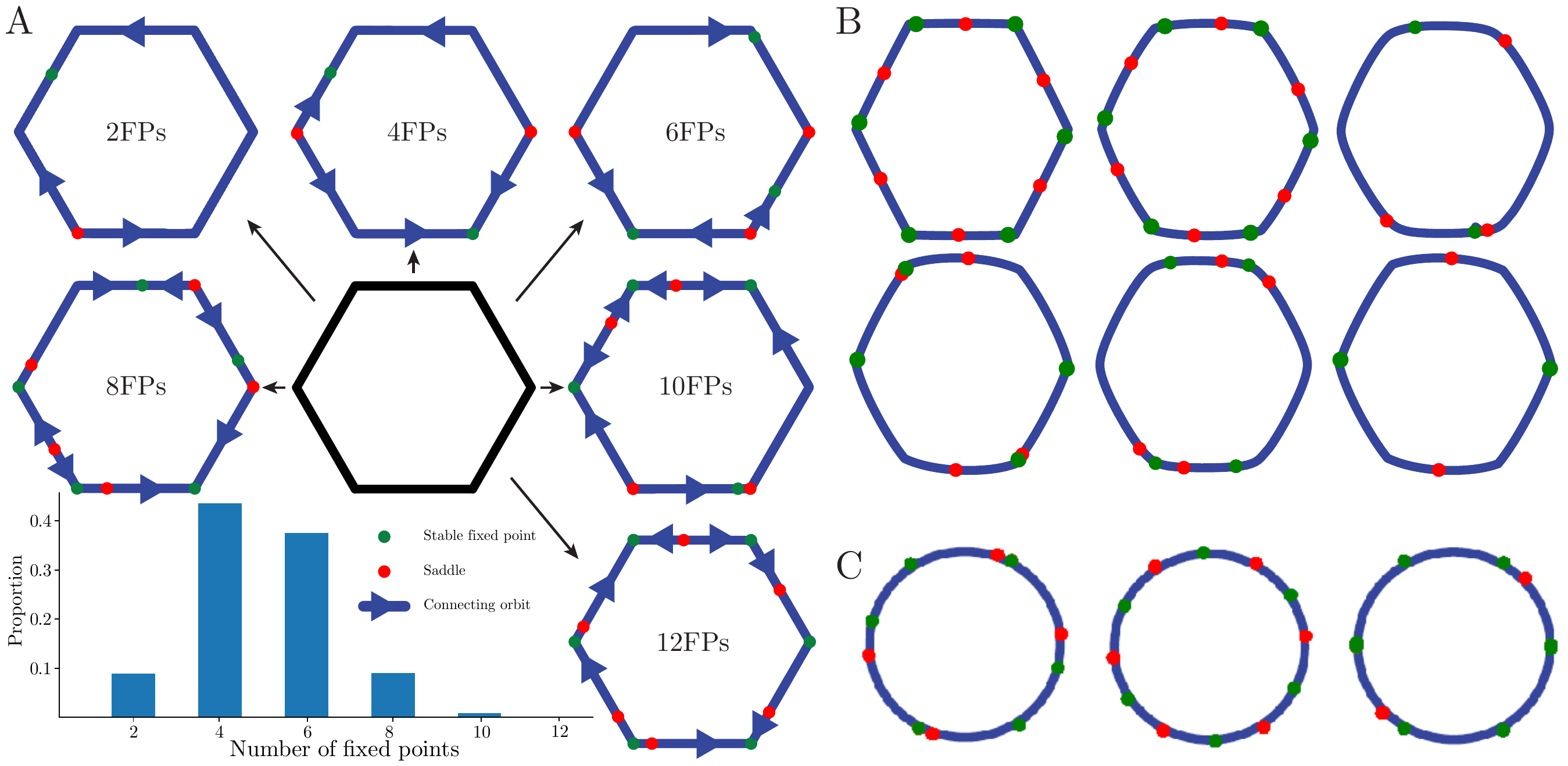}
       \caption{ Perturbations of different implementations and approximations of ring attractors lead to bifurcations that all leave the ring invariant manifold intact. For each model, the network dynamics is constrained to a ring manifold with stable fixed points (green) and saddle nodes (red).
       (A) 	Perturbations to \citet{noorman2024accurate}. The ring attractor can be perturbed in systems with an even number of fixed points (FPs) up to \(2N\) (stable and saddle points are paired).
       (B) 	Perturbations to a tanh approximation of a ring attractor \citet{seeholzer2017efficient}.
       (C) 	Different Embedding Manifolds with Population-level Jacobians (EMPJ) approximations of a ring attractor \citep{pollock2020}.
       }\label{fig:bio_rings}
\end{figure}

We evaluate the effect of parametric perturbations of the form \( \vW \leftarrow \vW + \vV\) with \(\vV_{i, j}\iidsample\mathcal{N}(0,\nicefrac{1}{100})\) on a network of size \(N = 6\) (forming an hexagon, see also Sec.~\ref{sec:supp:ring_perturbations}). We found that the continuum of fixed points can collapse to between 2 and 12 isolated fixed points (Fig.~\ref{fig:bio_rings}A).
As far as we know, this bifurcation from a ring of equilibria to a saddle and node has not been described previously in the literature.
The probability of each type of bifurcation was numerically estimated (Sec.~\ref{sec:supp:headdirection}).
Surprisingly, the number of fixed points is maintained throughout a range of perturbation sizes and hence depends only on the direction of the perturbation (Fig.~\ref{fig:noorman_ring_allfxdpnts_allnorm}).

\ptitle{Bump attractor model}
A well-established approach to form a ring attractor in the limit of large network is with a connection matrix \(W\) with entries that follow a circular Gaussian function of \(i - j\)\citep{seeholzer2017efficient,redish1996coupled,goodridge2000,compte2000synaptic} (Sec.~\ref{sec:supp:goodridge}).
This type of ring attractor network can support a stable ``activity bump'' that can move around the ring of nonlinear neurons in correspondence with changes in head direction\citep{kakaria2017}.
For finite-sized networks, the dynamics are constrained to an attractive invariant ring, covered with \(N\) stable fixed points for a network of size \(N\) (Fig.~\ref{fig:bio_rings}B).
For such networks the number of fixed points can change with the size of the perturbation (Fig.~\ref{fig:noorman_ring_allfxdpnts_allnorm}).

\ptitle{Low-rank ring attractor model} Low-rank networks can be used to approximate ring attractors~\citep{mastrogiuseppe2018, beiran2021}.
In the limit of infinite-size networks, one can construct a ring attractor through a rank 2 network by constraining the overlap of the right- and left-connectivity vectors (see Sec.~\ref{sec:supp:lowrank}).
However, in simulations of finite-size networks with this constraint, the dynamics instead always converge to a small number of stable fixed points arranged on a ring (Fig.~\ref{fig:lara_bifurcations}B).

\ptitle{Embedding Manifolds with Population-level Jacobians} Approximate ring attractors can be constructed by constraining the connectivity so that the networks Jacobian satisfies certain requirements for a ring attractor to exist~\citep{pollock2020} (see Sec.~\ref{sec:supp:empj}).
The models constructed with this method also can contain an invariant ring manifold on which the dynamics contain stable and saddle fixed points (Fig.~\ref{fig:bio_rings}C).
It has been observed that approximate continuous attractor emerge in networks trained on sampled points with other methods as well \citep{darshan2022}.

\ptitle{Similarity between all bifurcations and approximations of continuous attractors}
In all discussed models of ring attractors, we verify that they suffer from the fine-tuning problem.
However, importantly, we also observe in all the systems  the existence of ghosts of the continuous attractor (either through bifurcation or from finite-size effects) in the form of an \emph{attractive invariant manifold}.
Therefore, while they are not strictly a continuous attractor in the mathematical sense, they are \textit{approximate} ring attractors in the sense that the fixed points and connecting orbits still form a circle.

Is this lawful degradation a universal phenomenon?
 And if so, how does it relate to the size of the perturbation? And what are the implications for the memory performance of these approximations? (Sec.~\ref{sec:theory}).
Do these approximations appear as natural solutions to the memory storage problem? (Sec.~\ref{sec:experiments}).
 And if so, how well do they generalize to longer time requirements? (Sec.~\ref{sec:generalization}).
Finally, are continuous attractors in practice still useful as an idealized model of how animals represent continuous variables? (Sec.~\ref{sec:converse}).

\section{Theory of Approximate Continuous Attractors}\label{sec:theory}
In this section, we theoretically answer in an implementation-agnostic manner the degradation questions posed from the exploration.
To do so, we apply invariant manifold theory to continuous attractor models and translate the results for the neuroscience audience (see also Sec.~\ref{sec:supp:simplifications}).
\subsection{Persistent Manifold Theorem}\label{sec:imt}
First, we argue that the lawful degradation into a system with a slow manifold is universally guaranteed (as long as the perturbation is small, and the continuous attractor was normally hyperbolic).
Let \(l\) be the intrinsic dimension of the manifold of equilibria that defines the continuous attractor.
Given a perturbation \(\vp(\vx)\) to the ODE that induces a bifurcation,
\begin{align}\label{eq:perturbation:nonparam}
	\dot{\vx} &= \vf(\vx) + \epsilon\,\vp(\vx)
\end{align}where \(\uniformNorm{\vp(\cdot)} = 1\) and \(\epsilon > 0\) is the bifurcation parameter,
we can reparameterize the dynamics around the manifold with coordinates \(\vy \in \reals^{l}\) and the remaining ambient space with \(\vz \in \reals^{d - l}\).
To describe an arbitrary bifurcation of interest, we introduce a sufficiently smooth function \(\vg\) and \(\vh\), such that the following system is equivalent to the original ODE:
\begin{align}
    \dot{\vy} &=           \epsilon  \vg(\vy, \vz, \epsilon) \qquad \text{(tangent)}\label{eq:fenichel:flowtangent}\\
    \dot{\vz} &= \hphantom{\epsilon} \vh(\vy, \vz, \epsilon) \qquad \text{(normal)}\label{eq:fenichel:flownormal}
\end{align}where \(\epsilon = 0\) gives the condition for the continuous attractor \(\dot{\vy} = \mathbf{0}\).
We denote the corresponding manifold of \(l\) dimensions as \(\manifold_{0} \coloneqq \{(\vy,\vz) \mid \vh(\vy,\vz, 0) = \mathbf{0} \}\).

We say the flow around the manifold is \emph{normally hyperbolic}, if the flow normal to the manifold is hyperbolic, i.e.\ the Jacobians \(\nabla_{\vz} \vh\) evaluated at any point on the \(\manifold_{0}\) has \(d - l\) eigenvalues with their real part uniformly bounded away from zero, and \(\nabla_{\vy} \vg\) has \(l\) eigenvalues with zero real part.
More specifically, for continuous attractors, the real parts of the eigenvalues of \(\nabla_{\vz} \vh\) are strictly negative, representing sufficiently strong attractive flow toward the manifold.
Equivalently, for the ODE, \(\dot{\vx} = \vf(\vx)\), the variational system is of constant rank and has exactly \((d - l)\) eigenvalues with negative real parts uniformly away from zero and \(l\) eigenvalues with zero real parts everywhere along the continuous attractor.

\begin{SCfigure}[10][bthp]
  \centering
  \includegraphics[width=0.5\textwidth]{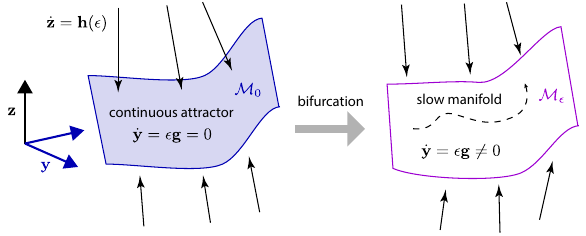}
  \caption{
    Persistent manifold theorem applied to compact continuous attractor guarantees the flow on the slow manifold \(\manifold_{\epsilon}\) is invariant and continues to be attractive.
    The dashed line is a trajectory ``trapped'' in the slow manifold (which has the same topology as the continuous attractor). }\label{fig:fenichel}
\end{SCfigure}

For any parameterization \(\vg\), \(\epsilon > 0\) induces a bifurcation of the continuous attractor.
What can we say about the fate of the perturbed system?
The continuous dependence theorem~\citep{Chicone2006} says that the trajectories will change continuously as a function of \(\epsilon\) without a guarantee on how quickly they change.
However, the topological structure and the asymptotic behavior of trajectories change discontinuously due to the bifurcation.
Yet, surprisingly, there is a strong connection in the geometry due to Fenichel's theorem~\cite{fenichel1971}.\footnote{The Persistent Manifold Theorem has been successfully applied previously in neuroscience\citep{Kopell1996,ermentrout2010}, for example to reduce the dimensionality of the Hodgkin-Huxley model\citep{rinzel1985excitation,ermentrout1986parabolic}.}
We informally present a special case due to~\citet{Jones1995}:

\begin{theorem}[Persistent Manifold]\label{theorem:persistent}
Let \(\manifold_{0}\) be a connected, compact\footnote{See \citet{eldering2013normally} for results of persistence of noncompact invariant manifolds.}, normally hyperbolic manifold of equilibria originating from a sufficiently smooth ODE.
For a sufficiently small perturbation \(\epsilon > 0\), there exists a manifold \(\manifold_{\epsilon}\) diffeomorphic to \(\manifold_{0}\) and  invariant under the flow of Eq.~\eqref{eq:fenichel:flowtangent}-\eqref{eq:fenichel:flownormal}. \end{theorem}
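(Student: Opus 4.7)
The plan is to realize the persistent manifold as a graph over the unperturbed equilibrium manifold and pin it down by a contraction-mapping / implicit function argument that is powered by normal hyperbolicity and the timescale separation already encoded in Eqs.~\eqref{eq:fenichel:flowtangent}--\eqref{eq:fenichel:flownormal}. Since $\manifold_0 = \{(\vy,\vz) : \vh(\vy,\vz,0)=\mathbf{0}\}$ is, by the implicit function theorem applied pointwise (the Jacobian $\nabla_{\vz}\vh$ is invertible along $\manifold_0$ by normal hyperbolicity), locally a graph $\vz = \vsigma_0(\vy)$, I would first use compactness of $\manifold_0$ to straighten coordinates so that $\vsigma_0 \equiv \mathbf{0}$ on a tubular neighborhood $U \subset \reals^l \times \reals^{d-l}$, and the normal linearization $A(\vy) \coloneqq \nabla_{\vz}\vh(\vy,\mathbf{0},0)$ is uniformly hyperbolic with spectrum bounded away from the imaginary axis.

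Next I would seek $\manifold_\epsilon$ as the graph of a $C^r$ section $\vsigma_\epsilon : \manifold_0 \to \reals^{d-l}$, $\vsigma_\epsilon = O(\epsilon)$. The invariance of such a graph under the flow is equivalent to the quasilinear PDE
\begin{equation}\label{eq:invPDE}
    \epsilon\, D\vsigma_\epsilon(\vy)\,\vg\bigl(\vy,\vsigma_\epsilon(\vy),\epsilon\bigr) \;=\; \vh\bigl(\vy,\vsigma_\epsilon(\vy),\epsilon\bigr),
\end{equation}
obtained by differentiating $\vz(t)=\vsigma_\epsilon(\vy(t))$ along Eqs.~\eqref{eq:fenichel:flowtangent}--\eqref{eq:fenichel:flownormal}. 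The key step is to view Eq.~\eqref{eq:invPDE} as $F(\vsigma,\epsilon)=0$ on the Banach space $X = C^r_b(\manifold_0,\reals^{d-l})$ with $F(\mathbf{0},0)=0$, and to compute the partial Fréchet derivative $D_\vsigma F(\mathbf{0},0)\,\vphi = -A(\vy)\,\vphi(\vy)$. By normal hyperbolicity, multiplication by $A(\vy)^{-1}$ is a bounded inverse on $X$ (uniformity is where compactness of $\manifold_0$ is used). The implicit function theorem in Banach spaces then furnishes a unique smooth family $\vsigma_\epsilon$ with $\vsigma_\epsilon \to \mathbf{0}$ as $\epsilon \to 0$, and the graph $\manifold_\epsilon \coloneqq \{(\vy,\vsigma_\epsilon(\vy))\}$ is invariant and, via the projection $(\vy,\vz)\mapsto \vy$, diffeomorphic to $\manifold_0$.

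As a robustness check (and to recover the standard Fenichel construction rather than appealing only to the IFT) I would alternatively set up the Hadamard graph transform: for sections $\vsigma$ in the closed ball $\{\|\vsigma\|_{C^1}\le C\epsilon\} \subset X$, define $(T_t \vsigma)(\vy)$ by flowing the graph of $\vsigma$ forward by time $t$ under the perturbed ODE and reading off the new $\vz$-coordinate over $\vy$. The normal contraction rate $\lambda \coloneqq -\max_{\vy}\operatorname{Re}\operatorname{spec} A(\vy) > 0$ majorises the tangential stretch $O(\epsilon)$, so for fixed $t$ of order $1$, $T_t$ is a strict contraction on the ball for $\epsilon$ small, and its unique fixed point is $\vsigma_\epsilon$.

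The main obstacle is not the existence of a fixed point per se, but securing the estimates \emph{uniformly} over $\manifold_0$ and at the $C^r$ level: one must control $D\vsigma$ under the graph transform (a cone-invariance / Lipschitz-jet argument), and to get $C^r$ smoothness one iterates the contraction in the jet bundle and invokes the spectral gap $r\cdot (\text{tangential rate}) < \text{normal rate}$, which is automatic here because the tangential rate is $O(\epsilon)$ while the normal rate is bounded below. Compactness of $\manifold_0$ is essential to turn pointwise hyperbolicity into the uniform bounds required by both the implicit function theorem and the graph-transform contraction; the noncompact case, as noted in the footnote referencing \citet{eldering2013normally}, requires additional uniformity hypotheses that I would not need to address here.
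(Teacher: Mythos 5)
The paper does not prove this statement itself: Theorem~\ref{theorem:persistent} is explicitly an informal restatement of Fenichel's theorem as presented by \citet{Jones1995} (cf.\ Theorem~\ref{theorem:originalpersistent} in the supplement), so the only fair comparison is against the proof behind that citation. Your \emph{second} argument is that proof: the Hadamard graph transform on sections with $\|\sigma\|_{C^{1}}\le C\epsilon$, contraction furnished by the normal rate $\lambda>0$ dominating the $O(\epsilon)$ tangential stretch, cone invariance to control $D\sigma$, and the spectral-gap condition $r\cdot O(\epsilon)<\lambda$ (automatic in the slow--fast setting) to bootstrap $C^{r}$ smoothness via fiber contraction. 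That is essentially the Hirsch--Pugh--Shub/Fenichel argument, specialized to the purely attracting case; for general normal hyperbolicity you would additionally have to split the normal bundle and run the transform forward on the stable and backward on the unstable subbundle, and if \(\manifold_{0}\) has a boundary (the bounded line attractor) you only obtain local invariance, which is how the cited theorem is actually stated.

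Your \emph{primary} argument, the Banach-space implicit function theorem applied to \(F(\sigma,\epsilon)\coloneqq \epsilon\,D\sigma\,\vg(\cdot,\sigma,\epsilon)-\vh(\cdot,\sigma,\epsilon)\), has a genuine gap: the invariance equation loses a derivative. Because of the \(D\sigma\) term, \(F\) is only defined (and continuously Fr\'echet differentiable) as a map \(C^{r}_{b}\times\reals\to C^{r-1}_{b}\). Its partial derivative at \((\mathbf{0},0)\) is indeed \(\varphi\mapsto -A\varphi\) (the \(D\varphi\) contribution vanishes with the factor \(\epsilon\)), and multiplication by \(A^{-1}\) is bounded on \(C^{r-1}_{b}\) --- but it does not map \(C^{r-1}_{b}\) into \(C^{r}_{b}\). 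Concretely, for \(\psi\in C^{r-1}_{b}\setminus C^{r}_{b}\) the unique pointwise preimage \(-A^{-1}\psi\) is not in the domain space, so \(D_{\sigma}F(\mathbf{0},0)\colon C^{r}_{b}\to C^{r-1}_{b}\) is injective but not surjective, and the isomorphism hypothesis of the implicit function theorem fails. No elementary reshuffling of \(C^{k}\) spaces repairs this, since the one-derivative loss is intrinsic to the invariance PDE; this is exactly why the literature replaces it by the Lyapunov--Perron integral equation or the graph transform. In short, your ``robustness check'' is the proof, and the IFT argument should be dropped (or promoted to a Nash--Moser scheme, which is unnecessary here). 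The conclusion should also record the \(O(\epsilon)\) Hausdorff closeness of \(\manifold_{\epsilon}\) to \(\manifold_{0}\), which falls out of the same \(\|\sigma_{\epsilon}\|_{C^{0}}\le C\epsilon\) estimate and is used elsewhere in the paper.
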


The manifold \(\manifold_{\epsilon}\) is called the \emph{slow manifold} which is no longer necessarily a continuum of equilibria.
However, the  invariance implies that trajectories remain within the manifold except potentially at the boundary. Furthermore, the non-zero flow on the slow manifold is slow and given in the \(\epsilon \to 0\) limit as \(\RN{\vy}{\tau} = \vg(c^{\epsilon}(\vy), \vy, 0)\) where \(\tau = \epsilon t\) is a rescaled time and \(c^{\epsilon}(\cdot)\) parameterizes the \(l\) dimensional slow manifold.
In addition, the stable manifold of \(\manifold_{0}\) is similarly persistent~\cite{Jones1995}, implying that the manifold \(\manifold_{\epsilon}\) remains attractive.
Finally, the persisting invariant manifold is very close in space to the original continuous attractor (see also Theorem ~\ref{theorem:originalpersistent}).

These conditions are met for the examples in Fig.~\ref{fig:lara_bifurcations} (see Sec.~\ref{sec:supp:fast_slow_form} for the corresponding fast-slow reparametrization\footnote{As a technical note, for the theory to apply to a continuous piecewise-linear system, it is required that the invariant manifold is globally attracting~\cite{simpson2018}, which is also the case for the BLA (see also \citep{prohens2013canard,prohens2016slow} for a discussion of geometric singular perturbation theory for piecewise linear dynamical systems). Therefore, we consider systems that are at least continuous, but some extra conditions apply if a system is not differentiable.}).As the theory predicts, it bifurcates into a 1-dimensional slow manifold (Fig.~\ref{fig:lara_bifurcations}, dark-colored regions) that contains fixed points and connecting orbits and is overall still attractive.
Furthermore, Fenichel's Persistent Manifold theorem explains the bifurcation structure of the theoretical models discussed in Sec.~\ref{sec:ras}.
Because continuous ring attractors are bounded, they persist as an invariant manifold and remain attractive under small perturbations~\citep{wiggins1994}.

\subsection{Fast-slow decomposition and the revival of continuous attractors}\label{sec:revival}
Consider a behaviorally relevant timescale for working memory, for example, roughly up to a few tens of seconds.
If the relevant dynamical system is orders of magnitude slower, for example, 1000 sec or longer, its effect is too slow to have a practical impact on the behavior. This clear gap in the fast and slow time scales can be recast as \emph{normal hyperbolicity} of the slow manifold by relaxing the zero real part to a separation of time scales (reciprocal of eigenvalues or Lyapunov exponents).
In other words, the attractive flow normal to the manifold needs to be uniformly faster than the flow on the slow manifold.
By taking the limit of the slow flow on the manifold to arbitrarily long time constant (i.e., to zero flow), we achieve the reversal of the persistent manifold theorem.
\begin{prop}[Revival of continuous attractor]\label{prop:revival}
Let \(\manifold_{\epsilon}\) be a connected, compact, attractive, normally hyperbolic slow manifold  (as parametrized by Eq.~\eqref{eq:fenichel:flowtangent}-\eqref{eq:fenichel:flownormal}).
Let the uniform norm of the flow tangent to the manifold be \(\|\dot \vy\|_{\infty}=\eta\).
There exists a perturbation with uniform norm at most \(\eta\) that induces a bifurcation to a continuous attractor manifold. \end{prop}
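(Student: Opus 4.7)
The plan is to explicitly construct a perturbation that cancels the slow drift on $\manifold_{\epsilon}$, converting it into a manifold of equilibria while preserving its normal attractiveness. The key insight is that by invariance of $\manifold_{\epsilon}$ under the flow of Eq.~\eqref{eq:fenichel:flowtangent}-\eqref{eq:fenichel:flownormal}, the vector field restricted to the slow manifold is \emph{entirely tangent} to it, so a tangentially acting perturbation of the same magnitude can cancel it.

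First, I would pass to coordinates adapted to $\manifold_{\epsilon}$. Parameterizing $\manifold_{\epsilon}$ as a graph $\vz = c^{\epsilon}(\vy)$ (Theorem~\ref{theorem:persistent}), I introduce the straightened coordinate $\tilde{\vz} = \vz - c^{\epsilon}(\vy)$ so that $\manifold_{\epsilon} = \{\tilde{\vz} = \vec{0}\}$. Invariance forces the transformed normal component $\tilde{\vh}(\vy, \tilde{\vz}, \epsilon) \coloneqq \vh - \epsilon\, \nabla c^{\epsilon} \cdot \vg$ to satisfy $\tilde{\vh}(\vy, \vec{0}, \epsilon) \equiv \vec{0}$, and the flow on $\manifold_{\epsilon}$ reduces to $\dot{\vy} = \epsilon\, \tilde{\vg}(\vy, \vec{0}, \epsilon)$, which by hypothesis has uniform norm $\eta$. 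Then I define the perturbation
\[
    \vp(\vy, \tilde{\vz}) = \bigl(-\epsilon\, \tilde{\vg}(\vy, \vec{0}, \epsilon),\; \vec{0}\bigr),
\]
acting only in the tangent direction and depending only on $\vy$ through the restriction of the slow flow to $\manifold_{\epsilon}$. By construction $\uniformNorm{\vp} = \eta$.

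Next, I would verify that $\manifold_{\epsilon}$ is a continuous attractor of $(\vf + \vp)$. At any point of $\manifold_{\epsilon}$ both components vanish: $\dot{\vy} = \epsilon\, \tilde{\vg}(\vy, \vec{0}, \epsilon) - \epsilon\, \tilde{\vg}(\vy, \vec{0}, \epsilon) = 0$ and $\dot{\tilde{\vz}} = \tilde{\vh}(\vy, \vec{0}, \epsilon) = 0$. For marginal/strong stability, the Jacobian of the perturbed field at $(\vy, \vec{0})$ has the block form $\bigl(\begin{smallmatrix} \vec{0} & \epsilon\, \nabla_{\tilde{\vz}} \tilde{\vg} \\ \vec{0} & \nabla_{\tilde{\vz}} \tilde{\vh} \end{smallmatrix}\bigr)$, where the tangent block is zero because $\vp$ exactly cancels $\epsilon \tilde{\vg}$ at $\tilde{\vz}=\vec{0}$, and the lower-left block vanishes because $\tilde{\vh}(\vy, \vec{0}, \epsilon) \equiv \vec{0}$ implies $\nabla_\vy \tilde{\vh}|_{\tilde{\vz}=\vec{0}} = 0$. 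The spectrum is therefore $l$ zero eigenvalues tangent to $\manifold_{\epsilon}$ plus the $d-l$ eigenvalues of $\nabla_{\tilde{\vz}} \tilde{\vh}$ inherited from the original normally hyperbolic system, which by assumption have uniformly negative real parts. This is exactly the definition of a continuous attractor given in Sec.~\ref{sec:critique}.

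The main obstacle is the coordinate bookkeeping. The straightening transformation has a non-orthogonal Jacobian (it shears along $\nabla c^{\epsilon}$), so uniform norms in ambient coordinates and in fast-slow coordinates can disagree. I would resolve this by noting that $\eta$ is \emph{defined} in the fast-slow parametrization of Eq.~\eqref{eq:fenichel:flowtangent}-\eqref{eq:fenichel:flownormal} (the statement writes $\uniformNorm{\dot\vy} = \eta$), and that the constructed $\vp$ is purely tangential --- a direction that the straightening leaves unchanged --- so the two norms coincide on the relevant component. A secondary subtlety is extending $\vp$ smoothly off $\manifold_{\epsilon}$ without inflating its uniform norm; this is automatic here because $\vp$ depends only on $\vy$ via a quantity evaluated on $\manifold_{\epsilon}$, so the bound $\uniformNorm{\vp} = \eta$ holds globally.
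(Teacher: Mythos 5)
Your proof is correct and rests on the same core idea as the paper's: by invariance the vector field on \(\manifold_{\epsilon}\) is purely tangential with uniform norm \(\eta\), so subtracting it off turns the slow manifold into a continuum of equilibria at a perturbation cost of exactly \(\eta\). The implementations differ in how the cancelling field is extended off the manifold. The paper first writes a distributional perturbation \(\vp(\vx) = -\int_{\manifold_{\epsilon}}\delta(\vx-\vy)\vf(\vy)\,\dm{\vy}\) (not a legitimate vector field) and then repairs it using the \(\epsilon\)-Neighborhood Theorem: a tubular neighborhood with nearest-point projection \(\pi_{\delta}\) and a bump function \(\psi\), giving \(\vp(\vy) = -\psi(\vy)\vf(\pi_{\delta}(\vy))\), which is smooth and vanishes outside the tube. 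You instead work in the straightened fast-slow coordinates and extend the cancellation as a function of \(\vy\) alone, constant in the normal coordinate; this sidesteps the bump-function construction entirely (smoothness is inherited from \(\tilde{\vg}\)) at the price of modifying the dynamics globally rather than only near \(\manifold_{\epsilon}\) --- which is harmless since only \(\uniformNorm{\vp}\) enters the claim. Your version also supplies something the paper's proof omits: the block-triangular Jacobian computation showing that the perturbed system genuinely satisfies the paper's definition of a continuous attractor (exactly \(l\) zero eigenvalues tangentially, the original uniformly negative spectrum of \(\nabla_{\tilde{\vz}}\tilde{\vh}\) normally), rather than merely having \(\vf+\vp=0\) on the manifold. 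Your closing remarks on the norm being measured in the fast-slow parametrization are consistent with how the statement defines \(\eta\) via \(\|\dot{\vy}\|_{\infty}\), so no gap there.
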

\vspace{-.1cm}
An explicit perturbation is derived in Sec.~\ref{sec:supp:proofprop1}.
This makes the uniform norm of the vector field on a (slow) manifold a useful measure of the distance of an approximation to a continuous attractor.
Prop.~\ref{prop:revival} can be extended to the case where the invariant manifold has additional dynamics to which the output mapping is invariant (see Theorem~\ref{thrm:near_ca}).
These systems can be perturbed onto a decomposable system where one of the subsystems has a slow flow.

\subsection{Relevance of dynamics on the memory performance of the slow manifold}\label{sec:attractor_bif}

Third, we relate the flow of the manifold (and, through Prop.~\ref{prop:revival}, the \emph{size} of the perturbation) to the memory error of the approximation in short-time scale.
We also discuss the implications of the theoretical insights on the memory error in the asymptotic time scale.

In the short-time scale the memory performance is bounded by the uniform norm of the flow tangent to the manifold.
Let \(\vx_{0} \in \manifold\), and \(\varphi = \vp(\cdot)\vert_{\manifold}\) be the vector field restricted to the manifold (following the notation in Eq.~\ref{eq:perturbation:nonparam}).
The average deviation from the initial memory \(\vx_{0}\) over time is bounded linearly (derivation in Sec.~\ref{sec:supp:ub}):
\begin{align}\label{eq:distance:ub}
\frac{1}{\vol\manifold}\int_{\manifold}
\abs{\vx(t, \vx_{0}) - \vx_{0}}\,
\dm{\vx_{0}}
\leq t\uniformNorm{\varphi}
\qquad
\text{(error bound)}
\end{align}
Note that this bound is the worst case and tighter for sufficiently small \(t \geq 0\).
Furthermore, for compact invariant manifolds the error is bounded by the diameter of the manifold and hence this bound becomes irrelevant for \(t\) large.

While the uniform norm gives insight on the short-time scale behavior of the perturbed ODE, we expect that working memory tasks generalize to longer durations~\citep{Park2023a}.
The long-time scale behavior on the slow manifold is dominated by the stability structure, i.e., the topology of the dynamics.
Although we have seen numerous topologies in Sec.~\ref{sec:critique}, the Persistent Manifold Theorem says that this variability is fundamentally limited, especially in low dimensions (see for more details Sec.~\ref{sec:supp:persistence_extra}).
This is especially relevant as previous works have identified a low-dimensional organization of neural activity to explain the brain's ability to adapt behavioral responses to changing stimuli and environments \citep{beiran2023parametric,altan2023low,fanthomme2021low}.
For a ring attractor, this implies that the stability structure of the invariant manifold is either
(1) composed of an equal number of stable fixed points and saddle nodes, placed alternatingly and with connecting orbits, or (2) a limit cycle.
These different stability structures have different generalization properties (see Sec.~\ref{sec:generalization}).
In more complex scenarios, such as two-dimensional attractors, fixed points can coexist with limit cycles, creating a rich tapestry of possible attractors.

\subsection{Implications on experimental neuroscience}\label{sec:implications}
Animal behavior exhibits strong resilience to changes in their neural dynamics, such as the continuous fluctuations in the synapses or slight variations in neuromodulator levels or temperature.
Hence, any theoretical model of neural or cognitive function that requires fine-tuning, such as the continuous attractor model for analog working memory, raises concerns, as they are seemingly biologically irrelevant.
This challenge is further compounded by the structural constraints imposed by the connectome, which defines the network's architecture and limits the possible configurations of synaptic and circuit dynamics \citep{kutschireiter2023bayesian, biswas2024fly}.
Moreover, unbiased data-driven models of time series data and task-trained recurrent network models cannot recover such continuous attractor theories precisely.
Our theory shows that this apparent fragility is not as devastating as previously thought: despite the ``qualitative differences'' in the phase portrait, the ``effective behavior'' of the system can be arbitrarily close, especially in the behaviorally relevant time scales.
We show that as long as the attractive flow to the memory representation manifold is fast and the flow on the manifold is sufficiently slow, it represents an \textbf{approximate continuous attractor}\footnote{This correspond a type of ``ideal pattern'' in the vocabulary of \citet{chirimuuta2024brain}. Our framework proposes a general approach to abstract away irrelevant details in models for analog memory \citep{potochnik2017idealization}.}.
Furthermore, our theory bounds the error in working memory incurred over time for such approximate continuous attractors.
Therefore, the concept of continuous attractors remains a crucial framework for understanding the neural computation underlying analog memory, even if the ideal continuous attractor is never observed in practice.
Experimental observations that indicate the slowly changing population representations during the ``delay periods'' where working memory is presumably required, do not necessarily contradict the continuous attractor hypothesis. 
Perturbative experiments can further measure the attractive nature of the manifold and their causal role through manipulating the memory content.

\section{Numerical Experiments on Task-optimized Recurrent Networks}\label{sec:experiments}

While our theory describes the abundance of approximate continuous attractors in the vicinity of a continuous attractor, it does not imply that there are no approximate solutions away from continuous attractors.
In this section, we use task-optimized RNNs as a means to search for plausible solutions for analog memory for a circular variable.
We train a diverse set of RNNs, and then identify the solution type of trained RNNs to gain insights into its performance, error patterns, generalization capabilities, and, ultimately, proximity to a continuous attractor.

Understanding the implemented computation in neural systems in terms of dynamical systems is a well-established practice~\citep{seung1996,sompolinsky1988,vyas2020}.
Researchers have analyzed task-optimized RNNs through nonlinear dynamical systems analysis~\citep{sussillo2013blackbox,sussillo2014,barak2013,driscoll2022,maheswaranathan2019universality,cueva2019headdirection,cueva2021continuous,maheswaranathan2019reverse,beer2018laformation,ribeiro2020beyond} and to compare those artificial networks to biological circuits~\citep{mante2013context,remington2018flexible,ghazizadeh2021slow}.
Previously, systematic analysis of the variability in network dynamics has been surveyed in vanilla RNNs, and variations in dynamical solutions over architecture and nonlinearity have been quantified~\citep{sussillo2013blackbox,mante2013context,yang2019task,maheswaranathan2019universality,driscoll2022}.
Furthermore, working memory mechanisms in RNNs had tendencies to find sequential or persistent representations through training depending on the task specification~\citep{orhan2019diverse}.
We therefore investigated to what extent training RNNs on a task uniquely determines the low-dimensional dynamics, independent of neural architectures.
We see that all the solutions have a slow invariant manifold, making all of them an instantiation of approximate continuous attractors.

\subsection{Model Architectures and Training Procedure}
Building upon prior work, which has shown their capabilities on such tasks, we trained RNNs to either
(1) estimate head direction through integration of angular velocity\citep{cueva2019headdirection,cueva2021continuous} (Fig.~\ref{fig:fastslow_decomposition}A1)
or (2) perform a memory-guided saccade task for targets on a circle\citep{wimmer2014,xie2022neural} (Fig.~\ref{fig:fastslow_decomposition}A2, details in Sec.~\ref{sec:supp:tasks} and see Sec.~\ref{sec:supp:discretization} for how RNNs relate to Eq.~\ref{eq:perturbation:nonparam}).
We numerically minimized the mean squared error loss \(L_{\operatorname{MSE}}\) between the network output and the target output.
For each activation function and each network architecture (vanilla RNN with ReLU, \texttt{tanh}, and rectified \texttt{tanh} activation functions, LSTM, and GRU), we trained 10 networks per hidden size: 64, 128, and 256 with (hidden) state noise.

\subsection{Numerical Fast-Slow Decomposition}\label{sec:fastslowmethod}
For each trained network, we find the slow manifold by integrating the autonomous dynamics, then selecting the parts of the trajectories that have speed slower than a threshold (Sec.~\ref{sec:supp:fastslowmethod}).
We identify the points on the invariant manifold from the simulated trajectories that are projected closest to a set of points in the output space relevant to the task after convergence, i.e. on the target ring.
We parametrize the one-dimensional invariant manifold by fitting a cubic spline with periodic boundary constraints to these points (black line in Fig.~\ref{fig:fastslow_decomposition}B \& C).
Normal hyperbolicity is measured by a gap in the timescales of the system (the eigenvalue spectrum of the linearization along points on the invariant manifold, Fig.~\ref{fig:fastslow_decomposition}E and F).

We find the fixed points on the invariant ring by identifying regions where the direction of the flow flips (Sec.~\ref{sec:supp:fpf}).
Stable fixed points are identified where the flow directions are both pointing towards this flip point,
while saddle nodes are identified where they are pointing away (Fig.~\ref{fig:fastslow_decomposition}B \& C.)

\subsection{Variations in the Topologies of the Slow Manifold Solutions}\label{sec:topologies}
To understand what solutions the RNNs found to solve the task, we investigate their memory mechanism.
For this, we dissect the dynamics of RNNs by segregating time scales to delineate the rapid flow normal to the slow manifold, and the flow on the manifold (Sec.~\ref{sec:fsdecmethod}).
All solutions involve a slow manifold with the same topology as the relevant variable in the task.
The different solutions are different in their asymptotic dynamics (Fig.~\ref{fig:fastslow_decomposition}).
The most often found solution is of the type \emph{fixed point ring manifold} (Fig.~\ref{fig:fastslow_decomposition}B and C).
These solutions are consistent with observations that persistent activity relies on discrete attractors \citep{brody2003, inagaki2019}.
Less commonly found topologies includes the slow torus around a repulsive ring invariant manifold (Fig.~\ref{fig:fastslow_decomposition}D).
This solution in turn is consistent with both observations of the possibility of using non-constant dynamics for memory storage~\citep{hirsch1995computing, Park2023a} and neuronal circuits underlying persistent representations despite time-varying activity~\citep{druckmann2012neuronal}.
All stability structures (fixed points and limit cycles) are mapped close to the target output circle (Figs.~\ref{fig:fastslow_decomposition_otuput},~\ref{fig:im_rep},~\ref{fig:im_all}).
We also verify the normal hyperbolicity of the trained networks shown in Fig.~\ref{fig:fastslow_decomposition}B and C.
The largest eigenvalue of the Jacobian fluctuates around zero (the invariant manifold is not a continuous attractor), but it is removed from the second largest (Fig.~\ref{fig:fastslow_decomposition}E \& F).

\begin{figure}[tbhp]
  \centering
  \includegraphics[width=.95\textwidth]{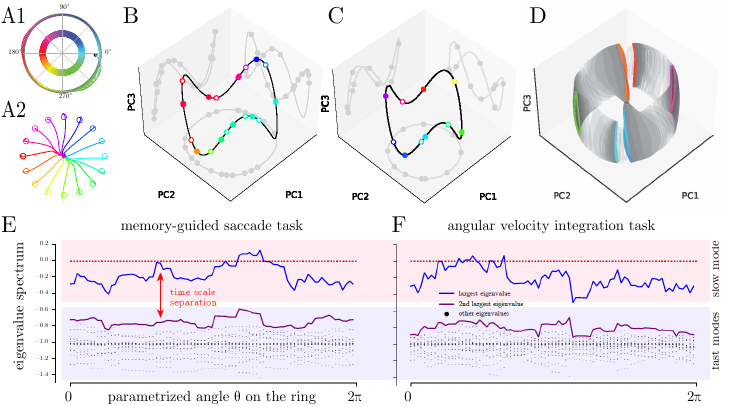}
  \caption{Slow manifold approximation of different trained networks on the memory-guided saccade and angular velocity integration tasks.
 (A1) Output of an example trajectory on the angular velocity integration task.
 (A2) Output of example trajectories on the memory-guided saccade task.
 (B) An example fixed-point type solution to the memory-guided saccade task. Circles indicate fixed points of the system (filled for stable, empty for saddle) and the decoded angular value on the output ring is indicated with the color according to A1.
 (C) An example of a found solution to the angular velocity integration task.
 (D) An example slow-torus type solution to the memory-guided saccade task. The colored curves indicate stable limit cycles of the system.
 (E+F) The eigenvalue spectra for the trained networks in B and C show a gap between the first two largest eigenvalues.
}\label{fig:fastslow_decomposition}
\end{figure}

\subsection{Universality amongst Good Solutions}
The fixed point topologies show a lot of variation across networks (Fig.~\ref{fig:fastslow_decomposition}B,C, Fig.~\ref{fig:angular_loss} and Fig.~\ref{fig:im_all}),
 much like the systems next to continuous attractors (Fig.~\ref{fig:lara_bifurcations} and Fig.~\ref{fig:bio_rings}).
Previously, it has been observed that fixed point analysis has a major limitation, namely, that the number of fixed points must be equal across compared networks  \citep{maheswaranathan2019universality}.
Our methodology effectively addresses and overcomes this limitation.
The universal structure of continuous attractor approximations as slow invariant manifolds allows us to connect different topologies as \textbf{approximate continuous attractors} (Sec.~\ref{sec:attractor_bif}).
For results on LSTMs and GRUs and a higher dimensional task, see Sec.~\ref{sec:supp:lstmgru} and  Sec.~\ref{sec:supp:davit}, respectively.

\section{Generalization Analysis}\label{sec:generalization}

In this section, we use task-trained RNNs to study the relationship between dynamics and generalization capabilities.
 When neuroscientists study neural computations in animals, tasks have finite durations, leaving it unclear whether animals learn the intended computation or a finite-time approximation.
  The same issue applies to trained neural networks.
   We will explore whether the networks possess the necessary memory for perfect recall or only perform the task within the timescale of their training.

The two possible approximations of a ring attractor, a limit cycle or a fixed point ring manifold (Sec.~\ref{sec:attractor_bif}), exhibit markedly distinct generalization characteristics.
Approximating the system as a limit cycle results in a memory trace that gradually diminishes over time (c.f.~\citet{Park2023a}).
Conversely, the alternative approximation's memory states are contingent upon the quantity and positioning of stable fixed points within the system.
We describe in detail the generalization properties of the trained networks\footnote{We tested all networks with a validation set and took a cutoff for the normalized MSE for the networks we consider for the analysis at -20 dB (Fig.~\ref{fig:angular_loss}B).} 
on the angular velocity integration task at two different time scales:  asymptotic and finite time.

\begin{figure}[tbhp]
  \centering
  \includegraphics[width=\textwidth]{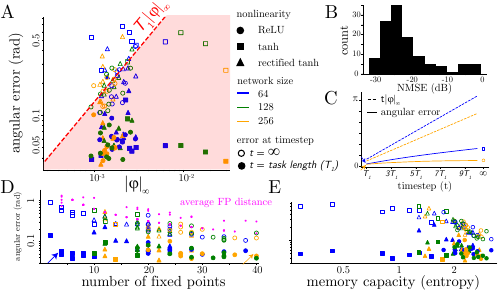}
  \caption{
  Temporal generalization validates theoretical predictions regardless of implementation detail.
(A) Average accumulated angular error versus the maximum flow on the manifold (Eq.~\ref{eq:distance:ub}), shown for finite time (task duration that the networks were trained on, \(T_{1}\); filled markers) and at asymptotic time (hollow markers).
  (B) Normalized validation loss of all trained networks.
(C) Average error and theoretical upper bound over time for two selected networks (corresponding to arrows in panel D).
  (D) Average asymptotic error is roughly inversely proportional to the number of fixed points.
(E) Memory capacity is predictive of the average error.
  }\label{fig:angular_loss}
\end{figure}

\ptitle{Finite time}
Along with the angular velocity integration component of the task, the trained networks learn to store a memory of an angular variable.
We assess the performance of the network to store the memory of the angle over time.
The networks typically perform well on the timescale on which they have been trained, \(T_{1} = 256\) time steps (Fig.~\ref{fig:angular_loss}C).
The memory error for $T_1$ is, as theoretically predicted (Eq.~\ref{eq:distance:ub}, see Prop.~\ref{prop:ub}), bounded by the uniform norm of the vector field on the invariant manifold, and therefore by the distance to a continuous attractor (Prop.~\ref{prop:revival}, Fig.~\ref{fig:angular_loss}A, Sec.~\ref{sec:supp:vf} \& \ref{sec:supp:ub}).

\ptitle{Asymptotic time}
Looking beyond the finite timescale provides valuable insights into the network's ability to store information.
For the asymptotic time scale, we capture the asymptotic behavior of the system by identifying to what part of the state space the system evolves to in the limit \(t\rightarrow\infty\) (see also Sec.\ref{sec:supp:asymbehav}).
For a one-dimensional system, this will either be fixed points or a limit cycle.
For the fixed-point type solution, the maximal error is given by the maximal distance to the next fixed point, while for a limit cycle, this will always be \(\pi\).
We calculate the \emph{average fixed point distance} by taking the average of the inter-fixed-point interval for each neighboring pair of fixed points.

We can also quantify the loss of information.
Assuming a uniform distribution over the angles, we define the \emph{memory capacity} as the negative conditional entropy of the continuous memory given the asymptotic state, i.e.\ the stable fixed points (see Sec.~\ref{sec:supp:asymbehav} and Eq.~\ref{eq:memcap}).

\ptitle{Error Accumulation in Neural Networks}
The mean accumulated error at the time at which the task was trained has an exponential relationship with the number of fixed points (Fig.~\ref{fig:angular_loss}A).
Furthermore, this error is bounded by the mean distance between stable and unstable fixed points (red dots in Fig.~\ref{fig:angular_loss}D).
This is another indication that the networks rely on a ring invariant manifold to implement the task.
Networks with different numbers of fixed points might have the same performance on the finite time scale (bounded by \(T_{1}\uniformNorm{\varphi}\)) but have vastly different generalization properties because they differ in the number of fixed points (Fig.~\ref{fig:angular_loss}C).

\section{Approximate Slow Manifolds are near Continuous Attractors}\label{sec:converse}
In Sec.~\ref{sec:ras}, we presented a theory of approximate solutions in the neighborhood of continuous attractors.
When are approximate solutions to the analog working memory problem near a continuous attractor?
We posit that there are four conditions (see for more detail Sec.~\ref{sec:condition_clarifications}):
\begin{enumerate*}[label=\textbf{(C\arabic*)}]
\item sufficiently smooth approximate bijection between neural activity and memory content,\label{converse:bijection}
\item the speed of drift of memory content is bounded,\label{converse:persistent}
\item robustness against state (S-type) noise, and\label{converse:Stype}
\item robustness against dynamical (D-type) noise\citep{Park2023a}.\label{converse:Dtype}
\end{enumerate*}
The correspondence implied by~\ref{converse:bijection} translates to the existence of a manifold in the neural activity space with the same topology as the memory content.\footnote{Note that effectively feedforward solutions~\citep{Goldman2009} do not satisfy~\ref{converse:bijection}.}
Persistence~\ref{converse:persistent} requires that the flow on the manifold is slow and bounded.
S-type robustness~\ref{converse:Stype} implies non-expansive flow, i.e., non-positive Lyapunov exponents.
Along with D-type robustness~\ref{converse:Dtype}, it implies the manifold is ``attractive'', and normally hyperbolic (see also Sec.~\ref{sec:persitencempliesnh}).

If these four conditions hold, for example for task-trained RNNs, there exists a smooth function with a uniform norm matching the slowness on the manifold such that when added, the slow manifold becomes a continuous attractor (Prop.~\ref{prop:revival} and Theorem~\ref{thrm:near_ca}, see also Sec.~\ref{sec:near_ca}).
For the RNN experiments, we added state-noise while training using stochastic gradient descent, satisfying~\ref{converse:Stype} and~\ref{converse:Dtype}.
We have also verified that~\ref{converse:persistent} holds (Fig.~\ref{fig:angular_loss}A).
Although the stochastic optimization cannot lead to \emph{the} continuous attractor solution, it gets to the neighborhood where all approximate solutions share the same main feature: having a subsystem that has a slow flow.

\section{Discussion}
Continuous attractors are highly prone to bifurcation under arbitrary perturbations unless they exist in special parametric forms.
This sensitivity to perturbations has traditionally made them seem unsuitable for modeling neural computation in noisy biological systems, according to conventional views on robustness.
Nevertheless, we demonstrate that continuous attractors can exhibit functional robustness, making them a crucial concept in explaining the neural computation underlying analog memory. 
We show that approximations of analog memory (i.e., theoretical models that satisfy conditions~\ref{converse:bijection}-\ref{converse:Dtype}) must possess slow manifold dynamics, placing them near continuous attractors within the space of dynamical systems.
This implies that both biological systems and artificial neural networks only need to be near a continuous attractor to effectively solve problems in a manner similar to the ideal theoretical model, on behaviorally relevant timescales.

Although we expressed our theory in a non-parametric manner with an arbitrary perturbation $\vp(\cdot)$, we can easily extend it to particular parametric forms such as biophysical models or an RNN using a sensitivity of the flow to the parameters (e.g. synaptic weight).
Our theory can be applied to latent dynamical systems estimated from neural recordings\citep{Dowling2024b}.
As a framework, it can abstract out the details in imperfect dynamical implementations, however, it is an open problem to directly recover the continuous attractor from neural recordings or extend it to other ideal computational motifs.

\paragraph{Limitations}
Although, we only explicitly describe the topology and dimensionality of the identified invariant manifolds for a representative set, the results indicate that most solutions have a ring invariant manifold with a slow flow.
Our numerical analysis relies on identifying a time scale separation from simulated trajectories. If the separation of time scales is too small, it may inadvertently identify parts of the state space that are only forward invariant (i.e., transient). However, this did not pose a problem in our analysis of the trained RNNs, which is unsurprising, as the separation is guaranteed by state noise robustness (due to injected state noise during training).

The possible solutions that the networks can find are restricted by having a linear output mapping.
In \citet{Park2023a}, an alternative dynamical solution using oscillators (or quasi-periodic toroidal attractor) was described, however, a nonlinear readout may be necessary.

\section*{Code Availability}
The code for this project is publicly available at \url{https://github.com/catniplab/back_to_the_continuous_attractor}.

\begin{ack}
We would like to extend our heartfelt thanks to Sukbin Lim, Srdjan Ostojic and Daniel Durstewitz for their invaluable feedback and thoughtful suggestions which significantly refined the work.
This work was supported by NIH RF1-DA056404 and the Portuguese Recovery and Resilience Plan (PPR), through project number 62, Center for Responsible AI, and the Portuguese national funds, through FCT---Funda\c{c}\~{a}o para a Ci\^{e}ncia e a Tecnologia---in the context of the project UIDB/04443/2020.
\end{ack}

\small
\bibliographystyle{unsrtnat_IMP_v1}
\bibliography{../cit,../catniplab}

\newpage
\appendix
\section*{Supplemental Material}
\setcounter{section}{0}
\renewcommand{\thefigure}{S\arabic{figure}} \renewcommand{\thesection}{S\arabic{section}}

\section{Intuitive definitions of several key concepts used in our paper}\label{sec:supp:simplifications}

\paragraph{Manifold:} A part of the state-space that locally resembles a flat, ordinary space (such as a plane or a three-dimensional space, but more generally
-dimensional Euclidean space) but can have a more complicated global shape (such as a donut).\label{sec:supp:manifold}
\paragraph{Invariant set:} A property of a set of points in the state space where, if you start within the set, all future states remain within the set and all past states belong to the set as well.\label{sec:supp:invset}
\paragraph{Normally Hyperbolic Invariant Manifold:} A behavior of a dynamical system where flow in the direction orthogonal to the manifold converges (or diverges) to the manifold significantly faster than the direction that remains on the manifold.\label{sec:supp:nhim}
\paragraph{Diffeomorphism:} A diffeomorphism is a stretchable map that can be used to transform one shape into another without tearing or gluing.
A differentiable map with differentiable inverse.\label{sec:supp:diffeomorphism}
\paragraph{\(C^{1}\) neighborhood of a \(C^{1}\) function:} A set of functions that are close to the function in terms of both their values and their first derivatives.\label{sec:supp:c1neighborhood}
\paragraph{Compact Set:} A set where every sequence of points has a subsequence that converges to a point within the set. Intuitively, it means the set is closed and bounded, making it ``finite'' in a certain sense.\label{sec:supp:compactset}
\paragraph{Connecting Orbit:} A trajectory in a dynamical system that connects two different equilibrium points or periodic orbits.
 Specifically, a heteroclinic orbit connects distinct equilibrium points.\label{sec:supp:connectingorbits}

\newpage
\section{The bounded line attractor}\label{sec:supp:bla}
In order to demonstrate the implications of the theory of the persistence of  continuous attractors, we rigorously test the predictions of the theory on the stability of the Bounded Line Attractor (BLA).
Our objective is to assess the practical implications of the theoretical findings of bounded continuous attractors in a small and tractable system, and second, to contribute empirical evidence that can help refine and extend existing theoretical frameworks.

The BLA has a parameter that determines step size along line attractor \(\alpha\). Analogously as for UBLA, these parameters determine the capacity of the network.
The inputs push the input along the line attractor in two opposite directions, see below. The BLA needs to be initialized at \(\beta(1, 1)\) and \(\tfrac{\beta}{2}(1, 1)\), respectively, for correct decoding, i.e., output projection.
\begin{equation}\label{eq:bla}
\win = \alpha
\begin{pmatrix}
-1  &  1 \\
1  &  -1
\end{pmatrix}, \
\vW =
\begin{pmatrix}
0  &  -1 \\
-1  &  0
\end{pmatrix}, \
\wout = \frac{1}{2\alpha}
\begin{pmatrix}
1  \\  -1
\end{pmatrix}, \
\vb = \beta
\begin{pmatrix}
1 \\  1
\end{pmatrix}, \
\bout = 0.
\end{equation}
Parameter that determines step size along line attractor \(\alpha\).
The size determines the maximum number of clicks as the difference between the two channels.
This pushes the input along the line ``attractor'' in two opposite directions, see below.

The results from such low-dimensional system can be extended to higher-dimensional systems through reduction methods from center manifold theory.
On the center manifold the singular perturbation problem (as is the case for continuous attractors) restricts to a regular perturbation problem \citep{fenichel1979geometric}. Furthermore, relying on the Reduction Principle \citep{kirchgraber1990geometry}, one can always reduce all systems (independent of dimension) to the same canonical form, given that they have the same continuous attractor.

\subsection{Fast-slow form}\label{sec:supp:fast_slow_form}
First of all, we will show how to transform the BLA network to the slow-fast form in Eq.~\ref{eq:fenichel:flowtangent}-\ref{eq:fenichel:flownormal} to explicitly demonstrate that the theory applies to it.
To achieve this, we transform the state space so that the line attractor aligns with the \(y\)-axis.
So, we apply the affine transformation \(R_{\theta}(x-\frac{1}{2})\) with the rotation matrix \(R_{\theta} = \begin{bmatrix}\cos\theta &-\sin\theta\\\sin\theta&\cos\theta\end{bmatrix}= \frac{1}{\sqrt{2}}\begin{bmatrix}1 &1\\-1&1\end{bmatrix}\) where we have set \(\theta=-\frac{\pi}{4}\).
So we perform the transformation \(x\rightarrow x'= R_{\theta}(x-\frac{1}{2})\) and so we have \(x = R_{\theta}^{-1} x'+\frac{1}{2}\) with \(R_{\theta}^{-1} = R_{-\theta}\).
Then we get that
\begin{align}
R_{\theta}^{-1}\dot x' = \operatorname{ReLU}\left(W(R_{\theta}^{-1} x`+\frac{1}{2})+1\right)-R_{\theta}^{-1} x'-\frac{1}{2}.
\end{align}For a perturbed connection matrix \(W=\begin{bmatrix}\epsilon &-1\\-1&0\end{bmatrix}\) we get
\begin{align}
R_{\theta}^{-1}\dot x' &= \operatorname{ReLU}\left(\frac{1}{\sqrt{2}}\begin{bmatrix}\epsilon &-1\\-1&0\end{bmatrix}\left(\begin{bmatrix}1 &-1\\1&1\end{bmatrix} x`+\frac{1}{2}\right)+1\right)-\frac{1}{\sqrt{2}}\begin{bmatrix}1 &-1\\1&1\end{bmatrix} x'-\frac{1}{2}\\
\dot x' &=\begin{bmatrix}-1 &1\\1&1\end{bmatrix}\left(\frac{1}{2}\begin{bmatrix}\epsilon - 1 &-\epsilon - 1\\-1&1\end{bmatrix}x' + \frac{1}{2\sqrt{2}}\begin{bmatrix}\epsilon - 1 \\-1\end{bmatrix}+\begin{bmatrix}1 \\1\end{bmatrix}-\frac{1}{2}\begin{bmatrix}1 \\1\end{bmatrix}\right)-x'\\
\dot x' &=\left(\begin{bmatrix}-2 &0\\0&0\end{bmatrix}+\frac{\epsilon}{2}\begin{bmatrix}1 &-1\\-1&1\end{bmatrix}\right)x' + \frac{1}{2\sqrt{2}}\begin{bmatrix}\epsilon \\-\epsilon\end{bmatrix}
\end{align}
\subsection{Bifurcation analysis}
We will now identify all possible bifurcations from the BLA, to show that indeed all perturbations preserve the continuous attractor as an invariant manifold.

We consider all parametrized perturbations of the form \( \vW \leftarrow \vW + \vV\) for a random matrix \(\vV\in \mathbb{R}^{2 \times 2}\) to the BLA.
The BLA can bifurcate in the following systems, characterized by their invariant sets: a system with single stable fixed point, a system with three fixed points (one unstable and two stable) and  a system with two fixed points (one stable and the other a half-stable node) and a system with a (rotated) line attractor.
Only the first two bifurcations (Fig.~\ref{fig:lara_bifurcations}A) can happen with nonzero chance for the type of random perturbations we consider.
The perturbations that leave the line attractor intact or to lead to a system with two fixed points have measure zero in the parameter space.
The perturbation that results in one fixed point happen with probability \(\frac{3}{4}\), while perturbations lead to a system with three fixed points with probability \(\frac{1}{4}\), see Sec.~\ref{sec:supp:probbla}.
The (local) invariant manifold manifold is indeed persistent for the BLA and homeomorphic to the original (the bounded line).

\paragraph{Stabilty of the fixed point with full support}
We investigate how perturbations to the bounded line affect the Lyapunov spectrum.
We calculate the eigenspectrum of the Jacobian:
\begin{align*}
\det [W' -(1+\lambda)\mathbb{I}] &= (\epsilon_{11}-1-\lambda)(\epsilon_{22}-1-\lambda)-(\epsilon_{12}+1)(\epsilon_{21}+1)\\
&=\lambda^{2} - (2+\epsilon_{11}+\epsilon_{22})\lambda -\epsilon_{11}-\epsilon_{22}+\epsilon_{11}\epsilon_{22} -\epsilon_{12} - \epsilon_{21} - \epsilon_{12}\epsilon_{21}
\end{align*}
Let
\(u=- (2+\epsilon_{11}+\epsilon_{22})\)
and
\(v=-\epsilon_{11}-\epsilon_{22}+\epsilon_{11}\epsilon_{22} -\epsilon_{12} - \epsilon_{21} - \epsilon_{12}\epsilon_{21}\)

There are only two types of invariant set for the perturbations of the line attractor. Both have as invariant set a fixed point at the origin. What distinguishes them is that one type of perturbations leads to this fixed point being stable while the other one makes it unstable.

\paragraph{Stability of the fixed points on the axes}
We perform the stability analysis for the part of the state space where \(Wx>0\).
There, the Jacobian is
\begin{equation}
J = -
\begin{pmatrix}
1  &  1 \\
1  &  1
\end{pmatrix}
\end{equation}
We apply the perturbation
\begin{equation}
W' =
\begin{pmatrix}
0  &  -1 \\
-1  &  0
\end{pmatrix}
+ \epsilon
\end{equation}with
\begin{equation}
\epsilon =
\begin{pmatrix}
\epsilon_{11}  &  \epsilon_{12} \\
\epsilon_{21}  &  \epsilon_{22}
\end{pmatrix}
\end{equation}
The eigenvalues are computed as
\begin{align*}
\det [W' -(1+\lambda)\mathbb{I}] &= (\epsilon_{11}-1-\lambda)(\epsilon_{22}-1-\lambda)-(\epsilon_{12}-1)(\epsilon_{21}-1)\\
&=\lambda^{2} + (2-\epsilon_{11}-\epsilon_{22})\lambda -\epsilon_{11}-\epsilon_{22}+\epsilon_{11}\epsilon_{22} +\epsilon_{12} + \epsilon_{21} - \epsilon_{12}\epsilon_{21}
\end{align*}
Let
\(u = 2-\epsilon_{11}-\epsilon_{22}\)
and
\(v=-\epsilon_{11}-\epsilon_{22}+\epsilon_{11}\epsilon_{22} + \epsilon_{12} + \epsilon_{21} - \epsilon_{12}\epsilon_{21}\)
\begin{equation}
\lambda = \frac{-u \pm \sqrt{u^{2} - 4v}}{2}
\end{equation}
Case 1: \(\operatorname{Re}(\sqrt{u^{2} - 4v})<-u\), then
\(\lambda_{1, 2}<0\)

Case 2:  \(\operatorname{Re}(\sqrt{u^{2} - 4v})>-u\), then
\(\lambda_{1}<0\) and \(\lambda_{2}>0\)

Case 3: \(v = 0\), then
\(\lambda=\tfrac{1}{2}(-u\pm u)\), i.e.,
\(\lambda_{1} = 0\) and  \(\lambda_{2}=-u\)
\begin{align}
\epsilon_{11} &= -\epsilon_{22}+\epsilon_{11}\epsilon_{22} + \epsilon_{12} + \epsilon_{21} - \epsilon_{12}\epsilon_{21}
\end{align}
We give some examples of the different types of perturbations to the bounded line attractor.
The first type is when the invariant set is composed of a single fixed point, for example for the perturbation:
\begin{equation}
\epsilon = \frac{1}{10}
\begin{pmatrix}
-2  &  1 \\
 1   &  -2
\end{pmatrix}
\end{equation}

The second type is when the invariant set is composed of three fixed points:
\begin{equation}
\epsilon = \frac{1}{10}
\begin{pmatrix}
1  &  -2 \\
 -2  &  1
\end{pmatrix}
\end{equation}
The third type is when the invariant set is composed of two fixed points, both with partial support.
\begin{equation}
b' =  \frac{1}{10}
\begin{pmatrix}
1 & -1
\end{pmatrix}
\end{equation}
The fourth and final type is when the line attractor is maintained but rotated:
\begin{equation}
\epsilon =  \frac{1}{20}
\begin{pmatrix}
1 & 10\\
10 & 1
\end{pmatrix}
\end{equation}
\subsubsection{Bifurcation landscape}\label{sec:supp:bifurcationlandscape}
We will now state our previous observations as a Theorem that characterizes the possible bifurcations of the BLA.
\begin{theorem}
All perturbations of the bounded line attractor are of the types as listed above.
\end{theorem}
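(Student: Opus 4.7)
The plan is to exploit the piecewise-linear structure of the BLA to reduce the classification of bifurcations to a finite case analysis over the regions determined by the ReLU. The state space $\reals^2$ decomposes into four closed regions based on the sign pattern of $(\vW+\vV)\vx+\vb$, and on each region the dynamics is affine, so equilibria are solutions of a $2\times 2$ linear system constrained to lie in that region. Since the perturbation $\vV$ is small, the active-set geometry near the original line attractor $\{x_1+x_2=1\}\cap\{\vx\ge 0\}$ is a small deformation of the unperturbed one, which confines the candidate equilibria to predictable locations.

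First I would enumerate the candidate equilibria region by region. In the fully-active region $R_{++}$, the equation $(\identity-\vW-\vV)\vx=\vb$ has either a line of solutions (when the parameters lie on the codimension-one locus $\det(\identity-\vW-\vV)=0$, which is exactly the ``rotated line attractor'' case) or a unique solution, which lies in $R_{++}$ only for a specific open set of $\vV$. In the mixed regions $R_{+-}$ and $R_{-+}$, one coordinate is pinned to $0$ and the other satisfies an affine equation whose solution lies in the region for another specific open set of $\vV$; these are the equilibria on the axes that become stable/saddle pairs after bifurcation. In $R_{--}$ no equilibrium arises for small $\vV$ because $\vb>0$ keeps the argument of the ReLU positive near the attractor.

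Second, I would classify stability in each region by computing the Jacobian $-\identity+(\vW+\vV)D$, where $D$ is the diagonal matrix of activations, and invoking the sign analysis of $(u,v)$ already carried out in Sec.~S2 for the full-support and partial-support Jacobians. This yields: (i) the origin region produces a single stable node whenever the $R_{++}$ eigenvalues are stable, which by a trace/determinant computation occupies an open region of $\vV$-space of measure $\tfrac{3}{4}$ under the Gaussian model (c.f.\ Sec.~\ref{sec:supp:probbla}); (ii) when the $R_{++}$ equilibrium becomes a saddle, the two partial-support equilibria on the axes exist and are stable, giving the three-fixed-point bifurcation of measure $\tfrac{1}{4}$; (iii) the codimension-one locus $v=0$ gives the degenerate half-stable node configuration; (iv) the codimension-one locus $\det(\identity-\vW-\vV)=0$ together with a compatibility condition on $\vb$ gives the rotated line attractor. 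Combining these four cases with the region-by-region existence test exhausts all equilibrium configurations.

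The final step is to argue that no other invariant structures can appear under small $\vV$. Because every region is affine and two-dimensional, equilibria are the only bounded invariant sets unless the linear part has a pure imaginary eigenvalue (periodic orbits), but the trace condition $u=-(2+\epsilon_{11}+\epsilon_{22})$ is bounded away from zero for small $\vV$, ruling this out; moreover, piecewise-linear crossing cycles are excluded because the boundary flow transversality inherited from the unperturbed system persists. The main obstacle, in my view, is cleanly handling equilibria sitting exactly on the boundary between two regions (the half-stable node of type three), which requires checking the one-sided Jacobians match the ReLU subgradient condition and verifying that the locus where this occurs is indeed codimension one in $\vV$-space; the rest is a bookkeeping exercise in affine algebra and the sign charts already set up in Sec.~S2.
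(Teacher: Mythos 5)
Your proposal is correct and follows essentially the same route as the paper's proof: a case analysis over the ReLU activation patterns, solving the affine equilibrium equations in each region to obtain the full-support and partial-support sign conditions on the perturbation entries, and then classifying stability via the $(u,v)$ quadratic already computed in Sec.~S2. The only substantive difference is that you argue the exclusion of periodic orbits and the at-most-three-equilibria bound directly from the piecewise-affine structure, whereas the paper asserts the former without proof and cites external results of \citet{morrison2024diversity} for the latter and for the impossibility of a line attractor coexisting with an off-line fixed point.
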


\begin{proof}
We enumerate all possibilities for the dynamics of a ReLU activation network with two units.
First of all, note that there can be no limit cycle or chaotic orbits.

Now, we look at the different possible systems with fixed points.
There can be at most three fixed points \citep[Corollary 5.3]{morrison2024diversity}.
There has to be at least one fixed point, because the bias is non-zero.

General form (example):
\begin{equation}
\epsilon = \frac{1}{10}
\begin{pmatrix}
-2  &  1 \\
 1   &  -2
\end{pmatrix}
\end{equation}
One fixed point with full support:

In this case we can assume \(W\) to be full rank.
\begin{align*}
\dot x =
\relu\left[
\begin{pmatrix}
\epsilon_{11}  &  \epsilon_{12} \\
\epsilon_{21}  &  \epsilon_{22}
\end{pmatrix}
\begin{pmatrix}
x_{1}\\x_{2}
\end{pmatrix}
+
\begin{pmatrix}
1\\1
\end{pmatrix}
\right]
-
\begin{pmatrix}
x_{1}\\x_{2}
\end{pmatrix}
&=0
\end{align*}
Note that \(x>0\) iff \(z_{1}\coloneqq \epsilon_{11}x_{1} + (\epsilon_{12}-1)x_{2} - 1>0\). Similarly for \(x_{2}>0\).

So for a fixed point with full support, we have
\begin{equation}
\begin{pmatrix}
x_{1}\\x_{2}
\end{pmatrix}
=A^{-1}
\begin{pmatrix}
-1\\-1
\end{pmatrix}
\end{equation}with
\[A\coloneqq\begin{pmatrix}
\epsilon_{11}-1  &  \epsilon_{12}-1 \\
\epsilon_{21}-1  &  \epsilon_{22}-1
\end{pmatrix}.\]

Note that it is not possible that \(x_{1} = 0=x_{2}\).

Now define
\[
B\coloneqq A^{-1} = \frac{1}{\det A}
\begin{pmatrix}
\epsilon_{22}-1  &  1-\epsilon_{12} \\
1-\epsilon_{21}  &  \epsilon_{11}-1
\end{pmatrix}
\]
with \[\det A = \epsilon_{11}\epsilon_{22}-\epsilon_{11}-\epsilon_{22}-\epsilon_{12}\epsilon_{21}+\epsilon_{12}+\epsilon_{21}.\]

Hence, we have that \(x_{1}, x_{2}>0\) if \(B_{11}+B_{12}>0\), \(B_{21}+B_{22}>0\) and \(\det A >0\)
or if \(B_{11}+B_{12}<0\), \(B_{21}+B_{22}<0\) and \(\det A <0\).

This can be satisfied in two ways,
If \(\det A >0\), this is satisfied if \(\epsilon_{22}>\epsilon_{12}\) and \(\epsilon_{11}>\epsilon_{21}\),
while if \(\det A <0\), this is satisfied if \(\epsilon_{22}<\epsilon_{12}\) and \(\epsilon_{11}<\epsilon_{21}\).
This gives condition 1.

Finally, we investigate the condition that specify that there are fixed points with partial support.
If \(x_{1} = 0\) then \((\epsilon_{22}-1)x_{2} + 1=0\) and \(z_{1}<0\).
From the equality, we get that \(x_{2}=\frac{1}{1-\epsilon_{22}}\).
From the inequality, we get  \((\epsilon_{12}-1)x_{2} + 1\geq 0\), i.e. \(\frac{1}{1-\epsilon_{12}}\geq x_{2}\).
Hence,
\begin{equation*}
\frac{1}{1-\epsilon_{12}}\geq\frac{1}{1-\epsilon_{22}}
\end{equation*}and thus
\begin{equation}\label{eq:condition2.1}
\epsilon_{22} \leq \epsilon_{12}.
\end{equation}
Similarly to have a fixed point \(x^*\) such that \(x_{2}^*=0\), we must have that
\begin{equation}\label{eq:condition2.2}
\epsilon_{11} \leq \epsilon_{21}.
\end{equation}
Equation~\ref{eq:condition2.1} and~\ref{eq:condition2.2} together form condition 2.

Then, we get the following conditions for the different types of bifurcations:
\begin{enumerate}
\item  If condition 1 is violated, but condition 2 is satisfied with exactly one strict inequality, there are two fixed points on the boundary of the admissible quadrant.
\item If condition 1 is violated, and only one of the subconditions of condition 2 is satisfied, there is a single fixed point on one of the axes.
\item If condition 2 is violated, there is a single fixed point with full support.
\item If both conditions are satisfied, there are three fixed points.
\end{enumerate}

We now look at the possibility of the line attractor being preserved.
This is the case if \(v = 0\).
It is not possible to have a line attractor with a fixed point off of it for as there cannot be disjoint fixed points that are linearly dependent \citep[Lemma 5.2]{morrison2024diversity}
\end{proof}

\subsubsection{Probability of bifurcation types}\label{sec:supp:probbla}
We will now calculate which proportion proportion of the bifurcation parameter space is results in the different bifurcation types.
The conditions  that result in three fixed points are
\begin{align*}
0 &< \epsilon_{11}\epsilon_{22}-\epsilon_{11}-\epsilon_{22}-\epsilon_{12}\epsilon_{21}-\epsilon_{12}-\epsilon_{21},\\
\epsilon_{22} &\leq \epsilon_{12},\\
\epsilon_{11} &\leq \epsilon_{21}.
\end{align*}Therefore, because
\begin{align*}
\epsilon_{22} &\leq \epsilon_{12},\\
\epsilon_{11} &\leq \epsilon_{21}.
\end{align*}we always have that
\begin{align*}
0 &< \epsilon_{11}\epsilon_{22}-\epsilon_{11}-\epsilon_{22}-\epsilon_{12}\epsilon_{21}-\epsilon_{12}-\epsilon_{21}.
\end{align*}This implies that this bifurcation happens with probability \(\frac{1}{4}\)  in a \(\epsilon\)-ball around the BLA neural integrator with \(\epsilon<1\).
We conclude that the single stable fixed point type perturbation happens with probability \(\frac{3}{4}\).

\subsection{Structure of the parameter space}
We will present the structure of the bifurcation space through a slice in which we fix  \(\epsilon_{11}\) and \(\epsilon_{12}\).
First, we summarize which conditions result in which bifurcation in Table~\ref{tab:bifs}.
We derive that the local bifurcation in this slice  has the structure as shown in Fig.~\ref{fig:blaparameterspace}.
\begin{table}[H]
\caption{Summary of the conditions for the different bifurcations.}\label{tab:bifs}
\centering
\bgroup
\def\arraystretch{1.52}
\begin{tabular}{|c||c|c|c|c|c|}
\hline
& 1FP (full) 		& 1FP (partial) & 3FPs & 2FPs & LA  \\\hline \hline
C1 & \cmark	 	& \xmark 	 & \cmark & \xmark & \xmark \\\hline
C2 & \xmark 		& only Eq\ref{eq:condition2.1} or~\ref{eq:condition2.2}  	 & \cmark & \cmark& \xmark \\\hline
\end{tabular}
\egroup
\end{table}

\begin{figure}[H]
  \centering
  \includegraphics[width=0.6\textwidth]{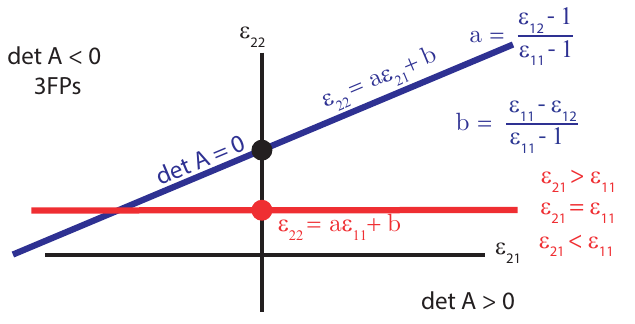}
  \caption{A slice of the parameter space of the BLA for a fixed \(\epsilon_{11}\) and \(\epsilon_{12}\). }\label{fig:blaparameterspace}
\end{figure}

\subsection{Smoother activation functions}
It is well-known that activation functions (\(\sigma\) in  Eqs.~\ref{eq:RNN:discrete} and~\ref{eq:RNN:continuous}), which can take many forms, play a critical role in propagating gradients effectively through the network and backwards in time \citep{jagtap2023,ramachandran2017,hayou2019}.
Activation functions that are \(C^{r}\) for \(r\geq 1\) are the ones to which the Persistence Theorem applies.
The Persistence Theorem further specifies how the smoothness of the activation can have implications on the smoothness of the persistent invariant manifold.
For situations where smoothness of the persistent invariant manifold is of importance, smoother activation functions might be preferable, such as the Exponential Linear Unit (ELU)\citep{clevert2015} or the Continuously Differentiable Exponential Linear Units (CELU) \citep{barron2017}.

\newpage
\section{Ring perturbations}\label{sec:supp:ring_perturbations}

To computationally investigate the neighborhood of recurrent dynamical systems that implement continuous attractors, we investigate 5 RNNs that are known a priori to form 1 or 2 dimensional continuous attractors.

We define a local perturbation (i.e., a change to the ODE with compact support) through the bump function \(\Psi(x) = \exp\left(\frac{1}{\|x\|^{2} - 1}\right)\) for \(\|x\|<1\) and zero outside, by multiplying it with a uniform, unidirectional vector field. All such perturbations leave at least a part of the continuous attractor intact and preserve the invariant manifold, i.e. the parts where the fixed points disappear a slow flow appears.

The parametrized perturbations are characterized as the addition of a random matrix to the ODE.

\subsection{Simple ring attractor}
We further analyzed a simple (non-biological)  ring attractor, defined by the following ODE: \(\dot r = r(1 - r), \ \dot \theta = 0\).
This system has as fixed points the origin and the ring with radius one centered around zero, i.e., \((0, 0)\cup\{(1,\theta)\ |\ \theta \in [0, 2\pi)\}\). We investigate bifurcations caused by parametric and bump perturbations of the ring invariant manifold (see Sec.~\ref{sec:supp:ring_perturbations}), which is bounded and boundaryless.
All perturbations maintain the topological structure of the invariant manifold.

\subsection{Heading direction network}\label{sec:supp:headdirection}
The networks proposed in \citep{noorman2024accurate} are composed of \(N\) heading-tuned neurons whose preferred headings \(\theta_{j}\) uniformly tile heading space, with an angular separation of \(\Delta\theta=\nicefrac{2\pi}{N}\) radians.
These neurons can be arranged topologically in a ring according to their preferred headings, with neurons locally exciting and broadly inhibiting their neighbors.
The total input activity \(h_{j}\) of each neuron is governed by:
\begin{equation}
\tau \dot h_{j} = -h_{j} + \frac{1}{N} \sum_{k} (W_{jk}^{sym} + v_{in} W_{jk}^{asym})\phi(h_{k})+c_{ff},     j = 1,\dots, N,
\end{equation}
with
\begin{equation}
W_{jk}^{sym} = J_{I} + J_{E} \cos(\theta_{j} - \theta_{k}),
\end{equation} where \(J_{E}\) and \(J_{I}\) respectively control the strength of the tuned and untuned components of recurrent connectivity between neurons with preferred headings \(\theta_{j}\) and \(\theta_{j}\)
and \(v_{in}\) is an angular velocity input which the network receives through asymmetric, velocity-modulated weights \begin{equation}W^{asym} =\sin(\theta_{j} - \theta_{k}).\end{equation}
\paragraph{Fixed points}
In the absence of an input (\(v_{in}=0\)) fixed points of the system can be found analytically by considering all submatrices \(W_{\sigma}^{sym}\) for all subsets \(\{\sigma\subset [n]\}\) with\([n]=\{1,\dots, N\}\).
A fixed point \(x^*\) needs to satisfy
\begin{equation}
x^*= -(W_{\sigma}^{sym})^{-1}c_{ff}
\end{equation}and
\begin{equation}
x^*_{i}<0 \text{   for  	 } i\in\sigma.
\end{equation}
We bruteforce check all possible supports to find all fixed points.
We use the eigenvalues of the Jacobian to identify the stability of the found fixed points.
We evaluate the effect of parametric perturbations of a network of size \(N = 6\) with \(J_{E} = 4\) and \(J_{I}=-2.4\) by identifying all bifurcations (Fig.~\ref{fig:bio_rings}A).

\paragraph{Connecting orbits}
We approximate connecting orbits through numerical integration of the ODE intialized in close to the identified saddle points along the unstable manifold.

\newpage
\paragraph{Measure zero co-dimension 1 bifurcations}
Measure zero co-dimension 1 bifurcations of the ring attractor network fall into two types, see Fig.~\ref{fig:meaure_zero_perturbations}.

\begin{figure}[tbhp]
     \centering
    \includegraphics[width=.9\textwidth]{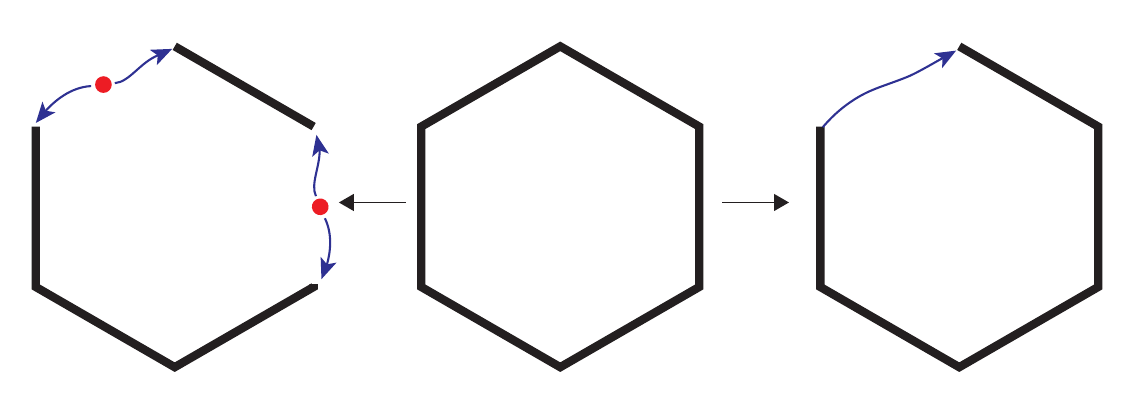}
       \caption{Measure zero co-dimension 1 bifurcations of the ring attractor network \citep{noorman2024accurate}.}\label{fig:meaure_zero_perturbations}
\end{figure}

\paragraph{Measure zero co-dimension \(N\) bifurcation}
The limit cycle is the only bifurcation that we found that can be achieved on only a measure zero set of parameter values around the parameter for the continuous attractor.

\paragraph{Independence of norm of perturbation on bifurcation}
As we can see in Fig.~\ref{fig:noorman_ring_allfxdpnts_allnorm}, the topology of the system is maintained through a range of bifurcation sizes when the bifurcation direction is fixed.
\begin{figure}[tbhp]
     \centering
    \includegraphics[width=\textwidth]{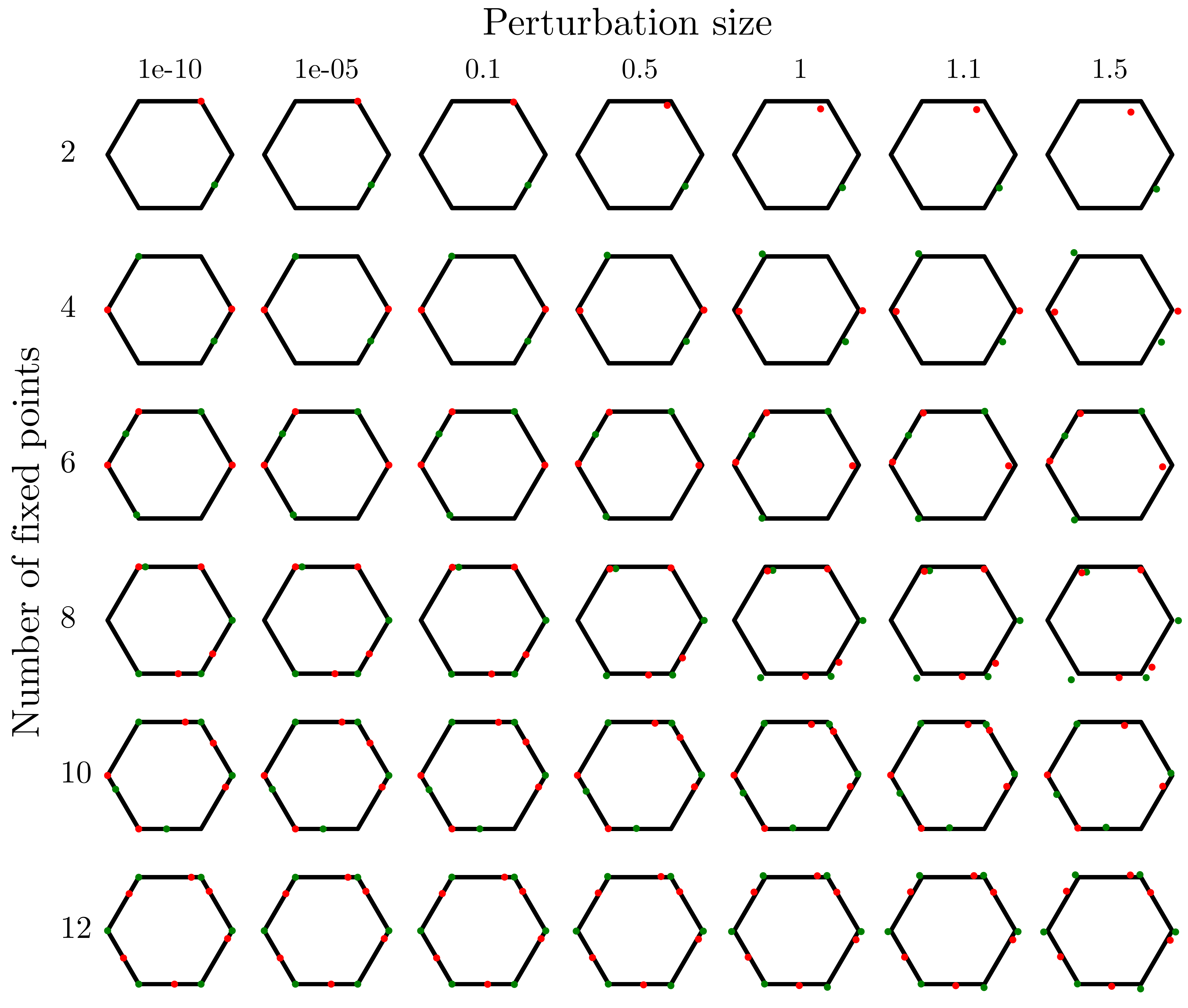}
       \caption{Rows show the bifurcations resulting from perturbations from the matrices with the same direction in Fig.~\ref{fig:bio_rings}A but with different norms (columns). }\label{fig:noorman_ring_allfxdpnts_allnorm}
\end{figure}

\newpage
\subsection{Ring attractor approximation with tanh neurons}\label{sec:supp:goodridge}

We investigated the bifurcations around the approximate ring attractor constructed with a symmetric weight matrix for a tanh network  \citep{compte2000synaptic, seeholzer2017efficient}.
The functional form of \(W\) is the sum of a constant term plus a Gaussian centered at \(\theta_{i} - \theta_{j} = 0\):
\begin{equation}
W(\theta_{i} - \theta_{j}) = J^- + (J^+ - J^-) \exp\left[ -\frac{(\theta_{i} - \theta_{j})^{2}}{2\sigma^{2}} \right],
\end{equation}with the dimensionless parameter \(J^-\) representing the strength of the weak crossdirectional connections, \(J^+\) the strength of the stronger isodirectional connections,
 and \(\sigma\) the width of the connectivity footprint.

 Such ring attractor approximations are similar to the ones in \citep{goodridge2000,samsonovich1997path,redish1996coupled, tsodyks1995associative}. However, some have other nonlinearities, e.g.,  the sigmoid is used in \citep{goodridge2000}.
Another line of related models can be found in \citep{Burak2009}, \citep{couey2013} and \citep{spalla2021continuous}.

\paragraph{Loss of function: Sensitivity of continuous attractors to perturbations}\label{sec:supp:boa}
We will show that there are differences at how well approximations perform at different timescales.
We measure how performance of different models for the representation of an angular variable drop as a function of perturbation size Fig.~\ref{fig:performance} through the memory capacity metric (Sec.\ref{sec:supp:asymbehav}).
For each perturbation size, we sample a low rank (rank 1,2 or 3) random matrix with norm equal to that perturbation size.
We determine the location of the fixed points through the local flow direction criterion as described in Sec.~\ref{sec:fastslowmethod}
This invariant manifold was found to be consistently close the the original invariant ring attractor.
The initial ring had \(2N\) fixed points (\(N\) stable, \(N\) saddle) on this invariant ring manifold.
The memory capacity of this initial configuration is \(N\log(N)\) for the \(2N\) uniformly spaced fixed points.

\begin{figure}[tbhp]
  \centering
  \includegraphics[width=0.5\textwidth]{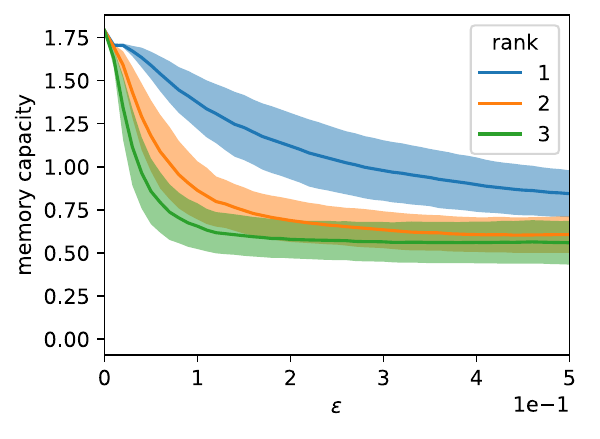}
  \caption{Degradation of performance across perturbation sizes. System behavior at the asymptotic time scales measured through memory capacity. }\label{fig:performance}
\end{figure}

\subsubsection{Mean field approaches}
Another line of models	\citep{miller2006analysis,wu2008dynamics,fung2010,wu2016ca} also relies on connection weights with translational invariance
\begin{equation}
J(x, x') = \frac{A}{\sqrt{2\pi a}} \exp\left[ -\frac{(x - x')^2}{2a^2} \right]
\end{equation}
where \(J (x, x')\) is the neural interaction from \(x'\) to \(x\) and the ensemble of infinite neurons are lined up so that \(x\in (-\infty, \infty)\).

Weak random spatial fluctuations in the connection strength are to be expected when learning the coupling function with Hebbian plasticity.
In the 1D case, it is well known that the presence of such synaptic heterogeneity causes a drift of an input-induced activity pattern to one of a finite number of attractor positions which are randomly spread over representational space \citep{Renart2003,itskov2011short}.
Similarly, in 2D, spatial fluctuations in the connection strengths cause a slight perturbation of the bump shape \citep{wojtak2023robust}.
Frozen stabilisation has been proposed as alternative method to construct a neural networks to self-organise to a state exhibiting (high-dimensional) memory manifolds with arbitrarily large time constants \citep{can2021emergence}.

\citep{kuhn2023information} analyzes an Ising network perturbed with a specially structured noise at the thermodynamic limit.
Although their analysis elegantly shows that the population activity of the perturbed system does not destroy the Fisher information about the input, they do not consider a scenario where the ring attractor is used as a working memory mechanism, it is rather used to encode instantaneous representation.
In contrast, our analysis involves understanding how the working memory content degrades over time due to the dynamics. We are not aware of any mean field analysis that covers this aspect.

\subsection{Ring attractor approximation with a low-rank network}\label{sec:supp:lowrank}
The networks consisted of \(N\) firing rate units with a sigmoid input-output transfer function \citep{mastrogiuseppe2018}:
\begin{equation}
\dot \xi_{i}(t) = - \xi_{i}(t) + \sum_{j = 1}^{N} J_{ij}\phi(x_{j}(t)) + I_{i},
\label{eq:1}
\end{equation}where \(x_{i}(t)\) is the total input current to unit \(i\),
\( J_{ij} = g\chi_{ij} + P_{ij}\) is the connectivity matrix,
\(\phi(x) = \tanh(x)\) is the current-to-rate transfer function, and \(I_{i}\) is the external, feedforward input to unit \(i\).
The random component \(g\chi\) is considered unknown except for its statistics (mean 0, variance \(g^{2}/N\)).
A general structured component of rank \(r\ll N\) can be written as a superposition of \(r\) independent unit-rank terms
\begin{equation}
P_{ij} = \frac{m_{i}^{(1)} n_{j}^{(1)}}{N} + \cdots + \frac{m_{i}^{(r)} n_{j}^{(r)}}{N},
\end{equation} and is in principle characterized by \(2r\) vectors \(m^{(k)}\) and \(n^{(k)}\).

To approximate a ring attractor we can consider structured matrices where the two connectivity pairs \( m^{(1)} \) and \( n^{(1)} \), \( m^{(2)} \) and \( n^{(2)} \) share two different overlap directions, defined by vectors \( y_{1} \) and \( y_{2} \). We set:
\begin{align}
    m^{(1)} &= \sqrt{\Sigma^{2} - r_{1}^{2}} \, x_{1} + r_{1} y_{1}, \\
    m^{(2)} &= \sqrt{\Sigma^{2} - r_{2}^{2}} \, x_{2} + r_{2} y_{2}, \\
    n^{(1)} &= \sqrt{\Sigma^{2} - r_{1}^{2}} \, x_{3} + r_{1} y_{1}, \\
    n^{(2)} &= \sqrt{\Sigma^{2} - r_{2}^{2}} \, x_{4} + r_{2} y_{2},
\end{align}
where \( \Sigma^{2} \) is the variance of the connectivity vectors and \( r_{1}^{2} \) and \( r_{2}^{2} \) quantify the overlaps along the directions \( y_{1} \) and \( y_{2} \).

We keep the following parameters for the analysis:
\(\Sigma = 2\),
\(\rho = 1.9\) and
\(g=0.1\).

\begin{figure}[h]
\centering
\includegraphics[width=\textwidth]{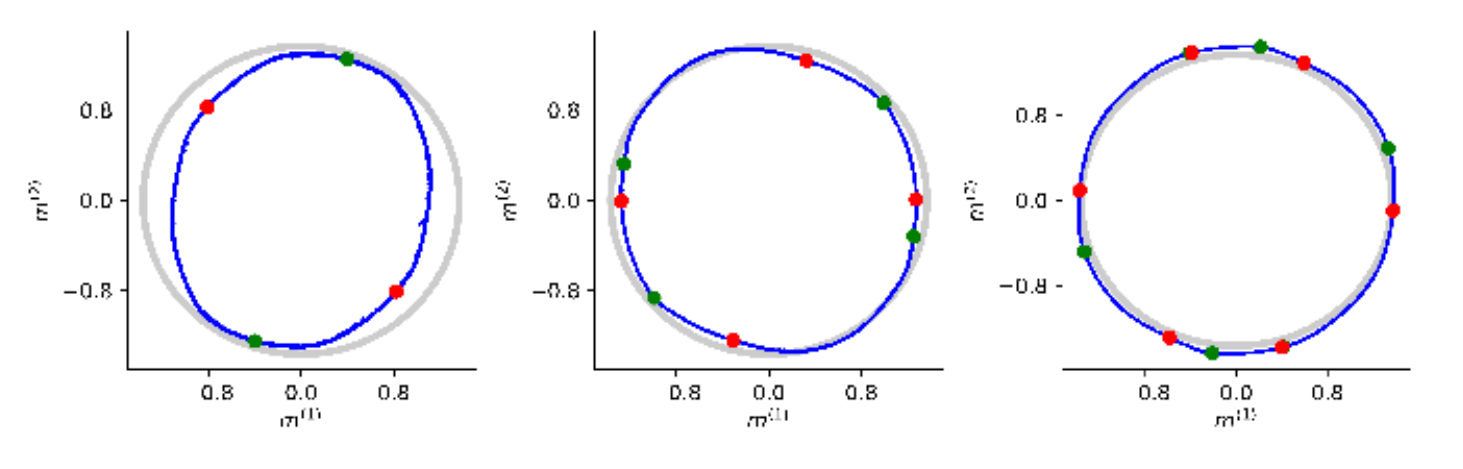}
\caption{Some examples of networks dynamics for sizes \(N = 10, 100, 1000\).}\label{fig:low_rank_examples}
\end{figure}

Our theory explains the phenomenon of the existence of a ring invariant manifold in these low-rank networks as follows. We can think of the finite size realization as a small perturbation to the infinite size network on the reduced dynamics in the \(m^{1}, m^{2}\) plane (independent of the parameter \(g\) for the random part of the matrix)  (Fig.~\ref{fig:bio_rings}B).
For very small networks the ring structure is destroyed and only the plane persists as a slow manifold.

\subsection{Embedding Manifolds with Population-level Jacobians}\label{sec:supp:empj}
We fit three networks with the Embedding Manifolds with Population-level Jacobians (EMPJ) method \citep{pollock2020}.

The networks are RNNs of \(N = 10\) neurons with a \(tanh\) activation function and \(\tau = 0.05\) time constant.
 To use EMPJ we need to specify our desired \(k\) fixed points and \(m\) eigenvectors and eigenvalues per fixed point.

To do this we choose \(k = 3\) points \(\{(x_{i}, y_{i})\}_{i = 1}^{N}\) in the 2D ring with radius \(r = 1\) that are equally spaced from \(0.1\) to \(2 \pi\) radians. Then we specify the vectors orthogonal to the ring in those points (\(V_{o} = \{(x_{i}, y_{i})\}_{i = 1}^{N}\), same as the points) and tangent (\(V_{t} = \{(-y_{i}, x_{i})\}_{i = 1}^{N}\)).
Finally we associate the negative eigenvalue \(-1/\tau\) to the orthogonal eigenvectors and different eigenvalues to the tangent eigenvectors. We use \(-1\), \(0\) or \(1\) depending on whether we want the points to be stable, center or unstable in the direction of the slow ring manifold.

Since these determined dynamics are only in a 2D plane, we use a random linear mapping \(D\colon\reals^{2}\rightarrow\reals^{N}\) to map the fixed points and the eigenvectors to a \(10\)-dimensional space. First, we sample a random from orthogonal matrix \(A\) from the \(O(N)\) Haar distribution and then we take the first two columns of this matrix to be \(D=[A]_{12}\).

EMPJ then returns a network parameters that satisfy these constraints. Furthermore, due to the particular solver used, that regularizes the magnitude of the parameters, we get that all the other eigenvalues not specified are also set to \(-\nicefrac{1}{\tau}\) (for details see \citep{pollock2020}).

\paragraph{Finding fixed points}
As we remark in Sec.~\ref{sec:empjnonrobust}, EMPJ networks are not robust to S-type noise, therefore we cannot apply our analysis of identifying the invariant set through the convergence criterion of numerically integrated trajectories.
We therefore find fixed points through the Newton-Raphson method.
We iteratively solve
\begin{align}
 \vJ(\vx_{i})\vd\vx_{i} = \vx_{i}
\end{align}
where \(\vJ(\vx_{i})\) is the Jacobian of the system at \(\vx_{i}\) and \(\vx_{i + 1}=\vx_{i}-\vd\vx_{i}\).
The iteration stops when \(|\vd\vx|<\delta\) for a tolerance threshold \(\delta\).
We initialize \(\vx_{0}\) on the invariant ring uniformly.
The maximum number of iterations was set to 10.000 and the tolerance level to \(\delta = 10^{-8}\).

\subsubsection{Lack of S-type robustness}\label{sec:empjnonrobust}
We remark that the resulting invariant manifold is not robust to S-type perturbations.
Although the fixed points that are constrained in the fitting procedure are attractive in all directions, some of the points along the ring might not be.
For on-manifold perturbations (in the plane in which the ring is embedded), S-type perturbations do lead to flow towards the invariant ring (Fig.~\ref{fig:empj_onoff_perturbation}A).
However, for (small) off-manifold perturbations, the trajectories typically diverge away from the invariant ring (Fig.~\ref{fig:empj_onoff_perturbation}B).
This indicates that the basin of attration is very small and hence this approximation is not robust to S-type noise.

All of the perturbations were sampled as
\(x(0) = x(0) + \eta\) with \(x(0)\in \operatorname{span}(D)\).
For the on-manifold perturbations \(\eta\in \operatorname{span}(D)\)  and \(\|\eta\|_{2} = 10^{-2}\).
For the off-manifold perturbations \(\eta\in \reals^{10}\) and \(\|\eta\|_{2} = 10^{-5}\).

\begin{figure}[h]
\centering
\includegraphics[width=0.8\textwidth]{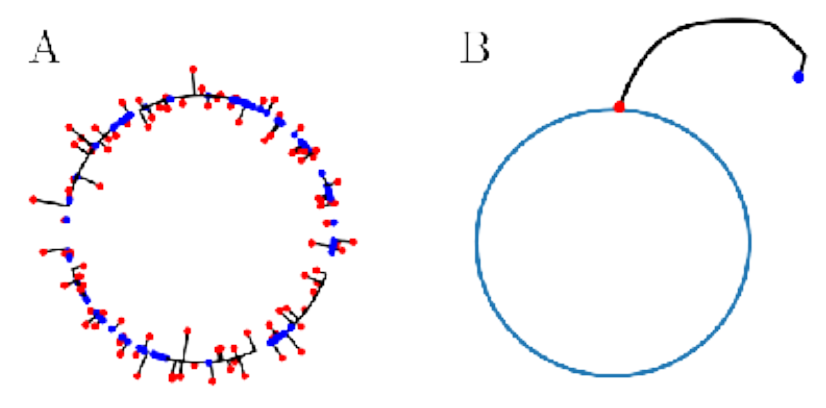}
\caption{Trajectories of the third network in Fig.~\ref{fig:bio_rings}C. Starting point in red, end of trajectory in blue.
(A) On-manifold S-type perturbations from the ring.
(B) An example of an off-manifold  S-type perturbation from the ring.
}\label{fig:empj_onoff_perturbation}
\end{figure}

\subsubsection{Higher dimensional manifolds}
We furthermore fit a torus and a sphere continuous attractor with the EMPJ.
The networks we used have \(N = 100\) neurons.
For finding fixed points with the Newton-Raphson method, we used 1.000  as the maximum number of iterations  and a tolerance level of \(\delta = 10^{-5}\).

\paragraph{Torus}
Figure \ref{fig:torus_empj} illustrates the stability structures of  the approximate torus attractor fitted with EMPJ.
The ratio of the radii of the two rings is adjusted for visualization purposes.
\begin{figure}[h]
\centering
\includegraphics[width=\textwidth]{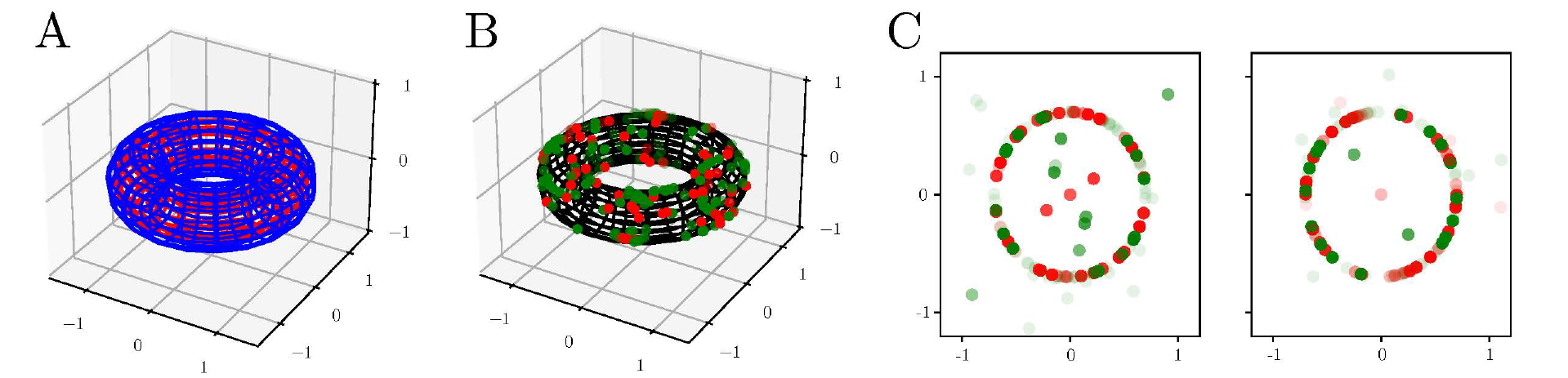}
\caption{The approximate torus attractor.
(A) Points initialized on a grid off of the torus (blue) converge onto the torus attractor (red).
(B) The found fixed points on the approximate torus (green: stable, red: saddle).
(C) The found fixed points projected onto the two 2D subspaces that defined the two rings of the torus.
}\label{fig:torus_empj}
\end{figure}

\paragraph{Sphere}
Fig.~\ref{fig:sphere_empj} illustrates the stability structures of  the approximate sphere  attractor fitted with EMPJ.
Similar to the torus,  points initialized off the sphere converge onto the sphere attractor (Fig.~\ref{fig:sphere_empj}).
\begin{figure}[h]
\centering
\includegraphics[width=\textwidth]{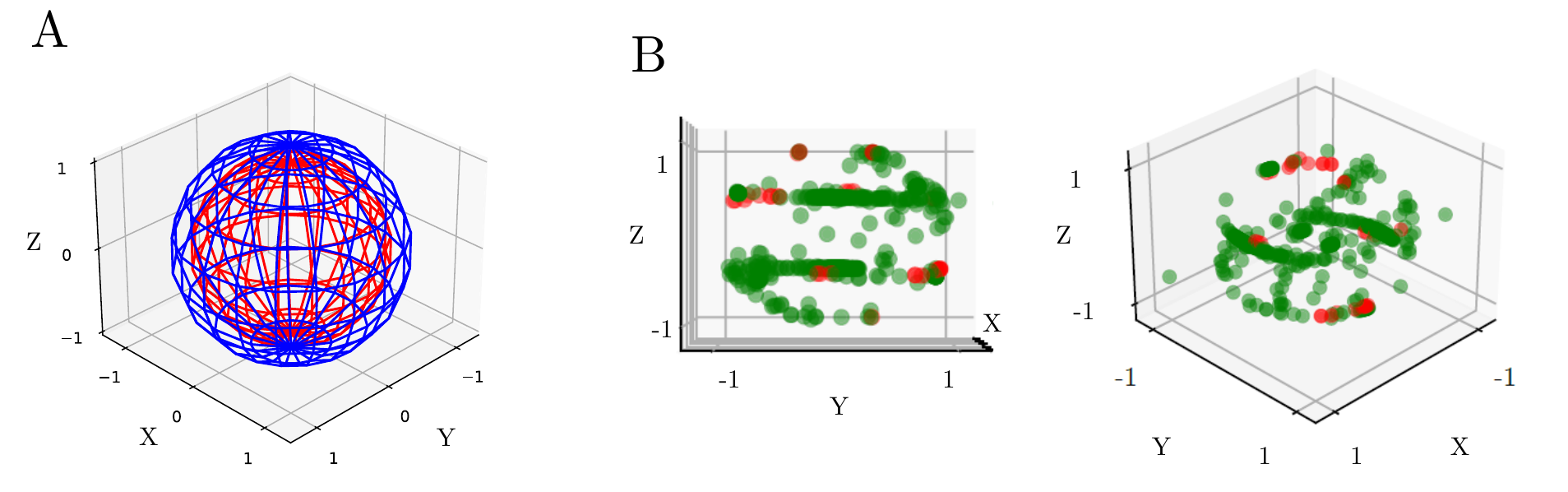}
\caption{The approximate sphere attractor.
(A)  Points initialized on a grid off of the sphere (blue) converge onto the torus attractor (red).
(B) The found fixed points on the approximate sphere (green: stable, red: saddle).
 The two subfigures show rotated versions of the location of the fixed points on the sphere.
}\label{fig:sphere_empj}
\end{figure}

\subsubsection{Other fixed point fitting methods}

Storing multiple continuous attractors has been worked out in \citep{battista2020capacity}, with a similar approximation as our theory suggests. The different stored patterns form a discrete approximation of the continuous map (resulting in quasicontinuous maps, which corrspond to the approximate contintuous attractors in our theory).

\newpage
\section{Persistence Theorem}\label{sec:supp:persistence_extra}
Understanding the long-term behavior of dynamical systems is a fundamental question in various fields, including mathematical biology, ecological modeling, and neuroscience.
 Fenichel's Persistence Theorem provides critical insights into the behavior of such systems, particularly in relation to the stability and persistence of invariant manifolds under perturbations.

Fenichel's Persistence Theorem extends the classical theory of invariant manifolds, offering conditions under which normally hyperbolic invariant manifolds persist despite small perturbations.
This theorem is particularly powerful in analyzing systems where perturbations are inevitable, providing a  framework for understanding how qualitative features of the system's dynamics are maintained.
In this section, we delve into the specifics of Fenichel's Persistence Theorem, outlining its key components, assumptions, and implications.

\paragraph{Invariance}
One of the main concepts in the Persistence Theorem is the notion of an invariant manifold.
Intuitively, this just means that trajectories stay inside the manifold for all time.
Local invariance is a bit more involved and allows for the possibility of leaving the manifold, but only through it's boundary.

\begin{definition}
A set \(\manifold\) is \emph{locally invariant} under the flow from Eq.~\ref{eq:fenichel:flowtangent}-\ref{eq:fenichel:flownormal} if it has neighborhood \(V\) so that no trajectory can leave \(\manifold\) without also leaving \(V\).
In other words, it is locally invariant if for all \(x \in \manifold, \varphi(x, [0, t] \subset V\) implies that  \(\varphi(x,[0, t]) \subset M\), similarly with \([0, t]\) replaced by \([t, 0]\) when \(t < 0\).
\end{definition}

\begin{definition}
A set \( S \) is said to be \emph{forward invariant} under a flow \( \varphi_{t} \) if for every point \( x \) in \( S \) and for all \( t \geq 0 \), the image of \( x \) under the flow at time \( t \), denoted \( \varphi_{t}(x) \), remains in \( S \). This can be written as:

\[ \varphi_{t}(x) \in S \quad \text{for all} \quad x \in S \quad \text{and} \quad t \geq 0. \]
\end{definition}

\paragraph{Small perturbation}
In the context of Fenichel's Persistence Theorem, a ``sufficiently small \( C^{r} \) perturbation of the vector field \( f \)'' refers to perturbations that are small in the \( C^{r} \) norm. The \( C^{r} \) norm measures the size of a function and its derivatives up to order \( r \).

Formally, let \( f: \mathbb{R}^{n} \rightarrow \mathbb{R}^{n} \) be the original smooth vector field and let \( \tilde{f} \) be a perturbed vector field. The perturbation \( \tilde{f} \) is a sufficiently small \( C^{r} \) perturbation if the difference \( \tilde{f} - f \) has a small \( C^{r} \) norm. Mathematically, this can be expressed as:

\[ \left\| \tilde{f} - f \right\|_{C^{r}} < \epsilon, \]
where \( \epsilon \) is a small positive number, and the \( C^{r} \) norm is defined as:
\[ \left\| \tilde{f} - f \right\|_{C^{r}} = \max_{0 \leq k \leq r} \sup_{x \in \mathbb{R}^{n}} \left\| D^{k} (\tilde{f}(x) - f(x)) \right\|. \]
Which gives a constraint on how much each of the \(k\)-th derivatives \( D^{k} \) of the perturbed vector field can differ from the original.

In summary, a perturbation \( \tilde{f} \) is considered sufficiently small in the \( C^{r} \) sense if the difference between \( \tilde{f} \) and \( f \), along with their derivatives up to order \( r \), is uniformly small across the domain.
For \(C^{1}\) perturbations, this defines a \(C^{1}\)  neighborhood of functions, within which the persistence of the manifold is ensured by Fenichel's theorem.
For $r=0$, we get the uniform norm (defined as \(\|f\|)\infty\coloneqq \sup(|f|)\), see also Sec.\ref{sec:supp:ub}).

\paragraph{Closeness}
Finally, to formalize what is meant by that the persistent invariant manifold is very close to the original continuous attractor, we need a notion of distance between the manifolds.
\begin{definition}
Let \((M, d)\) be a metric space. For each pair of non-empty subsets \(X \subset M\) and \(Y \subset M\), the \emph{Hausdorff distance} between \(X\) and \(Y\) is defined as
\[
d_{H}(X, Y) \coloneqq \max \left\{ \sup_{x \in X} d(x, Y), \ \sup_{y \in Y} d(X, y) \right\},
\]
where \(\sup\) represents the supremum operator, \(\inf\) the infimum operator, and where
\[
d(a, B) \coloneqq \inf_{b \in B} d(a, b)
\]
quantifies the distance from a point \(a \in X\) to the subset \(B \subseteq X\).
\end{definition}

In conclusion, the Hausdorff distance provides a rigorous mathematical framework to quantify the ``closeness'' or ``similarity'' between two sets.

\subsection{Fenichel's Persistent Manifold Theorem}

This section will introduce the original Fenichel's Persistent Manifold Theorem, laying the groundwork for understanding how normally hyperbolic invariant manifolds persist under perturbations in systems with distinct time scales.
 By examining this foundational theorem, we can build a deeper understanding of the stability and behavior of complex dynamical systems.

In the study of dynamical systems, particularly those involving multiple time scales, understanding the behavior of solutions near invariant manifolds is crucial.
These manifolds often determine the long-term dynamics of the system and their stability properties.
 The Fenichel's Persistent Manifold Theorem provides a powerful framework for analyzing such systems by demonstrating the persistence of normally hyperbolic invariant manifolds under small perturbations. This theorem is particularly relevant in systems where variables evolve on different time scales--- typically referred to as ``slow'' and ``fast'' dynamics.
 By reformulating the system with a change of time-scale, we can explore how the dynamics on these manifolds behave, especially when perturbed.
Consider the system given by Equations~\ref{eq:fenichel:flowtangent}-\ref{eq:fenichel:flownormal}, which can be rewritten using a rescaled time variable, \(\tau\), to distinguish between the fast and slow dynamics as follows:
\begin{align}\label{eq:slow}
\begin{cases}
\epsilon x' &= g (x, y , \epsilon) \\
y'&= h(x, y, \epsilon)
\end{cases}
\end{align}
where  \('=\frac{d}{d\tau}\) and \(\tau=\nicefrac{t}{\epsilon}\).
 The time scale given by \(\tau\) is said to be fast whereas that for \(t\) is slow, as long as \(\epsilon\neq 0\) the two systems are equivalent.

The functions \(g\) and \(h\)in Eq.~\ref{eq:fenichel:flowtangent}-\ref{eq:fenichel:flownormal} are both assumed to be \(C^{r}\) (for \(r>0\) on a set \(U \times I\) where \(U\subset\reals^{d}\) is open, and \(I\) is an open interval, containing \(0\).

Suppose that the set \(\manifold_{0}\)  is a subset of the set \(\{h(x, y, 0) = 0\}\) and is a compact manifold, possibly with boundary, and is normally hyperbolic relative to~\ref{eq:slow}.

\begin{theorem}[Theorem 1 in \citep{Jones1995}]\label{theorem:originalpersistent}
If \(\epsilon >0\), but sufficiently small, there exists a manifold \(\manifold_{\epsilon}\) that lies within \(\mathcal{O}(\epsilon)\) of \(\manifold_{0}\) and is diffeomorphic to \(\manifold_{0}\).
 Moreover it is locally invariant under the flow of Eq.~\ref{eq:fenichel:flowtangent}-\ref{eq:fenichel:flownormal}, and \(C^{r}\), including in \(\epsilon\), for any \(0<r< \infty\).
 Finally, \(\manifold_{\epsilon}\) has \(\mathcal{O}(\epsilon)\) Hausdroff distance to \(\manifold_{0}\) and has the same smoothness as \(g\) and \(h\).
\end{theorem}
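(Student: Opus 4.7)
The plan is to construct $\manifold_\epsilon$ as a graph over $\manifold_0$ in coordinates adapted to the normally hyperbolic splitting, so that persistence reduces to a fixed-point problem on a Banach space of graph functions. First, using normal hyperbolicity and a tubular neighborhood theorem, I introduce coordinates $(\vy,\vz) \in \manifold_0 \times \reals^{d-l}$ with $\manifold_0 = \{\vz = 0\}$, so that the normal Jacobian $A(\vy) := \partial_\vz \vh(\vy,0,0)$ has spectrum uniformly contained in the strict left half-plane, bounded above by some $-\alpha < 0$. Because $\manifold_0$ is only compact (possibly with boundary), I multiply the vector field by a smooth cutoff supported in a narrow neighborhood of $\manifold_0$ and extend trivially outside; this yields a globally defined modified system that coincides with the original inside the tube, and local invariance for the original flow is recovered at the end from the $\mathcal{O}(\epsilon)$ size of the graph.

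Second, I look for $\manifold_\epsilon = \{\vz = \sigma_\epsilon(\vy)\}$ for some $\sigma_\epsilon : \manifold_0 \to \reals^{d-l}$ of size $\mathcal{O}(\epsilon)$. The invariance condition becomes
\begin{equation*}
\epsilon\, D\sigma_\epsilon(\vy)\cdot \vg(\vy,\sigma_\epsilon(\vy),\epsilon) = \vh(\vy,\sigma_\epsilon(\vy),\epsilon).
\end{equation*}
Following Lyapunov--Perron, I characterize $\sigma_\epsilon(\vy_0)$ as the unique $\vz(0)$ such that the trajectory through $(\vy_0,\vz(0))$ remains in the tube for all $t \geq 0$. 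Variation of parameters on the $\vz$-equation, with propagator $\Phi(0,t;\vy(\cdot))$ generated by $A(\vy(t))$, converts invariance into a nonlinear integral equation
\begin{equation*}
\sigma(\vy_0) = -\int_0^\infty \Phi(0,t;\vy(\cdot))\,\bigl[\vh(\vy(t),\sigma(\vy(t)),\epsilon) - A(\vy(t))\sigma(\vy(t))\bigr]\,dt,
\end{equation*}
where $\vy(\cdot)$ is the slow trajectory driven by $\sigma$ itself. Its right-hand side defines an operator $T_\epsilon$ on the closed ball of radius $C\epsilon$ inside the Banach space of $C^1$ maps $\manifold_0 \to \reals^{d-l}$. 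Since $\|\Phi(0,t)\| \lesssim e^{-\alpha t}$ while the integrand is either quadratically small in $\sigma$ or linear with a factor $\epsilon$ coming from the slow drift of $\vy(\cdot)$, $T_\epsilon$ is a contraction for $\epsilon$ small. Its unique fixed point $\sigma_\epsilon$ produces a locally invariant $\manifold_\epsilon$ diffeomorphic to $\manifold_0$ with $\uniformNorm{\sigma_\epsilon} = \mathcal{O}(\epsilon)$, yielding the claimed Hausdorff bound.

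Third, to upgrade regularity to $C^r$ jointly in $\vy$ and in $\epsilon$, I invoke the fiber contraction principle: one formally differentiates the fixed-point equation up to order $r$ and shows that the lift of $T_\epsilon$ to the bundle carrying $(\sigma, D\sigma,\ldots,D^r\sigma,\partial_\epsilon\sigma,\ldots)$ is itself a contraction over the already-contracting base map. The spectral gap $\alpha$ dominates any fixed finite power of the slow-flow rate, so the linearized operator at the $r$-th jet level still has norm strictly less than one for $\epsilon$ small enough. Continuous dependence of $T_\epsilon$ on $\epsilon$ then transfers, by the uniform contraction principle, to $C^r$ dependence of $\sigma_\epsilon$ on $\epsilon$.

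The hard part will be the fiber contraction argument delivering joint $C^r$ smoothness. It requires carefully chosen weighted $C^k$ norms that simultaneously absorb the growth of derivatives of $\Phi$ and of $\vg,\vh$, so that every level of the jet tower contracts; the admissible $\epsilon$ shrinks as $r$ grows, which is precisely why the statement is for arbitrary \emph{finite} $r$ rather than $C^\infty$. Secondarily, the boundary of $\manifold_0$ forces the result to be only \emph{local} invariance, and one must check that the cutoff extension is arranged so that the graph produced for the modified flow genuinely corresponds to orbits of the unmodified flow inside the tube --- a standard but notationally delicate point where compactness of $\manifold_0$ is essential.
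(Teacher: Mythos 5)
First, note that the paper does not prove this statement: it is quoted verbatim as Theorem 1 of \citet{Jones1995} (Fenichel's persistence theorem) and used as an imported result, so there is no in-paper argument to compare against. Your Lyapunov--Perron strategy (an integral-equation fixed point for the graph, followed by a fiber-contraction argument for $C^r$ regularity) is a legitimate and standard route, though it differs from the Hadamard graph-transform approach that Jones and Fenichel actually follow; the cutoff/extension step, the $\mathcal{O}(\epsilon)$ bound on the graph giving the Hausdorff estimate, and the observation that the admissible $\epsilon$ shrinks as $r$ grows (which is exactly why the conclusion is $C^r$ for finite $r$ rather than $C^\infty$) are all correctly placed.

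There is, however, one concrete error in the core construction. For an \emph{attracting} normally hyperbolic manifold (spectrum of $A(\vy)$ in the strict left half-plane, as you assume), the set of initial conditions whose forward orbit remains in the tube for all $t \geq 0$ is an open neighborhood of $\manifold_{\epsilon}$ --- the entire local basin of attraction --- so ``the unique $\vz(0)$ such that the trajectory remains in the tube for all $t \geq 0$'' does not single out a graph. The correct characterization is boundedness in \emph{backward} time, and the Lyapunov--Perron integral must accordingly run over $(-\infty,0]$ with the forward-decaying propagator satisfying $\|\Phi(0,s)\| \lesssim e^{\alpha s}$ for $s \leq 0$; as written, your $\int_0^\infty \Phi(0,t)(\cdots)\,dt$ pairs a forward-time integral with $\Phi(0,t)=\Phi(t,0)^{-1}$, which \emph{grows} like $e^{\alpha t}$ when the normal flow is stable, so the operator $T_\epsilon$ is not even well defined. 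This is fixable by reversing time throughout. Separately, the theorem as stated only assumes normal hyperbolicity, which permits mixed stable/unstable normal spectrum; handling that case requires splitting the normal bundle and using a bi-infinite characterization, whereas your argument covers only the purely attracting case relevant to continuous attractors.
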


If the invariant manifold \(\manifold_{0}\) is attractive, then the only way trajectories can escape the invariant set \(\manifold_{\epsilon}\) after perturbation, is in negative time through the boundaries.
This guarantees that the persistent manifold \(\manifold_{\epsilon}\) is still attractive.

\subsection{Fundamental limitations of the topology of bifurcated continuous attractors}

\subsubsection{Flow on a line}
\begin{theorem}[One Dimensional Equivalence]
Two flows and in are topologically equivalent if and only if their equilibria, ordered on the line, can be put into one-to-one correspondence and have the same topological type (sink, source or semistable).
\end{theorem}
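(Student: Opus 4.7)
The plan is to prove the biconditional by handling the two directions separately, using only the order structure of the line and the elementary fact that every self-homeomorphism of $\reals$ is strictly monotonic.

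For necessity, I would start from the definition of a topological equivalence as a homeomorphism $h\colon\reals\to\reals$ sending orbits to orbits and preserving the orientation of time along each orbit. Monotonicity of $h$ then induces either an order-preserving or a globally order-reversing bijection of $\reals$. Because equilibria are exactly the orbits that reduce to a single point, $h$ must carry the equilibrium set of one flow bijectively onto that of the other while respecting their linear ordering. Preservation of topological type follows by restricting $h$ to a neighborhood of an equilibrium: the restriction is itself a local topological conjugacy, and sink/source/semistable character is encoded purely in the signs of the flow on the two adjacent intervals, which is an invariant of any such conjugacy.

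For sufficiency, my strategy is to construct $h$ piecewise. Label the matched equilibria $p_1<p_2<\cdots$ of $\varphi$ and $q_1<q_2<\cdots$ of $\psi$ under the assumed type-preserving correspondence $p_i\leftrightarrow q_i$, set $h(p_i)=q_i$, and work separately on each maximal open interval $I=(p_i,p_{i+1})$ between consecutive equilibria. On such an interval the vector field has constant nonzero sign, so for any $a\in I$ the map $t\mapsto\varphi_t(a)$ is a monotone homeomorphism $\reals\to I$. The matching of types forces the corresponding interval $J=(q_i,q_{i+1})$ to carry flow in the same direction, so picking any $b\in J$ and declaring $h(\varphi_t(a))=\psi_t(b)$ produces an orbit- and time-preserving homeomorphism $I\to J$, well-defined up to a time shift in the choice of $a,b$. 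The same recipe handles the two unbounded outer intervals $(-\infty,p_1)$ and $(p_N,\infty)$, with trajectories parametrizing them as half-lines or full lines depending on the endpoint types.

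The hard part will be gluing these pieces into a single homeomorphism of $\reals$, which reduces to checking continuity of $h$ at each $p_i$. The main lemma I would need is that $\varphi_t(a)$ converges to the endpoints of $I$ as $t\to\pm\infty$, with the direction fixed by the sign of the field; this follows because the trajectory is monotone in $I$ and its only possible limit points are the endpoints. The analogous statement for $\psi$ then forces $h(x)\to q_i$ as $x\to p_i^+$, and combined with the left-hand limit from the neighboring interval this gives continuity at $p_i$. The subtle bookkeeping is that ``semistable'' silently carries two sub-types (left-attracting/right-repelling versus the reverse), and one must verify that matching these sub-types is exactly what makes the flow signs on the two sides of $p_i$ agree with those on the two sides of $q_i$; any mismatch would manifest as a jump discontinuity in $h$. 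Once this is handled the inverse is built by the symmetric recipe, so $h$ is a topological equivalence.
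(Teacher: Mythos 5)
The paper states this theorem without proof---it is offered as a classical background fact in the supplementary section on flows on a line---so there is no in-paper argument to compare against. Your proposal is the standard proof of this classical result and its overall architecture (necessity via monotonicity of the conjugating homeomorphism and invariance of the local sign pattern; sufficiency via interval-by-interval construction and gluing at equilibria) is sound. Two points, however, need repair before it is complete.

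First, the recipe $h(\varphi_t(a))=\psi_t(b)$ is a \emph{conjugacy} on each piece, which is more than topological equivalence requires and can actually fail on the two unbounded outer intervals: whether the orbit map $t\mapsto\varphi_t(a)$ is onto a half-line of times or all of $\reals$ is governed by finite-time blow-up (integrability of $1/f$ near infinity), not by ``the endpoint types'' as you assert, and if one flow is complete on its outer interval while the other blows up in finite time, your formula is not a well-defined bijection between the intervals. Since each open interval between consecutive equilibria is a single orbit, the fix is to drop the time-parametrized recipe there and simply take \emph{any} increasing (direction-respecting) homeomorphism between corresponding intervals, or to reparametrize the vector fields to be complete first. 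Second, you correctly flag the semistable sub-type issue but leave it unresolved, and it is in fact the crux of the ``if'' direction as stated with only three coarse types: you must show that matching coarse types forces matching sign patterns on adjacent intervals. The resolution is that the coarse type sequence determines the full sign sequence uniquely whenever at least one equilibrium is a sink or a source (the sign propagates through the shared endpoints), while in the remaining case---all equilibria semistable---exactly two globally opposite sign sequences are possible and an orientation-reversing homeomorphism (a reflection) reconciles them; the same reflection device handles an order-reversing correspondence of equilibria in general. With these two repairs the argument is complete.
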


\subsubsection{Flow on a ring}
The flow on a ring can be described by the differential equation\citep{hirsch2013differential}:
\begin{equation}
\frac{d\theta}{dt} = f(\theta),
\end{equation}with \(\theta\in[0, 2\pi].\)

The simplest type of flow on a ring is the uniform circular flow, where each point moves with a constant angular velocity \(f(\theta)=\omega\).
If \(|f(\theta)|>0\) for all \(\theta\), there is a circular flow with a variable speed.

There is a fixed point for each unique \(\theta\) for which \(f(\theta)=0\).
The nature of the flow around these fixed points can be classified into:
\begin{itemize}
\item\textbf{Stable fixed points}: Points where the flow tends to as time progresses (\(f(\theta)>0\) for all \( \theta\in[0,\theta]\cap V\) and \(f(\theta)<0\) for all \( \theta\in[\theta, 2\pi]\cap V\) for some open ball \(V\) around \(\theta\)).
\item\textbf{Unstable fixed points}: Points from which the flow diverges (\(f(\theta)<0\) for all \( \theta\in[0,\theta]\cap V\) and \(f(\theta)>0\) for all \( \theta\in[\theta, 2\pi]\cap V\) for some open ball \(V\) around \(\theta\)).
\end{itemize}

For a detailed discussion of how bifurcations of a continuous attractors depend on the symmetry of the continuous attractor, see for example The Equivariant Branching Lemma (Lemma 1.31 in \citep{golubitsky2002symmetry}).
If the solutions are symmetric under rotations (a circular symmetry such as for a ring attractor).
As you change the bifurcation parameter, you find that new solutions appear that also respect this rotational symmetry.
The lemma tells you that these new solutions will align with specific symmetries (like different rotation angles), which are described by the irreducible representations of the symmetry group.

\subsection{Consequences to system identification}
Fenichel's Persistence Theorem has several significant implications for modelling and system identification in dynamical systems.
Because the theorem provides a guarantee that small perturbations in the system do not lead to significant changes in the qualitative behavior of the system,
we can be (slightly) wrong for example about the exact nonlinearity of a neuron's transform function.
For example, if neurons are only approximately ReLU the theory developed in \citep{biswas2022geometric} still holds (at the behaviorally relevant timescales).
More generally, when reconstructing computational system dynamics to understand how cognitive functions are implemented in the brain, our theory shows that small deviations in the identified system can still lead to behaviorally equivalent models for neural computation \citep{durstewitz2023reconstructing}.

\newpage
\section{Near Perfect Analog Memory Systems are close to Continuous Attractors}
We will give some clarifications and proofs of the claims on systems near perfect analog memory systems.

\subsection{Revival of the continuous attractor from a slow manifold}\label{sec:supp:proofprop1}

We  provide a proof of Prop.~\ref{prop:revival} which is dependent on the \(\epsilon\)-Neighborhood Theorem\footnote{Not to be confused with the perturbation parameter \(\epsilon\).}, which we state here.

\begin{theorem}[\(\epsilon\)-Neighborhood Theorem\citep{guillemin2010differential}]
Let \( Y \subseteq \mathbb{R}^{n} \) be a smooth, compact manifold. Then for a small enough \(\epsilon > 0\),
each \( w \in Y_{\epsilon} \coloneqq \{ w \in \mathbb{R}^{n} \mid \exists y \in Y : \|y - w\| < \epsilon \} \)
has a unique closest point in \( Y \), denoted \( \pi(w) \).
Moreover, \( \pi: Y_{\epsilon} \to Y \) is a submersion with \( \pi|_{Y} = \text{id} \).
\end{theorem}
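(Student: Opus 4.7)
The plan is to prove this as the standard tubular neighborhood theorem: construct a diffeomorphism from a thin neighborhood of the zero section of the normal bundle of $Y$ onto $Y_\epsilon$, and then read off $\pi$ from the inverse of this diffeomorphism.

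First, I would set up the normal bundle $\nu(Y) = \{(y, v) \in Y \times \mathbb{R}^n : v \perp T_y Y\}$, an $n$-dimensional smooth submanifold of $Y \times \mathbb{R}^n$, and the endpoint map $E : \nu(Y) \to \mathbb{R}^n$, $E(y, v) = y + v$. At any point $(y_0, 0)$ on the zero section, the differential $dE_{(y_0, 0)}$ acts as the identity under the orthogonal splitting $\mathbb{R}^n = T_{y_0} Y \oplus (T_{y_0} Y)^\perp$, so the inverse function theorem produces an open neighborhood of $(y_0, 0)$ on which $E$ is a local diffeomorphism.

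The main obstacle is upgrading this pointwise statement to a \emph{uniform} tubular neighborhood of positive thickness around all of $Y$ simultaneously. I would argue by contradiction: if no uniform $\epsilon > 0$ exists, one can extract sequences $(y_k, v_k) \neq (y_k', v_k')$ in $\nu(Y)$ with $\|v_k\|, \|v_k'\| \to 0$ and $E(y_k, v_k) = E(y_k', v_k')$. Compactness of $Y$ allows passing to a subsequence with $y_k \to y^*$ and $y_k' \to {y^*}'$; the endpoint equality together with $v_k, v_k' \to 0$ forces $y^* = {y^*}'$, contradicting injectivity of $E$ on the neighborhood of $(y^*, 0)$ provided by the inverse function theorem. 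This yields $\epsilon > 0$ such that $E$ restricted to the open neighborhood $\nu_\epsilon(Y) := \{(y, v) \in \nu(Y) : \|v\| < \epsilon\}$ is an injective local diffeomorphism, hence a diffeomorphism onto its open image $V \subseteq \mathbb{R}^n$.

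Next I would identify $V$ with $Y_\epsilon$ and establish the closest-point property. The inclusion $E(\nu_\epsilon(Y)) \subseteq Y_\epsilon$ is immediate from $\|E(y,v) - y\| = \|v\| < \epsilon$. For the reverse inclusion and uniqueness, given $w \in Y_\epsilon$, compactness of $Y$ produces a point $y^* \in Y$ minimizing $\|w - y\|$; the first-order condition at this interior minimum forces $w - y^* \perp T_{y^*} Y$, so $(y^*, w - y^*) \in \nu(Y)$, and $\|w - y^*\| \leq d(w, Y) < \epsilon$ places it inside $\nu_\epsilon(Y)$. Injectivity of $E$ on $\nu_\epsilon(Y)$ then rules out any second minimizer, giving a unique nearest point and defining $\pi(w) := y^*$.

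Finally, writing $\pi = \mathrm{pr}_1 \circ E^{-1}$ where $\mathrm{pr}_1 : \nu(Y) \to Y$ is the bundle projection, smoothness and the submersion property follow immediately: $\mathrm{pr}_1$ is a smooth submersion (the fiber over $y$ is the linear space $(T_y Y)^\perp$), $E^{-1}$ is a diffeomorphism, and the composition of a diffeomorphism with a submersion is a submersion. The equality $\pi|_Y = \mathrm{id}$ is read off from $E(y, 0) = y$. The only delicate step is the compactness/uniform-injectivity argument; the rest reduces to the inverse function theorem plus first-order optimality, and compactness of $Y$ is genuinely essential, since in the noncompact case one can only guarantee a neighborhood of variable thickness $\epsilon(y)$.
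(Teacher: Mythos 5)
The paper does not prove this statement at all: it is imported verbatim from Guillemin--Pollack as a known auxiliary result, used only as an ingredient in the proof of Proposition~1. Your argument is the standard tubular-neighborhood proof of exactly that textbook theorem --- normal bundle, endpoint map $E(y,v)=y+v$, inverse function theorem at the zero section, a compactness/contradiction argument to get a uniform $\epsilon$, and first-order optimality to identify $E(\nu_\epsilon(Y))$ with $Y_\epsilon$ and extract the unique nearest point --- and it is correct, including the identification $\pi=\mathrm{pr}_1\circ E^{-1}$ that yields the submersion property and $\pi|_Y=\mathrm{id}$. The only implicit hypothesis worth flagging is that $Y$ is boundaryless (you invoke an \emph{interior} minimum to get $w-y^*\perp T_{y^*}Y$); that is the setting of the cited theorem and of every use in the paper (rings, tori, spheres), so nothing is lost.
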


We now prove  Prop.~\ref{prop:revival}.
\begin{proof}It is sufficient to show that there is a perturbation \(\vp\) that has zero flow off of \(\manifold_{\epsilon}\) but for which \(\vf+\vp = 0\) on \(\manifold_{\epsilon}\) for the full system \(\vf\)  as defined in Eq.~\ref{eq:perturbation:nonparam}.
Define
\[\vp(\vx) = \int_{\manifold_{\epsilon}} - \delta(\vx-\vy)\vf(\vy)\vd\vy,\]
 with the Dirac delta function \(\delta(\vx)=\delta(x_{1})\delta(x_{2})\dots\delta(x_{d})\).
It is then easy to check that \(\vf(\vx)+\vp(\vx)=0\) for all \(\vx\in\manifold_{\epsilon}\) and \(\vf(\vx)+\vp(\vx)=\vf(\vx)\) for all \(\vx\not\in\manifold_{\epsilon}\).
Hence, we have a continuous attractor at \(\manifold_{\epsilon}\).
If smoothness is important, we can construct the following perturbation.
From the \(\epsilon\)-Neighborhood Theorem\citep{folland1999real}, we get that  there exists a smooth positive function \(\delta\colon \manifold_{\epsilon} \rightarrow \reals^+\),
 such that if we let \(N_{\delta}\) be the
\(\delta\)-neighborhood of \(\manifold_{\epsilon}\),
\[
M_{\epsilon}\coloneqq \{y  \in \reals^{n} : |y - x| < \delta(x) \text{ for some }  x \in M_{\epsilon}\},
\]
then each \(y\in N_{\delta}\) possesses a unique closest point \(\pi_{\delta}(y)\) in \(\manifold_{\epsilon}\) with the map \(\pi_{\delta}\colon \manifold_{\delta} \rightarrow \manifold_{\epsilon}\) being a submersion.

We can then define a bump function
\begin{equation}
\psi(y) =
\begin{cases}
\exp\left(-\frac{1}{1-(\pi_{\delta}(y)-y)^{2}}\right) &\text{  if  } y\in(-\delta-\pi_{\delta}(y), \delta-\pi_{\delta}(y))\\
0 & \text{  otherwise }
\end{cases}
\end{equation}
Then the perturbation
\[
\vp(\vy) = - \psi(\vy)\vf(\pi_{\delta}(\vy))
\]
is smooth and creates a continuous attractor at \(\manifold_{\epsilon}\).
\end{proof}

\subsection{Output mapping} This paper focuses on a linear output mapping for simplicity.
Errors can be minimized with a nonlinear mapping, if there is an output that is mapped off of the output manifold, we can always adjust the output mapping to correct for this, if the space of output mappings is general enough.
However, this can only be applied for errors off of the output manifold.
If there is memory degradation along the output manifold, it is not possible to choose another mapping that corrects for this error.
Therefore, we choose a linear output mapping for our analysis.
In some cases, linear output mappings are found to support neural computation, for example for motion direction-discrimination \citep{yates2020simple}.

\subsection{Approximate solutions  to an analog working memory problem}\label{sec:condition_clarifications}
We will now discuss the conditions for when approximate solutions to an analog working memory problem are near a continuous attractor.
We consider approximate solutions to an analog working memory problem to be systems of the form~\ref{eq:RNN:discrete} or~\ref{eq:RNN:continuous} (in both cases following a linear decoder), which have a small memory error over time in output space.

\subsubsection{Robustness}
Noise, practically defined as unpredictable components of the system's behavior, comes from many sources.
The concept of S- and D-type noise is based on \citep{Park2023a}.

\paragraph{S-type robustness}\label{sec:stype}
S-type noise encapsulates reversible changes in the neural state such that the deterministic part of the dynamics itself remains unchanged.
Neural dynamics must be robust to perturbations and stimuli that push the neuronal activity away from the continuous attractor \citep{durstewitz2000neurocomputational}.

\paragraph{D-type robustness}\label{sec:persitencempliesnh}
D-type noise, in the space of recurrent network dynamics parameterized by the synaptic weights.
This corresponds to slight changes to the ODE, i.e. perturbations.
Previously, it robustness to D-type noise was considered to correspond to structurally stability \citep{Park2023a}.

Here we shortly discuss what we consider to be a necessary and sufficient condition for D-type robustness.
We can use the concept of Lipschitz persistence to define D-type robustness.
Ma\~{n}e showed that if an invariant manifold is Lipschitz persistent then it must be normally hyperbolic \citep{mane1978persistent}.
To understand the concept of Lipschitz persistence, we need to define the Lipschitz section and Lipschitz constant.

\begin{definition}
Let \(M\) be a \(C^{\infty}\) boundaryless manifold and \(V \subset M\) a \(C^{1}\) compact boundaryless submanifold.
Assume that \(M\) is a submanifold of \(\mathbb{R}^{n}\). Let \(NV\) be a \(C^{1}\) subbundle of \(TM|_{V}\) satisfying \(TV \oplus NV = TM|_{V}\).
If \(\eta\) is a section of \(NV\), define the Lipschitz constant of \(\eta\) by
\[
\mathrm{Lip}(\eta) = \sup \left\{ \frac{\|\eta(x) - \eta(y)\|}{\|x - y\|} \mid x, y \in V, x \neq y \right\}.
\]
We say that \(\eta\) is a \emph{Lipschitz section} if \(\mathrm{Lip}(\eta) < +\infty\).
Let \(\Gamma_{\mathcal{L}}(NV)\) be the space of Lipschitz sections of \(NV\) endowed with the norm
\[
\|\eta\|_{\mathcal{L}} = \sup \left\{ \|\eta(x)\| \mid x \in V \right\} + \mathrm{Lip}(\eta).
\]
Let \(\mathrm{Diff}^{1}(M)\) be the space of \(C^{1}\) diffeomorphisms with the topology of the \(C^{1}\) convergence on compact subsets.
\end{definition}

\begin{definition}
 Let \(f \in \mathrm{Diff}^{1}(M)\).
 We say that \(V\) is a Lipschitz persistent invariant manifold of \(f\) if there exists a neighborhood \(U\) of \(V\) such that for all \(\delta > 0\) there exists a neighborhood \(\mathcal{U}_{\delta}\) of \(f\) such that if \(g \in \mathcal{U}_{\delta}\) there exists \(\eta \in \Gamma_{\mathcal{L}}(NV)\) with \(\|\eta\|_{\mathcal{L}} < \delta\) satisfying \(V_{g} = \mathrm{graph}(\eta)\), where \(\mathrm{graph}(\eta) = \{\exp_{x}(\eta(x)) \mid x \in V\}\), \(V_{g} = g(U)\).

Observe that this definition implies \(V_{f} = V\), hence \(f(V) = V\). Moreover, the Lipschitz persistence is independent of the bundle \(NV\).
\end{definition}

For a flow \(\varphi_{t}\) (coming from the solutions of an ODE) we can fix \(t=\tau\in\reals_{>0}\) so that we get a homeomorphism \(\varphi_{\tau}\).
This allows us to apply this result to apply to our case.

In the case where \(V\) is a point, then it is a hyperbolic fixed point and the persistence follows trivially from the implicit function theorem.

\begin{remark}
It is sufficient to take a persistent manifold that is uniformly locally maximal.
If \(V\) is persistent and uniformly locally maximal, then \(V\) has to be normally hyperbolic \citep{mane1978persistent}.

\begin{definition}[Uniformly locally maximal]
 There exist neighborhoods \(U\) of \(V\) in \(M\) and \(U\) of \(f\) in the space \(\mathrm{Diff}^{1}(M)\) of \(C^{1}\)-diffeomorphisms of \(M\), such that for any \(g \in U\), \(N_{g} = \bigcap_{k \in \mathbb{Z}} g^{k}(U)\) is a \(C^{1}\)-submanifold close to \(V\), with \(N_{f} = N\).
 The latter property implies the uniqueness of the invariant submanifold.
\end{definition}
\end{remark}

\subsection{Near Perfect Analog Memory Systems are close to Decomposable Systems with a Continuous Attractor}\label{sec:near_ca}

We now prove a more general statement about the kind of systems that are close to perfect analog memory systems.
The theory guarantees that a system that satisfies conditions~\ref{converse:bijection}-\ref{converse:Dtype}  will have a continuous attractor in the following sense.
For such a system there exists a decomposition such that the system can be effectively decomposed into a continuous attractor (attractive invariant manifold with zero flow) and a component on which a (possibly ``fast'' flow) can exist buy which get quenched by the out put projection.
These additional dynamics orthogonal to decoding have been observed for motor movement preparation \citep{kaufman2014null, churchland2024preparatory}.

For this part of the theory, we need to consider the output manifold \(\manifold_{\text{output}}\), the manifold on which we determine the error over time as in Sec.~\ref{sec:supp:ub}.
For this section, we will consider the output mapping to be a smooth (possibly nonlinear) mapping \(g\colon X\rightarrow Y\) between the neural state space \(X=\reals^{d_{X}}\) and the output space \(Y=\reals^{d_{Y}}\).
For a circular variable this will be the ring \(S^{1}\).

The construction of a perturbation relies on finding the necessary minimal structure in the invariant manifold for which we can guarantee closeness to a continuous attractor.
Therefore, first of all, we need this closeness in terms of the geometry of the manifold, which we guarantee through the notion of a fibration of the output mapping.
Second, we need to guarantee that the flow is bounded in a sense so that our perturbation is also bounded by this amount.
We will characterize this by the vector field normal to the fibers of the fiber bundle.
A fiber bundle is a mathematical structure that allows us to study spaces that are locally like a product space but globally may have a different structure.
So we first state the definition of a fiber bundle and a trivial fibration.

\begin{definition}
A \emph{fiber bundle} is a structure \((E, B, \pi, F)\) where:
\begin{itemize}
    \item \(E\) is the \emph{total space},
    \item \(B\) is the \emph{base space},
    \item \(\pi: E \to B\) is a continuous surjection called the \emph{projection map}, and
    \item \(F\) is a topological space called the \emph{fiber}.
\end{itemize}
This structure must satisfy the local triviality condition: for each \(b \in B\), there exists an open neighborhood \(U\) of \(b\) such that there is a homeomorphism
\[
\varphi: \pi^{-1}(U_{b}) \to U_{b} \times F_{b}
\]
that commutes with the projection onto \(U\), meaning that the following diagram commutes:
\begin{equation*}
    \begin{tikzcd}
        \pi^{-1}(U_{b}) \arrow[rightarrow]{r}{\varphi}	\arrow[swap]{d}{\pi} & U_{b} \times F_{b}	\arrow[rightarrow]{d}{\operatorname{pr}_{1}} 	\\
        U_{b} \arrow[rightarrow]{r}{id}  &	 	U_{b}
      \end{tikzcd}
\end{equation*}where \(\text{pr}_{1}: U_{b} \times F_{b} \to U_{b}\) is the projection onto the first factor.
\end{definition}

So, around every point in the base space, you can ``zoom in'' and see that the bundle looks like a straightforward product of the base space and the fiber and \(\varphi\) is providing this trivialization.

\begin{definition}
We say that a projection map \(\pi:E \rightarrow  B\) is \emph{locally trivial} if each point \(b \in B\) is contained in an open set \(U\) having the property that \(E_{U}\coloneqq \pi^{-1}(U)\) is trivial over \(U\).
\end{definition}

We will rely on the concept of a submersion to characterize how the output space needs to relate to our invariant manifold.
\begin{definition}[Submersion]
Let \(M\) and \(N\) be differentiable manifolds and \(f\colon M\to N\) be a differentiable map between them.
 The map \(f\) is a \emph{submersion} at a point \(p\in M\) if its differential \(Df_{p}\colon T_{p}M\to T_{f(p)}N\) is a surjective linear map.
\end{definition}

This allows us to characterize what kind of structure the invariant manifold needs to have, namely it can be see as the direct product of the output manifold and some other manifold which described over what part of the invariant manifold the output mapping is invariant.
\begin{theorem}[Ehresmann's lemma\citep{ehresmann1950connexions}]
If a smooth mapping \(f\colon M\rightarrow N\), where \(M\) and \(N\) are smooth manifolds, is
 \begin{enumerate}
\item a surjective submersion, and
\item a proper map
\end{enumerate}
then it is a locally trivial fibration.
\end{theorem}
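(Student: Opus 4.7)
The plan is to prove local triviality around an arbitrary base point $q_{0} \in N$ by constructing a diffeomorphism $\Phi \colon U \times F \to f^{-1}(U)$ that intertwines $f$ with the projection onto $U$, where $F = f^{-1}(q_{0})$ and $U$ is a coordinate ball around $q_{0}$. The surjective submersion hypothesis provides the fiber as a smooth embedded submanifold and furnishes lifts of coordinate vector fields from $N$ to $M$; properness ensures that the flows of these lifts exist long enough to sweep out a full neighborhood of the fiber.

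First I would fix $q_{0} \in N$ and choose a coordinate chart $(U, y^{1}, \ldots, y^{n})$ centered at $q_{0}$, with $U$ diffeomorphic to an open convex ball. Since $f$ is a submersion, the constant-rank theorem makes $F = f^{-1}(q_{0})$ a smooth embedded submanifold of $M$, and properness of $f$ applied to the compact singleton $\{q_{0}\}$ makes $F$ compact. Next I would lift each coordinate vector field $\partial / \partial y^{i}$ on $U$ to an $f$-related vector field $X_{i}$ on $f^{-1}(U)$. Pointwise existence of such a lift follows from surjectivity of $df$; globalization is achieved either by equipping $M$ with a Riemannian metric and taking the horizontal lift via the orthogonal complement of $\ker df$, or by patching local sections of $df$ with a partition of unity.

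Then I would define
\[
\Phi(y, p) \;=\; \phi^{X_{1}}_{y^{1}} \circ \cdots \circ \phi^{X_{n}}_{y^{n}}(p),
\]
where $\phi^{X_{i}}_{t}$ denotes the time-$t$ flow of $X_{i}$. Because $X_{i}$ is $f$-related to $\partial / \partial y^{i}$, uniqueness of integral curves yields $f(\phi^{X_{i}}_{t}(p)) = f(p) + t\,e_{i}$ in the chart, so $f \circ \Phi(y, p) = y$ when $y$ is identified with its image in $\reals^{n}$. The inverse is given explicitly by sending $z \in f^{-1}(U)$ with $y = f(z)$ to $p = \phi^{X_{n}}_{-y^{n}} \circ \cdots \circ \phi^{X_{1}}_{-y^{1}}(z) \in F$. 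Both $\Phi$ and its inverse are smooth and commute with $f$ and the first-factor projection, yielding the required local trivialization.

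The main obstacle will be ensuring that the flows $\phi^{X_{i}}_{t}$ are defined for all the times needed in this construction. Without properness, a lifted integral curve could escape $M$ in finite time even while its projection stays in $U$. To rule this out, I would shrink $U$ to a smaller closed coordinate ball $\overline{U'} \subset U$; by properness $f^{-1}(\overline{U'})$ is compact, so the restricted vector fields $X_{i}$ on this set have flows that cannot escape in finite time, and they persist for as long as their $f$-images remain in $\overline{U'}$, which is governed by the linear motion of $f$-images in coordinates. A standard continuity-of-flow argument then makes the composition defining $\Phi$ well-posed on $U' \times F$, completing the local trivialization and hence the lemma.
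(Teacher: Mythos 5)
Your proof is correct. The paper itself does not prove this statement---it imports Ehresmann's lemma as a classical result, citing Ehresmann's original paper---so there is no in-paper argument to compare against; what you have written is the standard modern proof (lift the coordinate vector fields to a horizontal distribution, compose their flows, and use properness to guarantee the flows persist long enough). The two points on which such a proof usually founders are both handled properly: you use $f$-relatedness and uniqueness of integral curves to get $f\circ\Phi(y,p)=y$, and you invoke properness to make $f^{-1}(\overline{U'})$ compact so that the escape lemma rules out finite-time blow-up of the lifted integral curves while their projections stay in the chart. The fact that the $X_i$ need not commute is immaterial since you fix an ordering and invert by composing the reversed flows in reverse order, and your choice of a convex (ball) chart ensures the intermediate $f$-images of the staircase of flows remain in the chart. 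No gaps.
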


\begin{remark}
If a manifold \(M\) is compact, then the above smooth map  is a proper map.
If we do not have a submersion or that the output mapping is transversal to the output manifold, we have a situation in which the flow on the invariant manifold can be arbitrarily fast  (even though memory is degrading slowly).
This happens when the flow is in a singularity of the output mapping.
\end{remark}

For our statement we want that the invariant manifold can be decomposed into a space that is diffeomorphic to the output manifold and another compact manifold: \(\manifold_{\epsilon}= \manifold_{\text{slow}}\times \manifold_{\text{null}}\) with \(\manifold_{slow}\simeq \manifold_{Y}\). We can relax this to allow for the possibility of some torsion along the invariant manifold.
For this to hold, it is sufficient that the output mapping must be a submersion because this makes it a locally trivial fibration.
So we get that \(g\colon X\rightarrow Y\) defines a locally trivial fiber mapping.

The second assumption we need is to have a bound for the speed of trajectories along the fibers, which will correspond to a speed along the output manifold, resulting in memory degradation.
We need to assume that there is a slow flow (in the direction of \(\manifold_\text{slow}\)).
We characterize the relevant maximal size of the vector field that needs to be perturbed as the  supremum over the uniform norms of the vector field normal to the fibers of the fiber bundle.

\begin{theorem}\label{thrm:near_ca}
Let \(\manifold_{\epsilon}\) be a connected, compact, normally hyperbolic slow manifold  (as parametrized by Eq.~\eqref{eq:fenichel:flowtangent}-\eqref{eq:fenichel:flownormal}) with the real part of the eigenvalues of \(\nabla_{\vz} \vh\) all negative. Further, assume that this manifold can be decomposed \(\manifold_{\epsilon}= \manifold_{\text{slow}}\times \manifold_{\text{null}}\) with \(\manifold_{slow}\simeq \manifold_{Y}\) and that
 the uniform norm of the flow tangent to \(\manifold_{\epsilon}\) restricted to \(\manifold_{\text{slow}}\) be \(\|\dot{(\vy)}_{\text{slow}}\|_{\infty}=\eta\).
Then, there exists a perturbation with uniform norm at most \(\eta\) that induces a bifurcation to a system that is decomposable into a continuous attractor and a system with a non-zero flow.
\end{theorem}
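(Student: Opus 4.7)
The plan is to mimic the construction used for Proposition~\ref{prop:revival} in Section~\ref{sec:supp:proofprop1}, but to cancel only the component of the flow along $\manifold_{\text{slow}}$ while leaving the component along $\manifold_{\text{null}}$ untouched. This decouples the two factors under the perturbed dynamics so that $\manifold_{\text{slow}}$ becomes a continuum of equilibria (a continuous attractor, by the attractivity of $\manifold_{\epsilon}$), whereas the dynamics on $\manifold_{\text{null}}$ remain non-trivial---exactly the decomposition claimed by the theorem.

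First I would exploit the global product structure. Since $\manifold_{\epsilon}$ is invariant, $\vf\vert_{\manifold_{\epsilon}}$ is tangent to $\manifold_{\epsilon}$, and the decomposition $\manifold_{\epsilon} = \manifold_{\text{slow}} \times \manifold_{\text{null}}$ induces a canonical splitting $T\manifold_{\epsilon} = T\manifold_{\text{slow}} \oplus T\manifold_{\text{null}}$. Writing $\vf\vert_{\manifold_{\epsilon}} = \vf_{\text{slow}} + \vf_{\text{null}}$ accordingly, the hypothesis gives $\uniformNorm{\vf_{\text{slow}}} = \eta$.

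Next, following the proof of Proposition~\ref{prop:revival}, I would invoke the $\epsilon$-Neighborhood Theorem on the compact $\manifold_{\epsilon}$ to obtain a tubular neighborhood $N_{\delta}$ with smooth closest-point projection $\pi_{\delta}\colon N_{\delta} \to \manifold_{\epsilon}$, together with a bump function $\psi$ satisfying $0 \leq \psi \leq 1$, $\psi \equiv 1$ on $\manifold_{\epsilon}$, and $\operatorname{supp}\psi \subset N_{\delta}$. Define
\begin{align*}
  \vp(\vx) = -\psi(\vx)\,\widetilde{\vf}_{\text{slow}}(\pi_{\delta}(\vx)),
\end{align*}
where $\widetilde{\vf}_{\text{slow}}$ is the extension of $\vf_{\text{slow}}$ off $\manifold_{\epsilon}$ determined by the product structure at $\pi_{\delta}(\vx)$. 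Then $\uniformNorm{\vp} \leq \uniformNorm{\vf_{\text{slow}}} = \eta$, and on $\manifold_{\epsilon}$ the perturbed vector field $\vf + \vp$ has vanishing $\manifold_{\text{slow}}$-component while retaining $\vf_{\text{null}}$ along $\manifold_{\text{null}}$.

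Finally I would verify decomposability of the perturbed system. Strict negativity of the real parts of the eigenvalues of $\nabla_{\vz}\vh$ is an open condition, so it is preserved by a sufficiently small perturbation; hence $\manifold_{\epsilon}$ persists as an attractive invariant manifold for $\vf + \vp$. The vanishing of the slow component makes every fibre $\{\vy\} \times \manifold_{\text{null}}$ invariant, and the restricted flow on each fibre is precisely $\vf_{\text{null}}$. Therefore the perturbed dynamics factor as a trivial flow on $\manifold_{\text{slow}}$ (a continuous attractor) together with the unchanged non-zero flow on $\manifold_{\text{null}}$. The main obstacle I anticipate is ensuring that the splitting of $\vf\vert_{\manifold_{\epsilon}}$ extends consistently and smoothly off $\manifold_{\epsilon}$; this is precisely where the assumed \emph{global} product structure (a trivial fibration, as opposed to the merely locally trivial fibration provided by Ehresmann's lemma) is essential, since a locally trivial fibration would force a partition-of-unity gluing that risks inflating the uniform norm beyond $\eta$.
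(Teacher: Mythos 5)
Your proposal is correct and follows essentially the same route as the paper: both constructions subtract exactly the component of the flow transverse to the fibers $\{\vy\}\times\manifold_{\text{null}}$ (your $\vf_{\text{slow}}$, the paper's ``part of the vector field normal to the fibers $g^{-1}(y)$''), bound its uniform norm by $\eta$ by hypothesis, and conclude that each fiber becomes invariant, yielding the claimed decomposition. The only cosmetic differences are that you work directly from the hypothesized product structure rather than re-deriving the fibration via Ehresmann's lemma, and you fold the tubular-neighborhood/bump-function smoothing into the main argument, which the paper defers to a remark.
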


In other words, after applying this perturbation, the slow component of the perturbed system satisfies \(\dot{\vx}'|_{\text{slow}} = 0\) for the system \(\dot{\vx}' = \vf(\vx) + \vp(\vx)\).
Furthermore, the trajectories of the resulting system form a fiber bundle where the output projection serves as the bundle projection.
Each fiber consists of trajectories that are mapped to the same value in \(\manifold_{\text{output}}\), meaning that the fibers describe an invariance under the output projection. 

\begin{proof}
Assume that the output mapping is a submersion from the invariant manifold \(\manifold_{\epsilon}\) to the output manifold \(\manifold_{\text{output}}\).
Ehresmann's lemma implies that this implies that we have a locally trivial fibration \((\manifold_{\epsilon}, \manifold_{\text{output}}, g,\manifold_{\text{null}})\).
We construct the perturbation \(\vp\) as the part of the vector field that us normal to the fibers \(g^{-1}(y)\) for each \(y\in\manifold_{\text{output}}\).
By construction, this perturbation has uniform norm at most \(\eta\).
This perturbation makes the vector field normal to each fiber zero.
That implies that each fiber is an invariant submanifold of \(\manifold_{\epsilon}\).
From the structure of the fiber bundle it follows that there is a continuum of such invariant submanifolds.
Hence, the perturbed system is decomposable into a system that is a continuous attractor and a system with a non-zero flow.
\end{proof}

\paragraph{Remark}
This perturbation can be made smoothness, along similar lines as above.
We can take a\(\epsilon\)-Neighborhood around the invariant manifold on which we extend the above vector field with bump functions to get a smooth vector field that still results in an attractive invariant manifold.

\paragraph{Example}
An example of an approximate continuous attractor solution of the form with a decomposable system of which one of the subsystems is close to a continuous attractor is the torus solution in Fig.~\ref{fig:fastslow_decomposition}D.
In this case, the system can be perturbed slightly such that there exists a continuum of limit cycles laid out over a ring.

\newpage
\section{Upper bound for the memory performance on a short time scale}\label{sec:supp:ub}
We will now formalize the statement about an upper bound dependent on the uniform norm of the vector field on the slow manifold in Sec.\ref{sec:revival} and provide a proof.
\begin{prop}\label{prop:ub}
Let \(\manifold\) be a normally hyperbolic slow manifold as in Prop.~\ref{sec:revival}.
Let \(\vx_{0} \in \manifold\), and \(\varphi = \vf\vert_{\manifold}\) be the flow restricted to the manifold.
The average deviation from the initial memory \(\vx_{0}\) over time is bounded linearly
\begin{align}
\frac{1}{\vol\manifold}\int_{\manifold}
\abs{\vx(t, \vx_{0}) - \vx_{0}}\,
\dm{\vx_{0}}
\leq t\uniformNorm{\varphi}.
\end{align}\end{prop}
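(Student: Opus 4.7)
The plan is to obtain the bound pointwise in $\vx_0$ first, and then integrate, since the right hand side $t\uniformNorm{\varphi}$ does not depend on $\vx_0$. The key ingredient is the fundamental theorem of calculus combined with invariance of $\manifold$ under the flow: because $\manifold$ is invariant (by the persistent manifold theorem applied to the perturbed system), a trajectory starting at $\vx_0 \in \manifold$ stays on $\manifold$ for all $t$ in the domain of existence, and its instantaneous velocity agrees with the restricted vector field $\varphi$ at every point.

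Concretely, I would write
\begin{equation*}
\vx(t,\vx_0) - \vx_0 \;=\; \int_0^t \dot{\vx}(s,\vx_0)\,\dm{s} \;=\; \int_0^t \varphi\bigl(\vx(s,\vx_0)\bigr)\,\dm{s},
\end{equation*}
where the second equality uses that $\vx(s,\vx_0) \in \manifold$ for all $s \in [0,t]$. Taking norms and applying the triangle inequality for integrals gives
\begin{equation*}
\abs{\vx(t,\vx_0) - \vx_0} \;\leq\; \int_0^t \abs{\varphi(\vx(s,\vx_0))}\,\dm{s} \;\leq\; \int_0^t \uniformNorm{\varphi}\,\dm{s} \;=\; t\,\uniformNorm{\varphi},
\end{equation*}
where the second inequality uses the definition of the uniform (sup) norm. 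This bound is valid for every $\vx_0 \in \manifold$.

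Finally, integrating this pointwise inequality over $\manifold$ and dividing by $\vol\manifold$ gives
\begin{equation*}
\frac{1}{\vol\manifold}\int_{\manifold}\abs{\vx(t,\vx_0)-\vx_0}\,\dm{\vx_0} \;\leq\; \frac{1}{\vol\manifold}\int_{\manifold} t\,\uniformNorm{\varphi}\,\dm{\vx_0} \;=\; t\,\uniformNorm{\varphi},
\end{equation*}
which is exactly the claim. There is no substantive obstacle here; the only things to be careful about are (i) citing invariance of $\manifold$ to justify replacing $\dot{\vx}$ by $\varphi(\vx)$ throughout the trajectory, and (ii) noting that compactness of $\manifold$ ensures both $\vol\manifold < \infty$ and that the sup in $\uniformNorm{\varphi}$ is attained, so the bound is well-defined and finite. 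If one wants to emphasize sharpness, it is worth remarking that the bound is tight as $t \to 0^+$ (the linearization in $t$), while for large $t$ it is loose because $\abs{\vx(t,\vx_0)-\vx_0}$ is automatically bounded by $\operatorname{diam}(\manifold)$, recovering the comment in the main text.
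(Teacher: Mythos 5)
Your proof is correct and follows essentially the same route as the paper's: write the displacement as the time integral of the vector field along the trajectory, bound it by the uniform norm, and then average over the manifold. If anything, your version is slightly more careful than the paper's, since you take the norm \emph{before} applying the sup bound (the paper's displayed chain applies an inequality directly to the vector-valued state), and you explicitly invoke invariance of \(\manifold\) to justify replacing \(\dot{\vx}\) by \(\varphi(\vx)\).
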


\begin{proof}
Numerical integration of the ODE gives
\begin{align*}
x(t,\vx_{0}) &= \int_{0}^{t}\varphi(x(\tau))d\tau + \vx_{0} \\
&\leq \int_{0}^{t}\|\varphi\|_{\infty} d\tau + \vx_{0} \\
&= t\|\varphi\|_{\infty}+ \vx_{0}
\end{align*}From this, we get
\begin{align*}
\frac{1}{\vol\manifold}\int_{\manifold}
\abs{\vx(t, \vx_{0}) - \vx_{0}}\,
\dm{\vx_{0}} & \leq \frac{1}{\vol\manifold}\int_{\manifold} t\|\varphi\|_{\infty}\\
&= t\uniformNorm{\varphi}.
\end{align*}\end{proof}

We formulate here a theory of continuous attractor approximations in terms of memory loss over time.
It can be used uniform norm of vector field on the manifold to bound the memory performance on the short-time scale. Let \(\vx_{0} \in \manifold\), and \(\varphi = \vp\vert_{\manifold}\) be the flow restricted to the manifold.
We will show that the average deviation from the initial memory \(\vx_{0}\) over time is bounded linearly as in Eq.~\ref{eq:distance:ub}.

\subsection{Ring attractor}
For a ring attractor we can give more tight bounds on the accumulated error for the angular memory.
Suppose we have a dynamical system \(\bm{x} \in \mathbb{R}^{N}\) with autonomous dynamics \(\dot{ \bm{x}} = F_{\boldsymbol{\theta}} (\bm{x})\) and solutions \(\vx(t, \vx_{0})\) uniquely defined for each \(\vx_{0}\).
Let us define the error (i.e.\ the  average deviation from the initial memory) for an attractor as
\begin{align}
\mathcal{L}(T) \coloneqq
\frac{1}{\vol\manifold}\int_{\manifold}
\abs{\vx(t, \vx_{0}) - \vx_{0}}\,
\dm{\vx_{0}}
\end{align}
If we further assume that the memory is not simply the state of the network, we need to take into consideration a decoder of the memory.
Suppose that there is an invertible decoder mapping \(f: \mathcal{U} \rightarrow \mathbb{R}^{N}\). For a ring variable, we can take this to be the projection onto the plane \(\wout\) and then applying the \(\arctan\):
\begin{equation}
g(x) = \arctan(\wout x). \end{equation}In the case of the ring attractor, the memory we would like to encode is \(\alpha \in \mathcal{U} = [0, 2 \pi)\), the error is defined as \(|x - y|_{o} = o_{\pi}(|x - y|)\) where: \begin{equation}
    o_{\pi}(x) = \begin{cases}
    x & \text{if } x < \pi \\
    2 \pi - x & \text{if } x \geq \pi
    \end{cases}
\end{equation}
If we call \(\hat{\alpha}_{\boldsymbol{\theta}}(\alpha_{0}, t) = g(\varphi_{\bm{\theta}}(f(\alpha),t))\) we get the expression of the memory loss for this kind of memory as:
\begin{equation}
    \mathcal{L}(T) =  \frac{1}{2 \pi} \int_{0}^{2 \pi}    \left( \left| \hat{\alpha}_{\boldsymbol{\theta}}(\alpha_{0}, t) - \alpha_{0}  \right| \right)  d\alpha_{0}
\end{equation}

\textbf{General bounds}
Define the following functions:
\begin{equation}
\begin{split}
    \epsilon^+(t) &= \sup_{\alpha_{0}} \; o_{\pi} \left( \left| \hat{\alpha}_{\boldsymbol{\theta}}(\alpha_{0}, t) - \alpha_{0}  \right| \right) \geq \frac{1}{2 \pi} \int_{0}^{2 \pi}  o_{\pi} \left( \left| \hat{\alpha}_{\boldsymbol{\theta}}(\alpha_{0}, t) - \alpha_{0}  \right| \right)  \\
     \epsilon^{m}(t) &= \frac{1}{2 \pi} \int_{0}^{2 \pi}  o_{\pi} \left( \left| \hat{\alpha}_{\boldsymbol{\theta}}(\alpha_{0}, t) - \alpha_{0}  \right| \right) d\alpha_{0} \\
    \epsilon^-(t) &= \inf_{\alpha_{0}} \; o_{\pi} \left( \left| \hat{\alpha}_{\boldsymbol{\theta}}(\alpha_{0}, t) - \alpha_{0}  \right| \right) \leq \frac{1}{2 \pi} \int_{0}^{2 \pi}  o_{\pi} \left( \left| \hat{\alpha}_{\boldsymbol{\theta}}(\alpha_{0}, t) - \alpha_{0}  \right| \right)
\end{split}
\end{equation}
Then we get the bounds for the loss \(\mathcal{L}(T)\) as:
\begin{equation}
    \frac{1}{T} \int_{0}^{T} \epsilon^-(t)dt \leq \mathcal{L}(T) = \frac{1}{T} \int_{0}^{T} \epsilon^{m}(t) dt \leq  \frac{1}{T} \int_{0}^{T} \epsilon^+(t) dt
\end{equation}
\textbf{Speed bounds}

We can define the maximum, average and minimum memory error speed as:
\begin{equation}
\begin{split}
    v_{\epsilon^+} &= \sup_{\alpha_{0}} \; \frac{d}{dt} (o_{\pi} \left( \left| \hat{\alpha}_{\boldsymbol{\theta}}(\alpha_{0}, t) - \alpha_{0}  \right| \right)) |_{t = 0}  \\
    v_{\epsilon^{m}} &= \frac{1}{2 \pi} \int_{0}^{2 \pi} \frac{d}{dt} (o_{\pi} \left( \left| \hat{\alpha}_{\boldsymbol{\theta}}(\alpha_{0}, t) - \alpha_{0}  \right| \right))|_{t = 0} d\alpha_{0} \\
    v_{\epsilon^-} &= \inf_{\alpha_{0}} \; \frac{d}{dt} (o_{\pi} \left( \left| \hat{\alpha}_{\boldsymbol{\theta}}(\alpha_{0}, t) - \alpha_{0}  \right| \right))|_{t = 0}
\end{split}
\end{equation}
Notice then that since:
\begin{equation}
\begin{split}
    \epsilon^+(t) &\leq \min(t v_{\epsilon^+}, \pi) \\
    \epsilon^-(t) &\geq t v_{\epsilon^-}
\end{split}
\end{equation}then,
\begin{equation}
\begin{split}
    \frac{1}{T} \int_{0}^{T} \epsilon^+(t)dt &\leq  \min \left( \frac{1}{T} \int_{0}^{T} t v_{\epsilon^+} dt, \pi \right) = \min \left( \frac{T v_{\epsilon^+}}{2}, \pi \right) \\
    \frac{1}{T} \int_{0}^{T} \epsilon^-(t)dt &\geq  \frac{1}{T} \int_{0}^{T} tv_{\epsilon^-}dt  = \frac{T v_{\epsilon^-}}{2} \\
\end{split}
\end{equation}and we get:
\begin{equation}
    \frac{T v_{\epsilon^-}}{2} \leq \frac{1}{T} \int_{0}^{T} \epsilon^-(t)dt \leq \mathcal{L}(T) = \frac{1}{T} \int_{0}^{T} \epsilon^{m}(t) dt \leq  \frac{1}{T} \int_{0}^{T} \epsilon^+(t) dt \leq \min \left( \frac{T v_{\epsilon^+}}{2}, \pi \right)
\end{equation}
Finally, if the error is uniform enough we can expect \(\epsilon^{m}(t) \approx tv_{\epsilon^{m}}\) and
\begin{equation}
    \mathcal{L}(T) = \frac{1}{T} \int_{0}^{T} \epsilon^{m}(t) dt \approx  \frac{1}{T} \int_{0}^{T} t v_{\epsilon^{m}}dt = \frac{T v_{\epsilon^{m}}}{2}
\end{equation}
\paragraph{Within manifold case}
Let's assume that we have managed the system \(F_{\boldsymbol{\theta}}\)  have a slow manifold \(\manifold \in \mathbb{R}^{N}\) in bijection with \(\mathcal{U}\), i.e. \(f|_{\manifold}\) is not a mapping but a bijective function and:
\begin{equation}
\forall \bm{x} \in \manifold \quad \dot{\bm{x}} = \epsilon_{\bm{\theta}}(\bm{x})\frac{\frac{\partial f}{\partial \alpha}(f^{-1}(\bm{x}))}{||\frac{\partial f}{\partial \alpha}(f^{-1}(\bm{x}))||}
\end{equation}
Then we have a slow manifold in the form of a ring attractor, we have \(f(0) = f(2 \pi)\) and:
\begin{equation}
    \hat{\alpha}_{\boldsymbol{\theta}}(\alpha_{0}, t) = \left(\alpha_{0} + \int_{0}^{t} \epsilon_{\bm{\theta}}(\alpha_{0}, s) ds \right)\mod 2 \pi
\end{equation}
Then:
\begin{equation}
\begin{split}
    \hat{\alpha}_{\boldsymbol{\theta}}(\alpha_{0}, t) &= o_{\pi} \left( \left| \left(\alpha_{0} + \int_{0}^{t} \epsilon_{\bm{\theta}}(\alpha_{0}, s) ds \right)\mod 2 \pi - \alpha_{0} \right| \right)  \\
    &= o_{\pi} \left( \left| \left(\alpha_{0} + \int_{0}^{t} \epsilon_{\bm{\theta}}(\alpha_{0}, s) ds \right)\mod 2 \pi - \alpha_{0} \mod 2 \pi \right| \right) \\
    &= o_{\pi} \left( \left| \int_{0}^{t} \epsilon_{\bm{\theta}}(\alpha_{0}, s) ds \mod 2 \pi \right| \right)
\end{split}
\end{equation}where we used that \(\alpha_{0} \in [0, 2 \pi) \Rightarrow \alpha_{0} = \alpha_{0} \mod 2 \pi\) and that \(|x \mod 2 \pi - y \mod 2 \pi| = |(x - y) \mod 2 \pi|\). 
The final equation of the loss in this case has the form:
\begin{equation}
    \mathcal{L}(T) =  \frac{1}{2 \pi} \int_{0}^{2 \pi} \frac{1}{T} \int_{0}^{T} o_{\pi} \left( \left| \int_{0}^{t} \epsilon_{\bm{\theta}}(\alpha_{0}, s) ds \mod 2 \pi \right| \right) dt d\alpha_{0}.
\end{equation}
\textbf{Slow manifold bounds}

In this case, if we have \(N\) fixed points in the ring-like slow manifold, we know that:
\begin{equation}
\begin{split}
     \epsilon^+(t) &\leq \min \left( \frac{2 \pi}{N}, \pi \right),
\end{split}
\end{equation}and therefore:
\begin{equation}
    \mathcal{L}(T) \leq \min \left( \frac{2 \pi}{N}, \pi \right).
\end{equation}
\newpage
\section{Slow manifold in trained RNNs}
We will provide a detailed description of the tasks, architectures, training methods, and analysis techniques used in our numerical experiments with trained RNNs.
\subsection{Tasks}\label{sec:supp:tasks}
\paragraph{Memory guided saccade task}

The total time length of a trial is 512 steps.
The time delay to the output cue was sampled from
\begin{align}
T_{delay} \sim \mathcal{U}(50, 400).
\end{align}We applied a mask \(m_{i, t}=0\) for 5 time steps (\(t = T_{delay}+j\) for \(j = 0,\dots 4\)) after the go cue (Eq.~\ref{eq:loss}).

\paragraph{Angular velocity integration task}
The time length of a trial is 256 steps.
The input is an angular velocity and the target output is the sine and cosine of the integrated angular velocity.
Velocity at every timestep  is sampled from as a Gaussian Process (GP) for smooth movement trajectories, consistent with the observed animal behavior in flies and rodents.
\begin{equation}
k(x, y)=\exp\left(-\frac{\|x - y\|}{2\ell^{2}}\right),
\end{equation}with length scale \(\ell\).
 The length scale of the kernel was fixed at 1.

 \begin{figure}[tbhp]
     \centering
    \includegraphics[width=\textwidth]{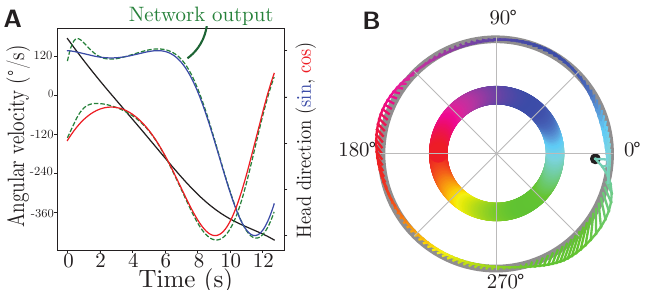}
       \caption{Description of the angular velocity integration task.
        (A) The angular velocity integration task.
        (B) The output of the angular velocity integration in the output space, color coded according to the integrated angle. An example of an input is shown with constant velocity and it is provided until one turn is completed.
        }\label{fig:angular_task}
\end{figure}

\paragraph{Double angular velocity integration task}
The Double Angular Velocity Integration Task is an extension of the Angular Velocity Integration Task, where two independent instances of the task are performed simultaneously. In this case, you have two separate angular velocities, each sampled from its own Gaussian Process, representing two distinct movement trajectories.
For each of the two angular velocities, the integration over time is performed separately, resulting in two sets of outputs: one for each angular velocity, making the output space four dimensional.

Grid cells are known to exhibit periodic firing patterns that form a hexagonal grid across an environment, and these patterns are often modeled as existing on a toroidal surface (a doughnut-shaped surface)\citep{hermansen2024uncovering}. The reason for this is that the activity patterns of grid cells are continuous and wrap around seamlessly, meaning that if you move far enough in one direction, the grid pattern will repeat itself. This toroidal structure allows for the continuous representation of space without boundaries, which is crucial for efficient path integration.
In the context of the Double Angular Velocity Integration Task, where two independent angular velocities are integrated, the resulting four-dimensional output space can be considered as two 2D subspaces (one for each angular velocity).

\paragraph{Unbounded tasks}
Regarding tasks with an unbounded range, such as navigation tasks, two points bear mentioning.
For planar attractors are diffeomorphic to \(\reals^{2}\), note that they do not conform to the assumptions on normally hyperbolic invariant manifolds, since \(\reals^{2}\) is not compact.
There are suitable generalizations of this theory to noncompact manifolds \citep{eldering2013normally}, but we do not pursue them since they require more refined tools, which would only obscure the point that we are trying to make.
Tangentially, we would also like to point out that we assume that neural dynamics are naturally bounded (e.g. by energy constraints) and hence sufficiently well described by compact invariant manifolds.

\subsection{Output projection of the invariant manifold}

\begin{figure}[tbhp]
  \centering
  \includegraphics[width=\textwidth]{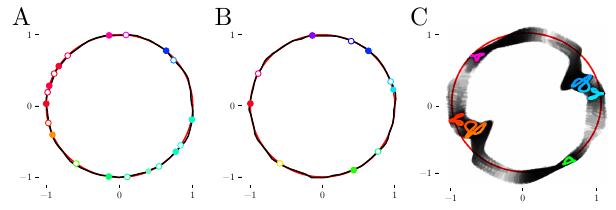}
  \caption{Output projection of slow manifold approximation of the trained networks of Fig.~\ref{fig:fastslow_decomposition}. All stability strutures are colored according to the decoded angle shown in Fig.~\ref{fig:angular_task}B, target output circle shown in red.
 (A) An example fixed-point type solution to the memory-guided saccade task (Fig.~\ref{fig:fastslow_decomposition}B).
 (B) An example of a found solution to the angular velocity integration task (Fig.~\ref{fig:fastslow_decomposition}C).
 (C) An example slow-torus type solution to the memory-guided saccade task. The colored curves indicate stable limit cycles of the system (Fig.~\ref{fig:fastslow_decomposition}D).
}\label{fig:fastslow_decomposition_otuput}
\end{figure}

The output projection of the invariant manifolds and stability structures (fixed points and limit cycles) is very close to the target output circle, as shown in Fig.~\ref{fig:fastslow_decomposition_otuput}.

\subsection{Discretization}\label{sec:supp:discretization}

Computational neuroscientists often train RNNs as models of neural computation and interpret them as dynamical systems \citep{mante2013context,sussillo2013blackbox,xie2022neural}.
Our experiments connect to existing literature.
We examined vanilla continuous-time RNNs for this work.
The time-discretized version of RNN network activity is given by
\begin{equation}
  \begin{aligned}
	\vx_{t} &= \phi(\win \vI_{t} + \vW \vx_{t - 1} + \vb) + \zeta_{t} \label{eq:RNN:discrete}\\
	\vy_{t} &= \wout \vx_{t} + \bout
  \end{aligned}
\end{equation}where \(\vx_{t} \in \reals^{d}\) is the hidden state,
\(\vy_{t} \) is the readout,
\(\vI_{t} \in \reals^{K}\) is the input,
\(\phi\colon \reals \to \reals\) is an activation function that acts on each of the hidden dimensions,
 \(\zeta_{t}\iidsample\mathcal{N}(\mathbf{0},\sigma^{2} = \nicefrac{1}{100}\identity)\) is a state noise variable, and \(\vW, \vb, \win, \wout, \bout\) are parameters.
We will shortly explain the discretization procedure, i.e., the steps for going from Eq.~\ref{eq:RNN:continuous} to Eq.~\ref{eq:RNN:discrete}.
Let \(t_{n} = n \Delta t\).

The Euler-Maruyama method for a stochastic differential equation (Eq.~\ref{eq:RNN:continuous})
\[\mathrm{d}{\mathbf{x}} = \left(-\mathbf{x} + \phi\mathbf{W}_{\text{in}} \mathbf{I}(t) + \mathbf{W} \mathbf{x} + \mathbf{b})\right)\mathrm{d}{t} + \sigma\mathrm{d}{W}_{t}\] is given by :
\[\mathbf{x}_{n + 1} = \mathbf{x}_{n} + \left( -\mathbf{x}_{n} + \phi(\mathbf{W}_{\text{in}} \mathbf{I}_{n} + \mathbf{W} \mathbf{x}_{n} + \mathbf{b}) \right) \Delta t + \sigma \Delta W_{n},\]
with \(\Delta W_{n}=W_{(n + 1)\Delta t}-W_{n\Delta t}\sim \mathcal{N}(0,\Delta t).\)

Now subsitute \(\Delta t = 1\):
\begin{align}
 \mathbf{x}_{t + 1} &= \mathbf{x}_{t} + \left( -\mathbf{x}_{t} + \phi(\mathbf{W}_{\text{in}} \mathbf{I}_{t} + \mathbf{W} \mathbf{x}_{t} + \mathbf{b}) \right) + \sigma \Delta W_{t}, \\
 &= \phi(\mathbf{W}_{\text{in}} \mathbf{I}_{t} + \mathbf{W} \mathbf{x}_{t} + \mathbf{b}) + \sigma \Delta W_{t}.
 \end{align}
If we introduce the noise term \(\zeta_{t} = \sigma \Delta W_{t}\), which represents the discrete-time noise, we have derived the discrete-time equation: \[ \vx_{t} = \phi(\win \vI_{t} + \vW \vx_{t - 1} + \vb) + \zeta_{t}. \]

So, assuming Euler-Maruyama integration with unit time step, the discrete-time RNN of~\eqref{eq:RNN:discrete} corresponds to the stochastic differential equation:
\begin{align}
    \dm{\vx} &= -\vx\,\dm{t} + \phi(\win \vI(t) + \vW \vx + \vb)\,\dm{t} + \sigma\,\dm{W}. \label{eq:RNN:continuous}
\end{align}where \(\dm{W}\) is a Wiener process that models the intrinsic state noise in the brain.
See for more detail on correspondences between discrete- and continuous-time RNNs in \citep{monfared2020transformation} and \citep{ozaki2012time}.
Our experiments connect to existing literature. In future studies, it would be interesting to perform experiments with Neural SDEs \citep{tzen2019neural}

\subsection{Network architectures}
In all network architectures a linear output is used.
Furthermore, for the angular velocity integration tasks we used an additional mapping from the output to the hidden layer to initialize the hidden state on the initial position along the ring from which the network needed to integrate from.

\paragraph{Vanilla}
We used vanilla RNNs with different nonlinearities (ReLU, tanh and rectified tanh) for the recurrent layer.

\paragraph{LSTM}
The number of units for the trained LSTMs was half of that of vanilla RNNs to match the number of paramters \citep{Hochreiter1997}.

\paragraph{GRU}
We also trained Gated Recurrent Units (GRU) \citep{cho2014learning} for which we used the same number of hidden units as the vanilla RNNs.

\subsection{Training methods}
We trained  RNNs with PyTorch \citep{paszke2017automatic}  on the three tasks Fig.~\ref{fig:angular_task}.
For the vanilla RNNs, we used a time step of \(\Delta t = 0.1\).
The parameters were initialized were initialized using the Xavier normal distribution \citep{Glorot2010}.
For the recurrent weights we used \(W_{ij}\sim\mathcal{N}(0, \nicefrac{g}{\sqrt{N}})\) with a high gain \(g = 1.5\).
The initial hidden state was initialized using the output to recurrent mapping matrix \(W_{otr}\colon\reals^{2}\rightarrow\reals^{N}\) which was trained together with the other parameters.

Adam optimization with \(\beta_{1} = 0.9\) and \(\beta_{2} = 0.999\) was employed with a batch size of 64 and training was run for 5000 gradient updates.
The batches were generated on-line, similar to how animals are trained with a new trial instead of iterating through a dataset of trials.

The best learning rate \(10^{-2}\) was chosen from a set of values \(\{10^{-2},10^{-3},10^{-4},10^{-5}\}\) by 5 initial runs for all nonlinearity and size pairings with the lowest average loss after 100 gradient steps.
Training a single network took around 10 minutes on a CPU and occupied 10 percent of an 8GB RAM.

We numerically minimized the loss \(L\) which was the mean squared error (MSE) between the network output \(\mathbf{y}(t)\) and the target output \(\hat{\mathbf{y}(t)}\):
\begin{equation}\label{eq:loss}
L_{MSE} \coloneqq \langle m_{i, t}(y_{i, t}-\hat y_{i, t})^{2}\rangle_{i, t}, 
\end{equation}
with a mask \(m_{i, t}\) with \(i\) the index of the output units and \(t\)  the index for time.
We implemented a mask, \(m_{i, t}\), for modulating the loss with respect to certain time intervals for the memory guided saccade task (see Sec.~\ref{sec:supp:tasks}).

Although some of the models did not learn the task, most networks converged to a loss below \(10^{-2}\) (Fig.~\ref{fig:training_losses}).
 \begin{figure}[tbhp]
     \centering
    \includegraphics[width=0.75\textwidth]{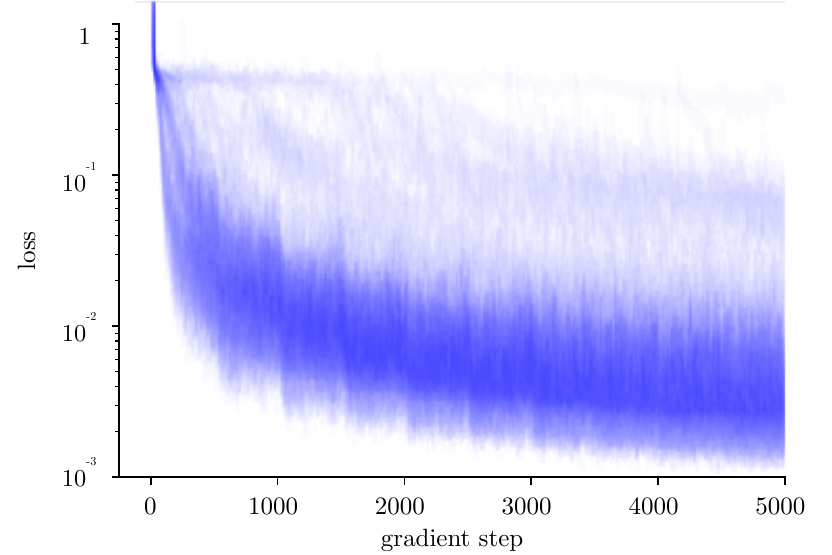}
       \caption{Training loss across gradient steps.}\label{fig:training_losses}
\end{figure}

\newpage
 \subsection{RNN analysis methods}

\subsubsection{Evaluation Metric}
After training, we report the normalized mean squared error (NMSE) to asses the how good a found solution is:
\begin{equation}
\operatorname{NMSE} = \frac{\mathbb{E}[(y-\hat y)^{2}]}{\mathbb{E}[y]},
\end{equation}where \(y\) is the target and \(\hat y\) is the prediction.

\subsubsection{Asymptotic behavior and memory capacity}\label{sec:supp:asymbehav}
We determine the location of the fixed points through the local flow direction criterion as described in Sec.~\ref{sec:fastslowmethod}
and determine the basin of attraction
\begin{equation}
\basin(x^*) \coloneqq \{x\in \manifold \ | \lim_{t\rightarrow\infty}\varphi(t, x)=\{x^*\}\}.
\end{equation}through assesing the local flow direction for 1024 sample points in the found invariant manifold.

We construct a probability distribution of what part of state space we end up in an infinite time through the calculation of the size of the basins of attraction of stable fixed points as a proportion of the ring.
We characterize the memory capacity of the network by calculating the entropy of this probability distribution of the network.

This follows from the following observation.
If we assume that the angular variable that needs to be encoded $X$ is uniformly distributed and the encoding $Y$ is distributed according to the histogram given by the asymptotic behavior of the networks (i.e., the fixed points),
then the \emph{memory capacity} as the negative conditional entropy of the continuous memory given the asymptotic state, i.e.,
\begin{equation}\label{eq:memcap}
-H(X|Y) =
\sum_{y\in Y} p(y) \int_{x\in \basin(y)} p(x|y) \log p(x|y) = \sum_{y\in Y} \vol \basin(y) \log (\vol \basin(y)).
\end{equation}

 \subsubsection{Fast-slow decomposition}\label{sec:fsdecmethod}
 We simulated 1024 trajectories without noise with inputs from the task and let the networks evolve for 16 times the task definition lengths.
 We took the cutoff to identify the slow manifold to be \(10^{-3}\) of the highest speed along each trajectory.
 We believe that this guarantees the identification of the slow manifold in a system that has a fast-slow decomposition.
 We sampled 1024 points from these points to fit a periodic, cubic spline (black line in Fig.~\ref{fig:fastslow_decomposition},~\ref{fig:im_rep},~\ref{fig:im_all}).

\paragraph{Finding fixed points}\label{sec:supp:fpf}
 We then find fixed points by identifying where the flow reverses by sampling the direction of the local flow for 1024 sample points along the found invariant manifold.
 We assess the direction through projection onto the output mapping and calculate the angular flow.
If the flow is pointing towards a point where the flow reverses then we consider there to be a stable fixed point.
If the flow s pointing away from a reversal point then we consider the fixed point there to be a saddle.
We find that long integrated trajectories of the network converge to the found stable fixed points through this independent method.

\paragraph{Eigenspectrum along the invariant manifold}
We use the eigenvalue spectrum as evidence for normal hyperbolicity.
Normal hyperbolicity of an attractive ring invariant manifold implies that the eigenvalue spectrum has a gap in its eigenvalue spectrum.
To measure this, we linearize at reference points on the invariant manifold (calculate the Jacobian) and calculate the eigenvalues.
The largest  eigenvalue (real part) for such a manifold needs to be much closer to zero than the second largest.
For LSTMs and GRUs the  eigenspectrum was approximated by autodifferentiation of the networks w.r.t. the states on the identified invariant manifold.

For a stable system, where the eigenvalues have negative real parts, the time constant \(\tau\) is given by the negative inverse of the eigenvalue's real part: \(\tau = -\frac{1}{\Re(\lambda)}\),
where \(\Re(\lambda)\) denotes the real part of the eigenvalue \(\lambda\).
For the two example networks in Fig.\ref{fig:fastslow_decomposition}, there is a time scale separation between the dynamics on and off the invariant manifold because there is only one eigenvalue close to zero.

 \paragraph{Vector field on invariant manifold}\label{sec:supp:vf}

 We assess the vector field for the ODE (Eq.~\ref{eq:RNN:continuous}) without noise and input) on the found invariant manifold \(\manifold\) by calculating it in the state space
 and then projecting it onto the output space:
 \begin{equation}
\dot \alpha =  \wout f(\hat \alpha) \end{equation}for sampled points \(\hat \alpha\in \manifold\).
These points \(\hat \alpha\in \manifold\) on the manifold are associated with the points on the ring through the mapping \(\alpha = \wout\hat\alpha\).

 \begin{figure}[tbhp]
     \centering
    \includegraphics[width=\textwidth]{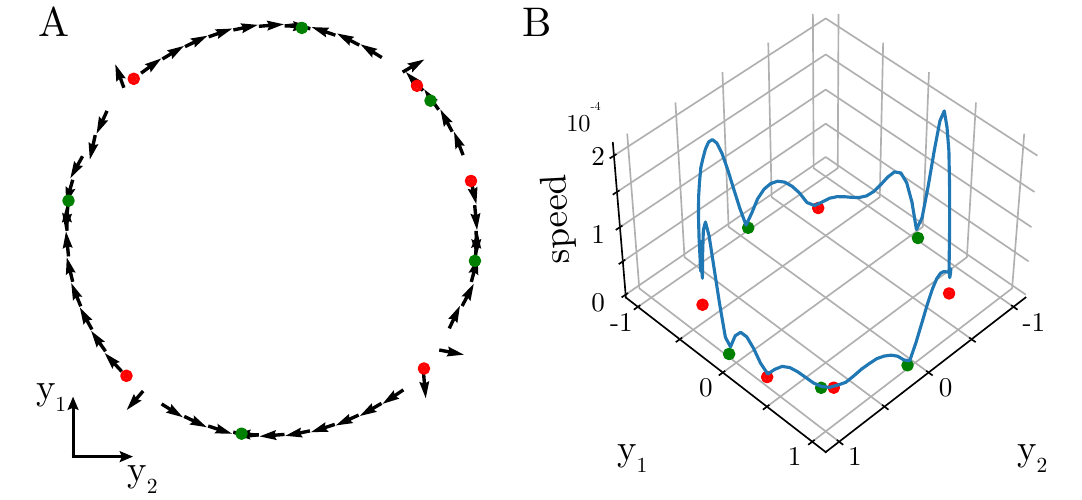}
       \caption{The projected vector field on the found invariant manifold for the system Fig~\ref{fig:fastslow_decomposition}C.
       (A) The found vector field aligns well with the ring in the projected output space.
       (B) The norm of the vector field is low around found fixed points as expected, but is higher for points that are just slow points.
       }\label{fig:vf_on_ring}
\end{figure}

This vector field in the output space captures in what direction and how quickly angular memory will decay.
 The vector field  suggests that the system indeed has an invariant manifold (Fig.~\ref{fig:vf_on_ring}).
 Furthermore, the vector field and fixed points are consistent with each other, as the vector field flips direction around found fixed points.

 There are some inconsistencies around saddle nodes, where the vector field seems to point off of the manifold.
 This is probably just inaccuracies  coming from numerically calculating the vector field and the exact location of the invariant manifold.
For the bound discussed in Sec.~\ref{sec:revival}, we calculate the uniform norm of the found vector field
\begin{equation}\|f\|_{\infty} = \sup_{\alpha} \wout f(\alpha),\end{equation} see also Sec.~\ref{sec:supp:ub}.

For LSTMs and GRUs the  vector field was approximated by taking the difference the initial and the next state after initializing the network from states on the identified invariant manifold.

\newpage
\subsection{LSTM and GRU results}\label{sec:supp:lstmgru}

The trained LSTMs and GRUs share the same pattern observed in the trained vanilla RNNs: a ring slow invariant manifold (Fig.~\ref{fig:angular_losses_lstm_gru}C and D).
The fixed point topologies in the LSTM and GRU networks show a lot of variation in the number of fixed points, paralleling the systems adjacent to continuous attractors from Fig.~\ref{fig:lara_bifurcations}, as seen in Figure~\ref{fig:angular_loss}D. These variations are similar to those discussed in Figures~\ref{fig:angular_losses_lstm_gru}B.
Additionally, the angular error and memory capacity measures across different time scales are comparable to those illustrated in Figure~\ref{fig:angular_losses_lstm_gru}A, highlighting the generalization properties influenced by the topology of the solutions.
These results underline the universality of our findings beyond vanilla RNNs.

\begin{figure}[tbhp]
  \centering
  \includegraphics[width=\textwidth]{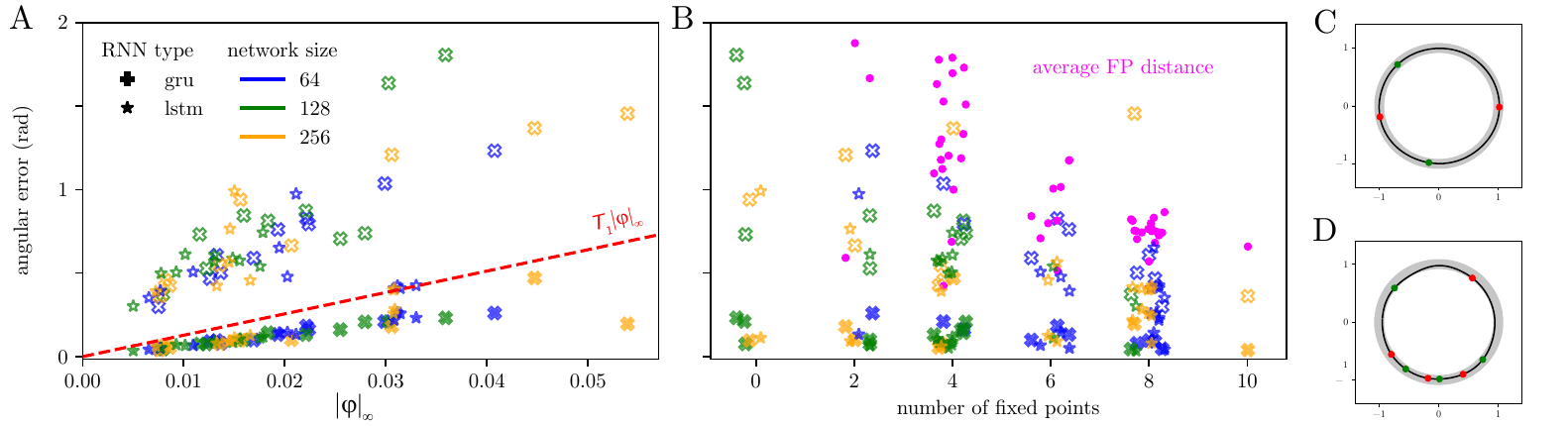}
  \caption{The different measures for memory capacity reflect the generalization properties implied by the topology of the found solution.
    \textbf{(A)} The average accumulated angular error vs. the uniform norm on the vector field shown.
     Angular error at \(T_{1} =\) trial length (filled markers) and \(\lim T_{1} \to \infty\)  (hollow markers).
Points are jittered to aid legibility.
    \textbf{(B)} The number of fixed points vs. average accumulated angular error, with the average distance between neighboring fixed points (magenta).
    \textbf{(C,D)} Invariant manifold (black) of a trained LSTM (C) and GRU (D) with stable fixed points (green) and saddle nodes (red).
}\label{fig:angular_losses_lstm_gru}
\end{figure}

\newpage
 \subsection{Identified invariant manifold output projections}\label{sec:inv_man_projections}
 The identified invariant manifolds in the trained networks (Fig.~\ref{fig:im_rep} and Fig.~\ref{fig:im_all}).
 However, not all solutions can be meaningfully analyzed with the slow-fast decomposition method.
 For example, the solution at the center of the \(\tanh, N = 256\) block, the found invariant manifold is not correctly captured.
 This is true for the networks that have not learned the task correctly (networks with a NMSE higher than -20dB).

 \begin{figure}[tbhp]
     \centering
    \includegraphics[width=\textwidth]{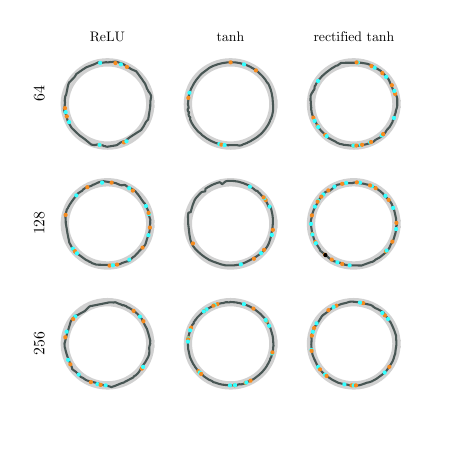}
       \caption{Representative identified invariant manifolds (projected onto the output space, in black) with the fixed points (cyan for stable, orange for saddle and black for unstable).
        The reference target ring is shown in grey.}\label{fig:im_rep}
\end{figure}

\newpage
\begin{figure}[tbhp]
     \centering
    \includegraphics[width=\textwidth]{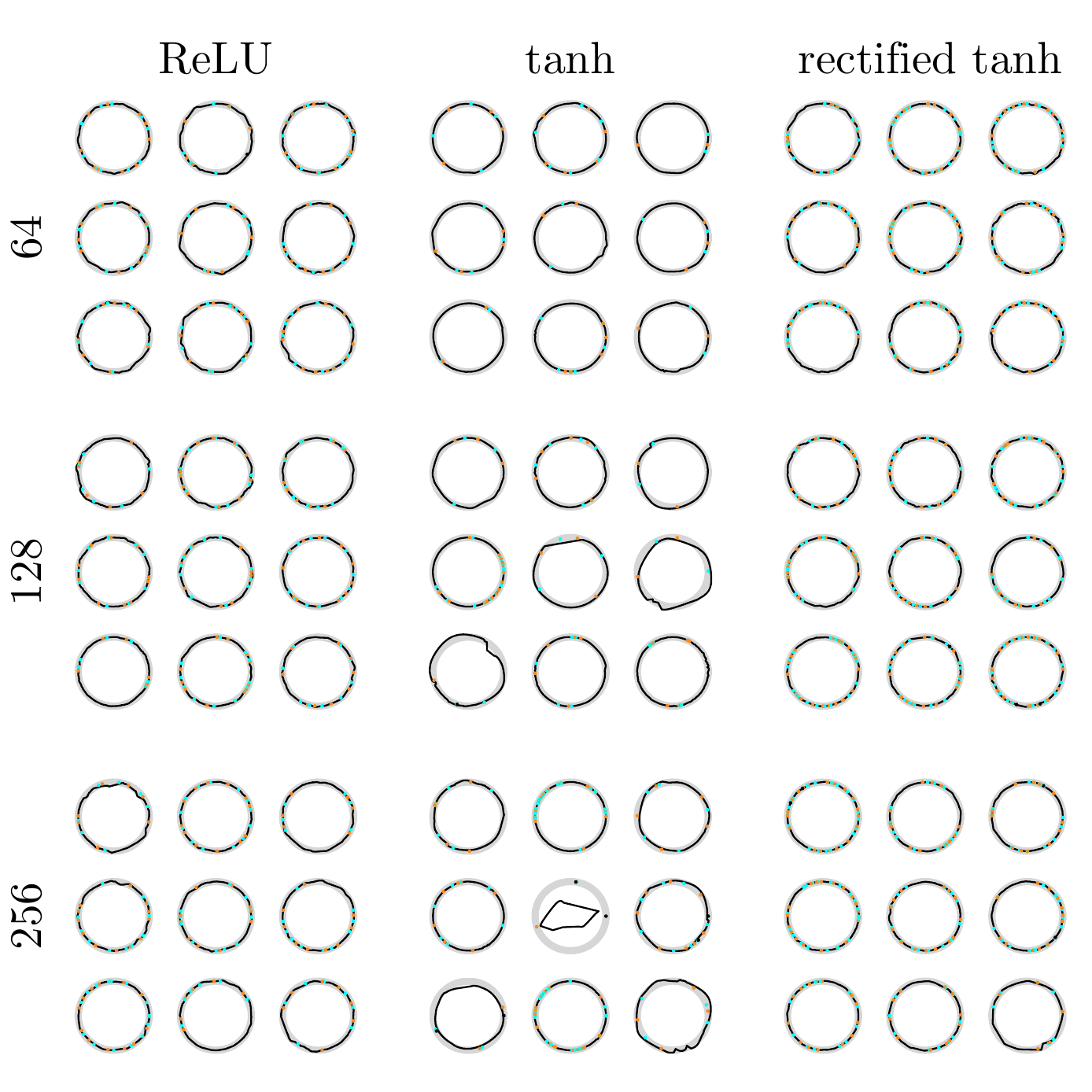}
       \caption{The identified invariant manifolds with the fixed points (cyan for stable, orange for saddle and black for unstable) for all inferred networks (except the ones in Fig~\ref{fig:im_rep}).}\label{fig:im_all}
\end{figure}

\newpage
\subsection{Double angular integration task}\label{sec:supp:davit}

The analysis of networks trained on the double angular velocity integration task indicates a torus shaped invariant slow manifold as predicted by our theory (Fig.\ref{fig:davit}A and B).
Furthermore, the angular memory error (measured as the sum of the two separate angular errors) of the trained networks show the same conformity to the theoretical bound as defined by the uniform norm of the vector field on the identified invariant manifold.
These results underline the universality of our findings beyond 1D tasks.

\begin{figure}[tbhp]
  \centering
  \includegraphics[width=\textwidth]{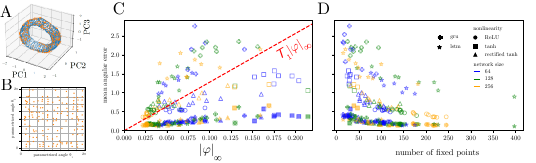}
  \caption{Networks trained on a double angular velocity integration task.
    \textbf{(A)} Initializations (blue) and fixed points (orange) of an example network.
    \textbf{(B)} Fixed points on a 2D parametrization of the torus for the example network.
    \textbf{(C)} The sum of the total mean angular error (sum of the two seperate angular errors over the two rings) is bounded by the uniform norm of the vector field.
    \textbf{(D)} Generalization for longer memory depends on the number of fixed points in the network.
}\label{fig:davit}
\end{figure}

Two dimensional attractors have been proposed to simultaneously represent heading direction and uncertainty of it \citep{kutschireiter2023bayesian,sale2024bayesian}.

\subsubsection{Methods}\label{sec:supp:fastslowmethod}
\paragraph{Fixed point search}
In our study, we implemented a  method to analyze the convergence and uniqueness of fixed points within our data.
 Specifically, we considered a convergence threshold of \(10^{-4}\), meaning that the iterative process was halted when the change in the solution between consecutive iterations fell below this value, indicating convergence. Additionally, to assess the uniqueness of the fixed points, we applied a broader threshold of  \(10^{-2}\), ensuring that any fixed points identified within this margin were considered distinct.

\paragraph{Double Mean Angular Error}
The Mean Angular Error (MAE) was computed as the sum of the individual angular errors observed in the data.
 This measure provides an aggregate view of the angular errors for the two separate subtasks.

\paragraph{Uniform Norm of the (Projected) Vector Field}
To evaluate the uniform norm of the (projected) vector field, we calculated the sum of the individual uniform norms of the vector field components.

\newpage
\subsection{Comparison to other methods}

\paragraph{Fixed point analysis}

\citep{sussillo2013blackbox} and \citep{golub2018fixedpointfinder} are primarily concerned with a pointwise definition of slowness, by comparison normal hyperbolicity requires a uniform separation of timescales over the entire invariant manifold.
In \citep{sussillo2013blackbox}, it was observed that structural perturbations  (random gaussian noise in the parameters with zero mean and standard deviation) still leads to the same approximate plane attractor dynamical structure is still in place, however no explanation is provided for these observations.
Our theory can explain why perturbations to the trained RNN.
The Persistence Manifold Theorem (Theorem~\ref{theorem:persistent}) guarantees that for small perturbations (of size \(\epsilon\)) the persistent invariant manifold will be at the approximate same place (it will be at a distance of order \(\mathcal{O}(\epsilon)\)).

\paragraph{Piecewise linear recurrent neural network}
\citep{schmidt2019identifying} identifies asymptotic behaviors in dynamical systems, fixed point dynamics and more general cases cycles and chaos.
We look beyond asymptotic behavior and characterize attractive invariant manifolds, thereby also identifying connecting orbits (or heteroclinic orbits) between fixed points.
Although we developed new analysis methods for dynamical systems to find slow manifolds in them, we do not propose a new general framework for analysis of all dynamical systems.
Finally, \citep{schmidt2019identifying} provides analysis tools for Piecewise-Linear Dynamical Systems, while our methods are generally applicable to RNNs with any activation function.

\makeatletter
\if@preprint
\makeatother
\else
\newpage
\section*{NeurIPS Paper Checklist}

\begin{enumerate}

\item {\bf Claims}
    \item[] Question: Do the main claims made in the abstract and introduction accurately reflect the paper's contributions and scope?
    \item[] Answer: \answerYes{} \item[] Justification: The theoretical and experimental results match and we provide various examples that indicate that they generalize to other settings.
    \item[] Guidelines:
    \begin{itemize}
        \item The answer NA means that the abstract and introduction do not include the claims made in the paper.
        \item The abstract and/or introduction should clearly state the claims made, including the contributions made in the paper and important assumptions and limitations. A No or NA answer to this question will not be perceived well by the reviewers.
        \item The claims made should match theoretical and experimental results, and reflect how much the results can be expected to generalize to other settings.
        \item It is fine to include aspirational goals as motivation as long as it is clear that these goals are not attained by the paper.
    \end{itemize}

\item {\bf Limitations}
    \item[] Question: Does the paper discuss the limitations of the work performed by the authors?
    \item[] Answer: \answerYes{} \item[] Justification: We report on the limitations of the analysis in the discussion section.
    \item[] Guidelines:
    \begin{itemize}
        \item The answer NA means that the paper has no limitation while the answer No means that the paper has limitations, but those are not discussed in the paper.
        \item The authors are encouraged to create a separate ``Limitations'' section in their paper.
        \item The paper should point out any strong assumptions and how robust the results are to violations of these assumptions (e.g., independence assumptions, noiseless settings, model well-specification, asymptotic approximations only holding locally). The authors should reflect on how these assumptions might be violated in practice and what the implications would be.
        \item The authors should reflect on the scope of the claims made, e.g., if the approach was only tested on a few datasets or with a few runs. In general, empirical results often depend on implicit assumptions, which should be articulated.
        \item The authors should reflect on the factors that influence the performance of the approach. For example, a facial recognition algorithm may perform poorly when image resolution is low or images are taken in low lighting. Or a speech-to-text system might not be used reliably to provide closed captions for online lectures because it fails to handle technical jargon.
        \item The authors should discuss the computational efficiency of the proposed algorithms and how they scale with dataset size.
        \item If applicable, the authors should discuss possible limitations of their approach to address problems of privacy and fairness.
        \item While the authors might fear that complete honesty about limitations might be used by reviewers as grounds for rejection, a worse outcome might be that reviewers discover limitations that aren't acknowledged in the paper. The authors should use their best judgment and recognize that individual actions in favor of transparency play an important role in developing norms that preserve the integrity of the community. Reviewers will be specifically instructed to not penalize honesty concerning limitations.
    \end{itemize}

\item {\bf Theory Assumptions and Proofs}
    \item[] Question: For each theoretical result, does the paper provide the full set of assumptions and a complete (and correct) proof?
    \item[] Answer: \answerYes{} \item[] Justification: We provide proofs in the supplemental and provide assumptions about the applicability of the theoretical results.
    \item[] Guidelines:
    \begin{itemize}
        \item The answer NA means that the paper does not include theoretical results.
        \item All the theorems, formulas, and proofs in the paper should be numbered and cross-referenced.
        \item All assumptions should be clearly stated or referenced in the statement of any theorems.
        \item The proofs can either appear in the main paper or the supplemental material, but if they appear in the supplemental material, the authors are encouraged to provide a short proof sketch to provide intuition.
        \item Inversely, any informal proof provided in the core of the paper should be complemented by formal proofs provided in appendix or supplemental material.
        \item Theorems and Lemmas that the proof relies upon should be properly referenced.
    \end{itemize}

    \item {\bf Experimental Result Reproducibility}
    \item[] Question: Does the paper fully disclose all the information needed to reproduce the main experimental results of the paper to the extent that it affects the main claims and/or conclusions of the paper (regardless of whether the code and data are provided or not)?
    \item[] Answer: \answerYes{} \item[] Justification: We report on all the analysis step decisions that might affect the results.
    \item[] Guidelines:
    \begin{itemize}
        \item The answer NA means that the paper does not include experiments.
        \item If the paper includes experiments, a No answer to this question will not be perceived well by the reviewers: Making the paper reproducible is important, regardless of whether the code and data are provided or not.
        \item If the contribution is a dataset and/or model, the authors should describe the steps taken to make their results reproducible or verifiable.
        \item Depending on the contribution, reproducibility can be accomplished in various ways. For example, if the contribution is a novel architecture, describing the architecture fully might suffice, or if the contribution is a specific model and empirical evaluation, it may be necessary to either make it possible for others to replicate the model with the same dataset, or provide access to the model. In general. releasing code and data is often one good way to accomplish this, but reproducibility can also be provided via detailed instructions for how to replicate the results, access to a hosted model (e.g., in the case of a large language model), releasing of a model checkpoint, or other means that are appropriate to the research performed.
        \item While NeurIPS does not require releasing code, the conference does require all submissions to provide some reasonable avenue for reproducibility, which may depend on the nature of the contribution. For example
        \begin{enumerate}
            \item If the contribution is primarily a new algorithm, the paper should make it clear how to reproduce that algorithm.
            \item If the contribution is primarily a new model architecture, the paper should describe the architecture clearly and fully.
            \item If the contribution is a new model (e.g., a large language model), then there should either be a way to access this model for reproducing the results or a way to reproduce the model (e.g., with an open-source dataset or instructions for how to construct the dataset).
            \item We recognize that reproducibility may be tricky in some cases, in which case authors are welcome to describe the particular way they provide for reproducibility. In the case of closed-source models, it may be that access to the model is limited in some way (e.g., to registered users), but it should be possible for other researchers to have some path to reproducing or verifying the results.
        \end{enumerate}
    \end{itemize}

\item {\bf Open access to data and code}
    \item[] Question: Does the paper provide open access to the data and code, with sufficient instructions to faithfully reproduce the main experimental results, as described in supplemental material?
    \item[] Answer: \answerYes{} \item[] Justification: The code is availaible at \url{https://github.com/catniplab/back_to_the_continuous_attractor}.
    \item[] Guidelines:
    \begin{itemize}
        \item The answer NA means that paper does not include experiments requiring code.
        \item Please see the NeurIPS code and data submission guidelines (\url{https://nips.cc/public/guides/CodeSubmissionPolicy}) for more details.
        \item While we encourage the release of code and data, we understand that this might not be possible, so No is an acceptable answer. Papers cannot be rejected simply for not including code, unless this is central to the contribution (e.g., for a new open-source benchmark).
        \item The instructions should contain the exact command and environment needed to run to reproduce the results. See the NeurIPS code and data submission guidelines (\url{https://nips.cc/public/guides/CodeSubmissionPolicy}) for more details.
        \item The authors should provide instructions on data access and preparation, including how to access the raw data, preprocessed data, intermediate data, and generated data, etc.
        \item The authors should provide scripts to reproduce all experimental results for the new proposed method and baselines. If only a subset of experiments are reproducible, they should state which ones are omitted from the script and why.
        \item At submission time, to preserve anonymity, the authors should release anonymized versions (if applicable).
        \item Providing as much information as possible in supplemental material (appended to the paper) is recommended, but including URLs to data and code is permitted.
    \end{itemize}

\item {\bf Experimental Setting/Details}
    \item[] Question: Does the paper specify all the training and test details (e.g., data splits, hyperparameters, how they were chosen, type of optimizer, etc.) necessary to understand the results?
    \item[] Answer: \answerYes{} \item[] Justification: These are reported in detail in the Supplemental.
    \item[] Guidelines:
    \begin{itemize}
        \item The answer NA means that the paper does not include experiments.
        \item The experimental setting should be presented in the core of the paper to a level of detail that is necessary to appreciate the results and make sense of them.
        \item The full details can be provided either with the code, in appendix, or as supplemental material.
    \end{itemize}

\item {\bf Experiment Statistical Significance}
    \item[] Question: Does the paper report error bars suitably and correctly defined or other appropriate information about the statistical significance of the experiments?
    \item[] Answer: \answerNA{} \item[] Justification: We do not consider error bars in the paper.
    \item[] Guidelines:
    \begin{itemize}
        \item The answer NA means that the paper does not include experiments.
        \item The authors should answer ``Yes'' if the results are accompanied by error bars, confidence intervals, or statistical significance tests, at least for the experiments that support the main claims of the paper.
        \item The factors of variability that the error bars are capturing should be clearly stated (for example, train/test split, initialization, random drawing of some parameter, or overall run with given experimental conditions).
        \item The method for calculating the error bars should be explained (closed form formula, call to a library function, bootstrap, etc.)
        \item The assumptions made should be given (e.g., Normally distributed errors).
        \item It should be clear whether the error bar is the standard deviation or the standard error of the mean.
        \item It is OK to report 1-sigma error bars, but one should state it. The authors should preferably report a 2-sigma error bar than state that they have a 96\% CI, if the hypothesis of Normality of errors is not verified.
        \item For asymmetric distributions, the authors should be careful not to show in tables or figures symmetric error bars that would yield results that are out of range (e.g. negative error rates).
        \item If error bars are reported in tables or plots, The authors should explain in the text how they were calculated and reference the corresponding figures or tables in the text.
    \end{itemize}

\item {\bf Experiments Compute Resources}
    \item[] Question: For each experiment, does the paper provide sufficient information on the computer resources (type of compute workers, memory, time of execution) needed to reproduce the experiments?
    \item[] Answer: \answerYes{} \item[] Justification: Yes, we have reported the computer resources in the Supplemental.
    \item[] Guidelines:
    \begin{itemize}
        \item The answer NA means that the paper does not include experiments.
        \item The paper should indicate the type of compute workers CPU or GPU, internal cluster, or cloud provider, including relevant memory and storage.
        \item The paper should provide the amount of compute required for each of the individual experimental runs as well as estimate the total compute.
        \item The paper should disclose whether the full research project required more compute than the experiments reported in the paper (e.g., preliminary or failed experiments that didn't make it into the paper).
    \end{itemize}

\item {\bf Code Of Ethics}
    \item[] Question: Does the research conducted in the paper conform, in every respect, with the NeurIPS Code of Ethics \url{https://neurips.cc/public/EthicsGuidelines}?
    \item[] Answer: \answerYes{} \item[] Justification: We have read the NeurIPS Code of Ethics and make sure to preserve anonymity.
    \item[] Guidelines:
    \begin{itemize}
        \item The answer NA means that the authors have not reviewed the NeurIPS Code of Ethics.
        \item If the authors answer No, they should explain the special circumstances that require a deviation from the Code of Ethics.
        \item The authors should make sure to preserve anonymity (e.g., if there is a special consideration due to laws or regulations in their jurisdiction).
    \end{itemize}

\item {\bf Broader Impacts}
    \item[] Question: Does the paper discuss both potential positive societal impacts and negative societal impacts of the work performed?
    \item[] Answer: \answerNA{} \item[] Justification: The paper is very theoretical and therefore, we do not expect any direct societal impact in the forseeable future.
    \item[] Guidelines:
    \begin{itemize}
        \item The answer NA means that there is no societal impact of the work performed.
        \item If the authors answer NA or No, they should explain why their work has no societal impact or why the paper does not address societal impact.
        \item Examples of negative societal impacts include potential malicious or unintended uses (e.g., disinformation, generating fake profiles, surveillance), fairness considerations (e.g., deployment of technologies that could make decisions that unfairly impact specific groups), privacy considerations, and security considerations.
        \item The conference expects that many papers will be foundational research and not tied to particular applications, let alone deployments. However, if there is a direct path to any negative applications, the authors should point it out. For example, it is legitimate to point out that an improvement in the quality of generative models could be used to generate deepfakes for disinformation. On the other hand, it is not needed to point out that a generic algorithm for optimizing neural networks could enable people to train models that generate Deepfakes faster.
        \item The authors should consider possible harms that could arise when the technology is being used as intended and functioning correctly, harms that could arise when the technology is being used as intended but gives incorrect results, and harms following from (intentional or unintentional) misuse of the technology.
        \item If there are negative societal impacts, the authors could also discuss possible mitigation strategies (e.g., gated release of models, providing defenses in addition to attacks, mechanisms for monitoring misuse, mechanisms to monitor how a system learns from feedback over time, improving the efficiency and accessibility of ML).
    \end{itemize}

\item {\bf Safeguards}
    \item[] Question: Does the paper describe safeguards that have been put in place for responsible release of data or models that have a high risk for misuse (e.g., pretrained language models, image generators, or scraped datasets)?
    \item[] Answer: \answerNA{} \item[] Justification: We do not rely on any pretrained language models, image generators, or scraped datasets.
    \item[] Guidelines:
    \begin{itemize}
        \item The answer NA means that the paper poses no such risks.
        \item Released models that have a high risk for misuse or dual-use should be released with necessary safeguards to allow for controlled use of the model, for example by requiring that users adhere to usage guidelines or restrictions to access the model or implementing safety filters.
        \item Datasets that have been scraped from the Internet could pose safety risks. The authors should describe how they avoided releasing unsafe images.
        \item We recognize that providing effective safeguards is challenging, and many papers do not require this, but we encourage authors to take this into account and make a best faith effort.
    \end{itemize}

\item {\bf Licenses for existing assets}
    \item[] Question: Are the creators or original owners of assets (e.g., code, data, models), used in the paper, properly credited and are the license and terms of use explicitly mentioned and properly respected?
    \item[] Answer: \answerYes{} \item[] Justification: Yes, we rely on PyTorch and we cite it.
    \item[] Guidelines:
    \begin{itemize}
        \item The answer NA means that the paper does not use existing assets.
        \item The authors should cite the original paper that produced the code package or dataset.
        \item The authors should state which version of the asset is used and, if possible, include a URL.
        \item The name of the license (e.g., CC-BY 4.0) should be included for each asset.
        \item For scraped data from a particular source (e.g., website), the copyright and terms of service of that source should be provided.
        \item If assets are released, the license, copyright information, and terms of use in the package should be provided. For popular datasets, \url{paperswithcode.com/datasets} has curated licenses for some datasets. Their licensing guide can help determine the license of a dataset.
        \item For existing datasets that are re-packaged, both the original license and the license of the derived asset (if it has changed) should be provided.
        \item If this information is not available online, the authors are encouraged to reach out to the asset's creators.
    \end{itemize}

\item {\bf New Assets}
    \item[] Question: Are new assets introduced in the paper well documented and is the documentation provided alongside the assets?
    \item[] Answer: \answerNA{} \item[] Justification: The paper does not release new assets.
    \item[] Guidelines:
    \begin{itemize}
        \item The answer NA means that the paper does not release new assets.
        \item Researchers should communicate the details of the dataset/code/model as part of their submissions via structured templates. This includes details about training, license, limitations, etc.
        \item The paper should discuss whether and how consent was obtained from people whose asset is used.
        \item At submission time, remember to anonymize your assets (if applicable). You can either create an anonymized URL or include an anonymized zip file.
    \end{itemize}

\item {\bf Crowdsourcing and Research with Human Subjects}
    \item[] Question: For crowdsourcing experiments and research with human subjects, does the paper include the full text of instructions given to participants and screenshots, if applicable, as well as details about compensation (if any)?
    \item[] Answer: \answerNA{} \item[] Justification: The paper does not use any data from human subjects.
    \item[] Guidelines:
    \begin{itemize}
        \item The answer NA means that the paper does not involve crowdsourcing nor research with human subjects.
        \item Including this information in the supplemental material is fine, but if the main contribution of the paper involves human subjects, then as much detail as possible should be included in the main paper.
        \item According to the NeurIPS Code of Ethics, workers involved in data collection, curation, or other labor should be paid at least the minimum wage in the country of the data collector.
    \end{itemize}

\item {\bf Institutional Review Board (IRB) Approvals or Equivalent for Research with Human Subjects}
    \item[] Question: Does the paper describe potential risks incurred by study participants, whether such risks were disclosed to the subjects, and whether Institutional Review Board (IRB) approvals (or an equivalent approval/review based on the requirements of your country or institution) were obtained?
    \item[] Answer:  \answerNA{} \item[] Justification: The paper does not use any data from human subjects.
    \item[] Guidelines:
    \begin{itemize}
        \item The answer NA means that the paper does not involve crowdsourcing nor research with human subjects.
        \item Depending on the country in which research is conducted, IRB approval (or equivalent) may be required for any human subjects research. If you obtained IRB approval, you should clearly state this in the paper.
        \item We recognize that the procedures for this may vary significantly between institutions and locations, and we expect authors to adhere to the NeurIPS Code of Ethics and the guidelines for their institution.
        \item For initial submissions, do not include any information that would break anonymity (if applicable), such as the institution conducting the review.
    \end{itemize}

\end{enumerate}
\fi

\end{document}